\documentclass[11pt, reqno]{amsart}

\usepackage[english]{babel}
\usepackage{amsthm} 
\usepackage{amssymb}
\usepackage{amsmath}
\usepackage{hyperref}
\usepackage{amsfonts}
\usepackage{mathtools}
\usepackage{cleveref}
% \mathtoolsset{showonlyrefs=false}
\usepackage{autonum}
\usepackage{stmaryrd}
\usepackage{fancyhdr}
\usepackage{caption}
\usepackage[ruled,vlined]{algorithm2e}

\usepackage{natbib}

\usepackage{graphicx,color}
\usepackage{epstopdf}
\epstopdfsetup{update}
\usepackage[left=0.9in, right=0.9in, top=1.1in, bottom=1.1in]{geometry}
\usepackage{mathrsfs}  
\usepackage{subcaption}
\usepackage{layouts}
\usepackage{soul}
\allowdisplaybreaks

\newtheorem{theorem}{Theorem}[section]
\newtheorem{proposition}[theorem]{Proposition}
\crefname{proposition}{proposition}{propositions}
\Crefname{Proposition}{Proposition}{Propositions}

\crefname{corollary}{corollary}{corollaries}
\Crefname{Corollary}{Corollary}{Corollaries}

\newtheorem{example}[theorem]{Example}
\crefname{example}{example}{examples}
\Crefname{Example}{Example}{Examples}

\crefname{figure}{figure}{figures}
\Crefname{Figure}{Figure}{Figures}
\newtheorem{lemma}[theorem]{Lemma}

\theoremstyle{definition}

\newtheorem{definition}[theorem]{Definition}

\newtheorem{assumption}[theorem]{Assumption}

\theoremstyle{remark}
\newtheorem{remark}[theorem]{Remark}

\newcommand{\N}{\mathbb{N}}

\newcommand{\cL}{\mathcal{L}}

\newcommand{\cLtilde}{\widetilde{\cL}}

\newcommand{\gammatilde}{\widetilde{\gamma}}

\newcommand{\mutilde}{\tilde\mu}

\newcommand{\Ttilde}{\tilde T}
\newcommand{\lambdatilde}{\tilde\lambda}
\newcommand{\pot}{\mathrm{U}}
\newcommand{\pottilde}{\tilde\pot}
\newcommand{\Qtilde}{\widetilde Q}
\newcommand{\vftilde}{\tilde\vf}
\newcommand{\varphitilde}{\tilde\varphi}

\newcommand{\Rtilde}{\tilde R}
\newcommand{\btilde}{\tilde b}
\newcommand{\sigmatilde}{\tilde\sigma}
\newcommand{\rmB}{\mathrm{B}}

\newcommand{\rmBP}{\mathrm{BP}}
\newcommand{\rmH}{\mathrm{H}}
\newcommand{\rmJ}{\mathrm{J}}
\newcommand{\rmE}{\mathrm{E}}
\newcommand{\rmM}{\mathrm{M}}

\newcommand{\rmV}{\mathrm{V}}
\newcommand{\rmW}{\mathrm{W}}

\newcommand{\rmF}{\mathrm{F}}

\newcommand{\speed}{s}

\newcommand{\cD}{\mathcal{D}}

\newcommand{\Ybar}{\overline{Y}}

\newcommand{\R}{\mathbb{R}}
\newcommand{\vf}{\Phi}

\newcommand{\Exp}{\mathrm{Exp}}
\newcommand{\PP}{\mathbb{P}}
\newcommand{\PE}{\mathbb{E}}

\newcommand{\sign}{\textnormal{sign}}

\newcommand{\Unif}{\mathrm{Unif}}
\usepackage[dvipsnames]{xcolor}

\usepackage{booktabs} % for much better looking tables
\usepackage{array} % for better arrays (eg matrices) in maths
\usepackage{paralist} % very flexible & customisable lists (eg. enumerate/itemize, etc.)
\usepackage{verbatim} % adds environment for commenting out blocks of text & for 
\usepackage{fancyvrb}
\usepackage{float}
\usepackage{caption}
\usepackage{subcaption}
\usepackage{bbm} %For blackboard bold 1

\usepackage{hyperref}
\usepackage{url}
\usepackage[toc,page]{appendix}
\usepackage{color}
\usepackage{nicefrac}
\usepackage[shortlabels]{enumitem} % For referencing items

\newcommand{\dd}{\textnormal{d}}
\newcommand{\1}{\mathbbm{1}}

%% press Cmd + Option + B to compile

\begin{document}
\title{Sampling with time-changed Markov processes}
\author[Andrea Bertazzi and Giorgos Vasdekis]{}
% \author{Andrea Bertazzi\textsuperscript{1}}
% \thanks{\textsuperscript{1} CMAP, Ecole polytechnique, Paris, France. Email: \texttt{andrea.bertazzi@polytechnique.edu}}

% \author{Giorgos Vasdekis\textsuperscript{2}}
% \thanks{\textsuperscript{2} School of Mathematics, Statistics and Physics, Newcastle University, Newcastle upon Tyne, U.K. Email: \texttt{giorgos.vasdekis@newcastle.ac.uk}}

\date{}

\maketitle
\vspace{-10pt}
\begin{center}
    \textsc{Andrea Bertazzi} \\
    CMAP, Ecole polytechnique,  France\\
    \texttt{andrea.bertazzi@polytechnique.edu} \\
    \bigskip
    \textsc{Giorgos Vasdekis} \\
	School of Mathematics, Statistics and Physics, Newcastle University, U.K. \\
    \texttt{giorgos.vasdekis@newcastle.ac.uk}
\end{center}

\begin{abstract}
We study time-changed Markov processes to speed up the convergence of Markov chain Monte Carlo (MCMC) algorithms. The time-changed process is defined by adjusting the speed of time of a base process via a user-chosen, state-dependent function. We explore the properties of such transformations and apply this idea to several Markov processes from the MCMC literature, such as Langevin diffusions and piecewise deterministic Markov processes, obtaining novel modifications of classical algorithms and also re-discovering known MCMC algorithms. We prove theoretical properties of the time-changed process under suitable conditions on the base process, focusing on connecting the stationary distributions and qualitative convergence properties such as geometric and uniform ergodicity, as well as a functional central limit theorem. We also provide a comparison with the framework of space transformations, clarifying the similarities between the approaches. Throughout the paper we give various visualisations and numerical simulations on simple tasks to gain intuition on the method and its performance.
Finally, we provide numerical simulations to gain intuition on the method and its performance on benchmark problems. Our results indicate a performance improvement in the context of multimodal distributions and rare event simulation.
\end{abstract}

\maketitle

\section{Introduction}

The need to generate samples from complex probability distributions is ubiquitous in Bayesian statistics and beyond (e.g. \cite{brooks2011handbook}, \cite{Stuart}). The standard technique for achieving this is Markov chain Monte Carlo (MCMC), in which a Markovian process is constructed such that its distribution converges to the target distribution as the process evolves. The outcomes of the process are then used as samples from the distribution of interest. While the original MCMC algorithms used discrete-time processes, continuous-time processes have since gained popularity. These include diffusion processes \citep{MALA}, Markov jump processes \citep{green:95,power2019acceleratedsamplingdiscretespaces}, and piecewise deterministic Markov processes (PDMPs) \citep{fearnhead2018}. 
%\andrea{I would delete the following two sentences: PDMPs are not that easy to implement, and a notion of momentum can be introduced also e.g. in diffusions.} The latter processes have the advantage of typically moving in straight lines, making them easier to implement. Additionally, they possess a notion of momentum, which may allow them to explore the state space more quickly and generate samples more efficiently. 
However, many of these processes struggle with targets commonly encountered in practical settings, such as multi-modal or heavy tailed distributions. For example, in the case of multi-modal targets, the processes tend to be attracted to the mode they find themselves in, to the extent that they struggle to hop to another mode. In the case of heavy-tailed distributions, they fail to explore the tails of the distribution in a systematic and consistent way.

%In this work we shall consider algorithms that rely on a continuous-time Markov processes $X$ that satisfies a law of large numbers (LLN), i.e.
%\begin{equation}\label{eq:LLN_intro}
	%\frac{1}{T} \int_0^T f(X_t) \dd t\,\, \xrightarrow[a.s.]{T \rightarrow \infty} \int_{\rmE} f(x) \mu(\dd x)
%\end{equation}
%for any observable $f$ in a suitable class (e.g. $\mu$-integrable functions).

%improving the performance of MCMC algorithms in challenging settings that arise in practice, such as when $\mu$ is multimodal or when $\mu$ is heavy-tailed and we are interested in rare event estimation. In particular, multimodal distributions pose a serious challenge for MCMC algorithms because the process is attracted by the mode it finds itself in, to the extent that it struggles to hop to another mode.  This phenomenon can be understood applying \eqref{eq:LLN_intro} to the observable $f(x)=\1_{ A }(x)$, where $ A $ is a measurable set that the process must traverse in order to reach a different mode. The LLN gives that the fraction of time $X$ spends in $ A $ converges to $\mu( A ):=\int_{ A }\1_{ A }(x)\mu(\dd x)$, and hence it is clear that $X$ can only visit $ A $ rarely when $\mu( A )$ is small.  As a consequence, a large computational effort is required before the process leaves the neighbourhood of the mode it is exploring, since this means going through (and thus spending time) in a region of low density.

In this paper, we introduce and discuss the framework of sampling using {\it time-changed Markov processes} to obtain more efficient continuous-time MCMC algorithms for challenging tasks, such as those described above. The fundamental idea is to apply a {\it random time-change} to a {\it base} Markov process in such a way that time accelerates in regions where we want the process to visit more frequently. At the same time, this ensures that the process does not spend too much time in these regions, which is crucial for the process to satisfy a law of large numbers. The interpretation of these regions depends on the specific task at hand. For instance, they may correspond to low-density areas that separate modes in a multimodal target setting or to the tails of the target distribution in the case of rare events estimation. In practice, the time-change is regulated by a \emph{speed function}, $s$, which maps the state space to strictly positive real numbers and accelerates or decelerates time based on the current state of the process.
%%\begin{equation}\label{eq:time_transformation_intro}
%%\end{equation}

%Importantly, we establish a connection between the stationary measures of the time-changed process $\mu$ and the base process $\tilde{\mu}$. This {\bf [WRITE THIS BETTER]} further reveals a connection between time-changed processes and importance sampling. In importance sampling, samples from $\tilde{\mu}$ are re-weighted by $\frac{\mu}{\tilde{\mu}}$ to approximate expectations with respect to $\mu$. Similarly, the paths of the time-changed process $X$ are obtained by re-weighting paths from the $\tilde{\mu}$-stationary base process. Here, the time change assigns importance weights, where large weights are assigned to states where the speed function is small, that is where the time-changed process moves slowly. The speed function, defined as $\frac{\tilde{\mu}}{\mu}$ up to a constant factor, can then be interpreted as the reciprocal of the importance weight. Intuitively, a well-chosen speed function defines a distribution $\tilde{\mu}$ that makes the task of interest easier to solve, for example connecting the modes of $\mu$ when it is multimodal. As in importance sampling, it is essential to choose speed functions bounded below by a strictly positive constant, ensuring that $\tilde{\mu}$ has heavier tails than $\mu$. 

Throughout the paper, we assume that the base process, denoted by $Y$, is a well-studied Markov process. We provide conditions on $Y$ and $s$ that ensure desirable properties of the time-changed process $X$, such as the law of large numbers (LLN), qualitative convergence rates to the stationary distribution, and functional central limit theorems. In particular, suitable choices of $s$ ensure that the convergence of $X$ is exponential in time, even if $Y$ converges only polynomially or more slowly. Moreover, we provide conditions for the uniform ergodicity of $X$, which refers to the exponential convergence of its law to the target, independently of the initial condition. Throughout the paper, we highlight the numerous connections to the existing literature and apply our framework to several Markov processes that are widely used in the context of MCMC algorithms.

\subsubsection*{Related work}
The idea of sampling from a measure $\mu$ with a time-change of a Markov process was first discussed by \citet{roberts_stramer_2002}. The authors consider a time-changed overdamped Langevin diffusion corresponding to speed functions of the particular form $s(x)=\mu^{-\alpha}(x)$ for $\alpha\in(0,1)$, where $\mu$ is the target distribution. The focus of \citet{roberts_stramer_2002} is then limited to a specific choice of process and speed function, whereas in this paper we define and study the framework in generality.
The time-change of the Zig-Zag process (ZZP), a particular PDMP, was already considered by \citet{Vasdekis_speedup}, but the process was introduced with ad-hoc arguments and without indicating this connection.
The approach of time-change is also deeply connected to importance sampling \citep{robert_monte_2004,mcbook}. Indeed, computing ergodic averages with respect to a time-changed process is equivalent to obtaining re-weighted ergodic averages of the base process for a random time horizon (see Equation \eqref{eq:lln_equivalence} below). 
In this direction, \cite{chak2023optimal} studies the optimal choice of ``proposal distribution" $\mutilde$ to use as target for the overdamped Langevin diffusion, then adjusting the obtained sample by importance weights. We shall revisit this connection in \Cref{ex:overdamped}. It is worth mentioning that the speed function $s(x)=\mu^{-\alpha}(x)$ with $\alpha = 1$ was shown to be optimal across all diffusion coefficients for the overdamped Langevin diffusion in a specific (multimodal) setting when the state space is included in a compact set \citep{lelièvre2024optimizingdiffusioncoefficientoverdamped}. Time-changed Langevin diffusions were also considered by \citet{Leimkuhler2024}, who established some first theoretical properties of such processes and proposed suitable discretisation schemes for their practical simulation.
In the discrete-time world, \cite{andral2023importance} use as base process a Markov chain with stationary distribution different than the target and adjust it by letting it spend a random number of iterations at each state. This is closely connected to some of the ideas we discuss in \Cref{sec:estimate_expectations}. Similar ideas were introduced by \cite{MALEFAKI20081210} in a semi-Markov implementation.
The frameworks of time changes and space transformations are similar when the main goal is to increase the frequency of visits to the tails, or equivalently to obtain a uniformly ergodic process. This aspect is discussed in \Cref{sec:spacetransformations}, where we particularly focus on the connection to \cite{Johnson_Geyer} and \cite{yang2024stereographicmarkovchainmonte}.
Finally, we note that a first version of this work appeared in the PhD thesis of the first author \citep{bertazzi2023approximations}.

\subsubsection*{Contributions and organisation of the paper}
% The main contributions of this paper are the following;
\begin{itemize}[leftmargin=5mm]
	\item \Cref{sec:general_framework} lays the mathematical foundations of our framework. The main result of this section, \Cref{thm:lln_timechange}, shows that $X$ has $\mu$ as stationary distribution and satisfies a LLN whenever the base process, $Y$, satisfies a LLN with respect to the probability distribution $\mutilde(\dd y) \propto s(y)\mu(\dd y)$. In \Cref{sec:estimate_expectations} we discuss three different approaches to sample and estimate expectations with respect to $\mu$ using a time-changed Markov process. While two of these approaches require simulating a continuous time process, the third approach takes advantage of a jump process that is based on discretisations of $Y$ and is straightforward to implement. 
	\item \Cref{sec:examples} is devoted to the application of our framework to several Markov processes that are widely used in the MCMC literature, such as piecewise deterministic Markov processes (PDMPs) and diffusion processes. We also introduce our running example: the Zig-Zag process (ZZP) \citep{ZZ}. We show that applying a  time-change to the ZZP gives the speed-up ZZP, a variation of the ZZP recently introduced and analysed by \cite{Vasdekis_speedup}. 
	\item \Cref{sec:theory} contains our theoretical results on the convergence properties of time-changed Markov processes under suitable conditions on the base process and on the speed function. In particular, \Cref{thm:ergodicity_markovprocess} obtains cases when $X$ is geometrically or uniformly ergodic under weak assumptions on $Y$. \Cref{thm:uniform_ergodicity_pdmps} gives simpler conditions for uniform ergodicity in the specific setting of PDMPs.
    Moreover, \Cref{thm:FCLT} gives conditions that ensure $X$ satisfies a functional central limit theorem, expressing its asymptotic variance in terms of the asymptotic variance of the base process. We apply our results to our running example in \Cref{thm:ergodicity_ZZP} and \Cref{thm:FCLT_ZZP}. Finally, in \Cref{sec:conv.mjp} we specialise the developed theory to the case where $Y$ is a particular jump process, giving conditions on its jump kernel that imply convergence to the target distribution.
	\item \Cref{sec:spacetransformations} discusses the relation between time changes and space transformations, illustrating the similarities and differences of the two methods. In particular, we shall observe that the speed function in our framework plays a similar role as the Jacobian determinant of the diffeomorphism that defines the space transformations. We analyse this connection considering a novel diffeomorphism, which maps $\R^d$ to points inside the $d$-dimensional unit sphere.
	\item \Cref{sec:numerical_simulations} contains numerical simulations on toy distributions that show the expected benefits of the framework.
    \item \Cref{{sec:discussion}} contains a brief discussion, summarizing the results of this paper and suggesting potential directions for future work.
    \item The appendices contain the proofs of our theoretical results, along with additional numerical simulations.
\end{itemize}

\section{Framework}\label{sec:general_framework}

In this section we lay the foundations for using time-changed Markov processes to sample from a target probability distribution $\mu$. \Cref{sec:TT_setup} introduces a process $X$ as a time-change of a base process $Y$. The main result of this section, \Cref{thm:lln_timechange}, gives weak conditions on $Y$ that ensure $X$ has $\mu$ as stationary distribution, and moreover that it satisfies a strong law of large numbers. \Cref{sec:estimate_expectations} discusses different strategies to take advantage of time-changed Markov processes to estimate expectations of observables with respect to $\mu$. 

\subsection{Time-changes of Markov processes}\label{sec:TT_setup}

We propose to sample from a probability measure $\mu$ on $\rmE$ using a Markov process $X$ defined as a time change of a base process $Y$, in the sense that 
\begin{equation}\label{eq:time_change}
	X_t =Y_{r(t)}, \text{ where } \,\,	r(t) := \int_0^t s(X_u) \dd u.
\end{equation}
Throughout the document, $Y$ will be a càdlàg, time-homogeneous Markov process with state space $\rmE$ and should be interpreted as a well-studied process such as a Langevin diffusion \citep{MALA}, or a piecewise deterministic Markov process (PDMP) \citep{Davis1984} such as the Zig-Zag process (ZZP) \citep{ZZ}. 
The function $s$, called the \emph{speed function}, regulates the time change applied to $Y$. We require that $s$ satisfies the following condition.
\begin{assumption}\label{ass:s.integrability}
    The speed function $s: \rmE \rightarrow \mathbb{R}_{+}$ is a continuous function, it satisfies
    \begin{equation}\label{s.integrability.assumption:00}
        \mu(s):= \int_{\rmE} s(y)\mu(\dd y) < \infty,
    \end{equation}
    and there exists a constant $\underline{s} >0$ such that $s(x) \geq \underline{s}$ for all $x \in \rmE$.
\end{assumption}
\noindent We observe that since $s$ is lower bounded by a constant, a change of variables gives $t =\int_0^{r(t)} \frac{1}{s(Y_u)} \dd u,$ 
and thus we find the inverse relation
\begin{equation}\label{eq:time_transformed_process_inverse}
	Y_t =X_{ r^{-1}(t)}\,\,\, \text{ for } \,\,\, r^{-1}(t) := \int_0^{t} \frac{1}{s(Y_u)} \dd u.
\end{equation}

Intuitively, the time-changed process $X$ follows the same paths of $Y$, with the modification that it speeds up when $s$ is large and it slows down when $s$ is small. As a consequence, the two processes cannot have the same invariant measure. In particular, for $X$ to target $\mu$, we now show that $Y$ should have stationary distribution 
\begin{equation}\label{mu.tilde.def:1}
    \mutilde(\dd x) := \frac{1}{\mu(s)} s(x) \mu(\dd x),
\end{equation}
that is a well-defined probability measure under \Cref{ass:s.integrability}. 
The key assumption to ensure $X$ is $\mu$-stationary is that the base process $Y$ satisfies a law of large numbers (LLN) with respect to $\mutilde$.
\begin{assumption}[LLN for the base process]\label{ass:lln_base}
 For any $f \in L^1(\mutilde)$, and any initial condition $x \in \rmE$,
 \begin{equation}\label{lln.base:1}
     \frac{1}{T}\int_0^T f(Y_u) \dd u \,\, \xrightarrow[a.s.]{T \rightarrow \infty} \,\, \int_{\rmE} f(y) \mutilde(\dd y).
 \end{equation}
\end{assumption}
The assumptions we have introduced above are sufficient to obtain that $X$ has stationary distribution $\mu$ and also that it satisfies a LLN.
\begin{theorem}[Invariance and LLN for the time-changed process]\label{thm:lln_timechange}
    Suppose \Cref{ass:s.integrability} and \Cref{ass:lln_base} hold. Then, the process $X$ has $\mu$ as unique stationary distribution. Furthermore, for any $f \in L^1(\mu)$, and all initial conditions $x \in \rmE$, we have  
\begin{equation}\label{lln.transform:1}
     \frac{1}{T}\int_0^T f(X_t) \dd t \,\, \xrightarrow[a.s.]{T \rightarrow \infty} \,\, \int_{\rmE} f(y)  \mu(\dd y).
 \end{equation}
 \end{theorem}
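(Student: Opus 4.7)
The plan is to prove the LLN first by reducing it to the LLN for $Y$ via a pathwise change of variables, and then to derive invariance and uniqueness of $\mu$ as consequences. The key observation is that $T^{-1}\int_0^T f(X_t)\,\dd t$ can be rewritten as a ratio of ergodic averages of $Y$ over the stretched interval $[0,r(T)]$.

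Concretely, since $s$ is continuous and bounded below by $\underline{s}>0$, the map $t\mapsto r(t)$ is a $C^1$-bijection of $[0,\infty)$ onto itself with $r(T)\ge\underline{s}\,T\to\infty$. The first step is the substitution $u=r(t)$, so that $\dd u=s(X_t)\,\dd t=s(Y_u)\,\dd t$; this yields
\begin{equation*}
\int_0^T f(X_t)\,\dd t=\int_0^{r(T)}\frac{f(Y_u)}{s(Y_u)}\,\dd u,\qquad T=\int_0^{r(T)}\frac{1}{s(Y_u)}\,\dd u,
\end{equation*}
the second identity being precisely \eqref{eq:time_transformed_process_inverse}. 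Both integrands $f/s$ and $1/s$ lie in $L^1(\mutilde)$, since $\mutilde=s\mu/\mu(s)$ gives $\mutilde(|f|/s)=\mu(|f|)/\mu(s)<\infty$ and $\mutilde(1/s)=1/\mu(s)$. Dividing the two displays and applying \Cref{ass:lln_base} to each observable on a common almost-sure event, the right-hand side converges to $\mutilde(f/s)/\mutilde(1/s)=(\mu(f)/\mu(s))/(1/\mu(s))=\mu(f)$, which proves \eqref{lln.transform:1}.

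For invariance and uniqueness of $\mu$, I would argue separately. Uniqueness can be extracted directly from the LLN: if $\pi$ is any invariant probability measure of $X$ and $f$ is bounded, stationarity gives $\pi(f)=T^{-1}\int_0^T \int P_t^X f\,\dd\pi\,\dd t$, where $P_t^X$ denotes the transition semigroup of $X$; combining the pointwise LLN just obtained with dominated convergence forces $\pi(f)=\mu(f)$, so $\pi=\mu$. Invariance of $\mu$ itself I would obtain from the standard time-change identity for infinitesimal generators, $\cL^X=s\,\cL^Y$, together with the fact that \Cref{ass:lln_base} implies $\mutilde$ is invariant for $Y$ (by a standard Markov-semigroup argument): on a suitable core, $\int \cL^X f\,\dd\mu=\mu(s)\int \cL^Y f\,\dd\mutilde=0$.

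The main obstacle is the pathwise change of variables: continuity of $s$ and the bound $\underline{s}>0$ are needed to ensure $r$ is strictly increasing and blows up on every trajectory of $X$, and joint almost-sure convergence for the two observables $f/s$ and $1/s$ requires intersecting the corresponding null sets from \Cref{ass:lln_base}. A secondary subtlety is that invariance of $\mu$ (as opposed to mere uniqueness) genuinely requires the generator identity, since the LLN alone only forces any pre-existing invariant probability to coincide with $\mu$.
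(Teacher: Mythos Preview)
Your LLN argument is essentially identical to the paper's: both perform the substitution $u=r(t)$ and reduce to the ergodic theorem for $Y$ applied to the observables $f/s$ and $1/s$ (the paper phrases it as the product $\tfrac{r(T)}{T}\cdot\tfrac{1}{r(T)}\int_0^{r(T)}\tfrac{f(Y_u)}{s(Y_u)}\,\dd u$ rather than a ratio, but the content is the same).

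The difference lies in how invariance of $\mu$ is obtained. You claim this ``genuinely requires the generator identity'' $\cL^X=s\cL^Y$ on a core, but the paper avoids this entirely: it observes that the pointwise LLN with $f=\1_A$ for any $A$ with $\mu(A)>0$ gives $\int_0^\infty \1_{X_t\in A}\,\dd t=\infty$ almost surely, so $X$ is Harris recurrent. Harris recurrence then yields existence and uniqueness of an invariant measure in one stroke (via the classical results of Az\'ema--Duflo--Revuz / Meyn--Tweedie), and the LLN identifies it as $\mu$. This is both shorter and more robust than your route: the identity $\cL^X=s\cL^Y$ is delicate in general (the paper later devotes \Cref{defn:nice.set} and Proposition~B.1 to circumscribing where it holds), and establishing that the set on which it holds is a \emph{core} for $\cL^X$ is not addressed anywhere in the paper and would be nontrivial for the generality claimed. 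Your uniqueness argument via dominated convergence is fine, but the Harris-recurrence route subsumes it.
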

\begin{proof}
    The proof can be found in \Cref{sec:proof_LLN}.
\end{proof}
The weakness of the required assumptions guarantees that the results of \Cref{thm:lln_timechange} holds essentially in all cases of practical interest. Therefore, the time-changed process $X$ can generally be used to sample from a given target distribution and to estimate its expectations.

\subsection{Estimating expectations with time-changed Markov processes}\label{sec:estimate_expectations}
In this section we describe three asymptotically exact approaches to estimate $\mu(f):= \int_{\rmE} f(y) \mu(\dd y)$. These are based respectively on (1) the direct simulation of $X$, (2) a re-weighting of a path of $Y$, (3) a jump process $X$ with biased jump kernel $\Qtilde$ and rate $s$. 

\begin{enumerate}[wide, labelwidth=!, labelindent=5pt, label=(\arabic*)]
    \item When it is possible directly simulate the time-changed process $X$, one can rely on \Cref{thm:lln_timechange} and use  the ergodic average
    \begin{equation}
        \frac{1}{T} \int_0^T f(X_t) \dd t,
    \end{equation}
    or alternatively any discrete ergodic average obtained with a uniform step size. This is the approach used e.g. by \citet{Vasdekis_speedup,roberts_stramer_2002}. 
    \item In cases where simulating the process $X$ is challenging, we can estimate $\mu(f)$ by re-weighting paths of the base process $Y$. Indeed, following the proof of \Cref{thm:lln_timechange} (see \Cref{sec:proof_LLN}),
    the change of variables $ u = r(t)$, with Jacobian $s(Y_u)$, gives
    \begin{equation}\label{eq:lln_equivalence}
    	\frac{1}{T}\int_0^T f(X_t) \dd t \,=\, \frac{1}{T}\int_0^{r(T)} \frac{f(Y_u)}{s(Y_u)} \dd u.
    \end{equation}
    \Cref{thm:lln_timechange} guarantees that the right hand side of \eqref{eq:lln_equivalence} is a consistent estimator of $\mu(f)$.
The estimator \eqref{eq:lln_equivalence} requires the simulation of $Y$ for a random time $r(T)$, which is not always streightforward to calculate. Alternatively, we may fix a deterministic time horizon $T$ and use the estimator
\begin{equation}\label{eq:erg_avg_ctstime_Y_compbudg}
	% \text{simulate } (Y_t)_{0\leq t\leq T}  \text{, then compute } \,\,
	\frac1{r^{-1}(T)} \int_0^{T} \frac{f(Y_t)}{s(Y_t)} \dd t \,\,=\,\, \frac{\int_0^{T} \frac{f(Y_t)}{s(Y_t)} \dd t }{\int_0^{T} \frac{1}{s(Y_t)} \dd t }\,\,,
	%	\frac1{r^{-1}(T)} \int_0^{ T} \frac{f(Y_t)}{s(Y_t)} \dd t = \frac1{\int_0^{ T} \frac{1}{s(Y_t)} \dd t } \int_0^{ T} \frac{f(Y_t)}{s(Y_t)} \dd t 
	%	 \,\, \stackrel{d}{=} \,\, \frac1{r^{-1}(T)}  \int_0^{r^{-1}(T)} f(X_t) \dd t.
\end{equation}
which is equivalent to simulating $X$ until time $r^{-1}(T)$. For the right hand side of \eqref{eq:erg_avg_ctstime_Y_compbudg} we used the expression given by \eqref{eq:time_transformed_process_inverse}for the quantity $r^{-1}(T)$. 
%The estimator \eqref{eq:erg_avg_ctstime_Y_compbudg} converges to $\mu(f)$ since $ T^{-1} r^{-1}(T) \to (\mu(s))^{-1}.
%We remark that for %both \eqref{eq:lln_equivalence} and estimator \eqref{eq:erg_avg_ctstime_Y_compbudg} it is necessary to compute the inverse time change, $r^{-1}(T)$.

The estimator \eqref{eq:erg_avg_ctstime_Y_compbudg} essentially coincides with doing self-normalised importance sampling for the base process $Y$, where the unnormalised weights are $(s(Y_t))^{-1}$. The estimator \eqref{eq:erg_avg_ctstime_Y_compbudg} is considered in \citet{chak2023optimal} for a particular choice of the base process, $Y$.
Similarly to importance sampling, this approach allows the estimation of the normalising constant of $\mu$. Indeed, let $\mu(x) = Z^{-1} \exp(-\pot(x))$, where $Z$ is unknown, but where we have access to the function $\pot$. As will be seen in the proof of \Cref{thm:lln_timechange} (see \Cref{sec:proof_LLN}), we can estimate $Z$ using  
\begin{equation}
	\frac{r^{-1}(T)}{T\int_{\rmE} s(x) \exp(-\pot(x))\dd x} \,\, \xrightarrow[a.s.]{T \rightarrow \infty} \,\,  Z.
\end{equation}
This can be used when one has access to the quantity $\int_{\rmE} s(x) \exp(-\pot(x))\dd x$.

\item We now provide a third option in the case where directly simulating $X$ or the random time-change $r^{-1}(T)$ is not feasible. An alternative strategy is to use as base process $Y$ a jump process with jump rate $\lambdatilde(y) = 1$ for all $y\in\rmE$, and jump kernel $\Qtilde$, that has $\mutilde$ as stationary distribution. 
The kernel $\Qtilde$ can be any kernel from the literature on discrete time MCMC algorithms, or the transition kernel of a continuous time Markov process after some fixed time $\bar t$. In either case, the time changed process $X_t = Y_{r(t)}$ is a jump process with rate $\lambda(x) = s(x)$ and kernel $Q = \Qtilde$ (see \Cref{thm:timechange_pdmp} below), with stationary distribution $\mu$. 
Fixing a time horizon $T$, and a starting position $x$, we can directly simulate the time-changed process $X$ using the procedure described in Algorithm \ref{Markov.Jump.Process.algorithm:1}.

\begin{algorithm}\caption{Time-changed Markov jump process implementation}\label{Markov.Jump.Process.algorithm:1}
\KwIn{Time horizon $T$, starting position $x$, speed function $s$, target $\mu$, a Markov transition kernel $\Qtilde$ that leaves $\mutilde$ as in \cref{mu.tilde.def:1} invariant}  

\KwOut{Jump process $X$ targeting $\mu$.} 

Set $n \gets 0$, $T_0 = 0$, $X_0 = x$\;
\While{$T_n < T$}{
Simulate $\tau_{n+1} \sim \Exp(s(X_{T_n}))$. Set the next jumping time to be $T_{n+1} = T_n+\tau_{n+1}$\;
For all $t \in \left[T_n, \min\{T,T_{n+1}\}\right)$ set $X_t = X_{T_{n}}$\;
Simulate the next jump of the process $X_{T_{n+1}}\sim \Qtilde(X_{T_n},\cdot)$\;
}
\Return{$\left(X_t\right)_{t \in [0,T]}$.}
\end{algorithm}

% and obtain a skeleton chain $(X_{T_n},T_n)_{n=0}^{N_T+1}$, where $T_0=0$ and $X_0=x$, $T_n$ is the $n$-th jumping event time, $N_T$ is the time of the last event before time $T$, while $T_{N_T + 1} = T$ and $X_t = X_{N_T}$ for $t\in[T_{N_T},T_{N_{T}+1}]$.
%The construction of $X$ between any two jump times, say $T_n$ and $T_{n+1}$, is as follows.
%Conditionally on $(X_{T_n},T_n)$, we first find the next jump time using that $T_{n+1} \sim T_n+\Exp(s(X_{T_n}))$. Then, we set $X_t =X_{T_n}$ for $t\in(T_n,T_{n+1})$, and draw $X_{T_{n+1}}\sim \Qtilde(X_{T_n},\cdot)$.
This process is straightforward to simulate and has stationary distribution $\mu$ under the conditions of \Cref{thm:lln_timechange}. 
This procedure suggests the following estimator
\begin{equation}\label{eq:erg_avg_jumpprocess}
	\frac1T \int_0^T f(X_t) \dd t \,=\, 
	\frac1T \sum_{n=0}^{N_T} f(X_{T_n})(T_{n+1}-T_n) \,,
\end{equation}
where  $N_T$ is the time of the last jumping event before time $T$. One can ``regularise" the estimator \eqref{eq:erg_avg_jumpprocess} by instead discretising the path $(X_t)_{t\in[0,T]}$, that is using 
\begin{equation}\label{eq:erg_avg_jumpprocess_discrete}
	\frac1N \sum_{n=1}^{N} f(X_{n\delta}) 
\end{equation}
for some step size $\delta>0$ and an integer $N$, such that $N\delta = T$. This has the effect of ignoring with high probability the skeleton states $X_{T_n}$ where the process spends a small amount of time, that is when $T_{n+1}-T_n$ is very small. Intuitively, these are points where the target distribution is small if the speed function is chosen well.
Both estimators \eqref{eq:erg_avg_jumpprocess} and \eqref{eq:erg_avg_jumpprocess_discrete} require simulating a random number of jumps $N_T$. We can alternatively fix a computational budget, i.e. the number of jumps $N$ for the process $X$, and use the same ideas of \eqref{eq:erg_avg_jumpprocess} and \eqref{eq:erg_avg_jumpprocess_discrete}. 
In this case, \eqref{eq:erg_avg_jumpprocess} can be seen as a continuous-time counterpart of \cite{andral2023importance}, or as a Markovian counterpart of \cite{MALEFAKI20081210}.
Howerver, it is not hard to see that this is equivalent to doing self-normalised importance sampling with the discrete time Markov chain $(X_{T_n})_{n=0}^N$ with random weights, and hence there is then no practical advantage in doing this over its deterministic counterpart. 
\end{enumerate}

\begin{remark}
	All three strategies discussed above are asymptotically unbiased as $T\to\infty$ (or as $N\to\infty$), but the first two strategies can be modified to make them easier to implement at the cost of introducing a bias. In particular, in the first strategy we can instead discretise $X$, while in the second strategy, given a path of $Y$, we can approximate the one-dimensional integral $r^{-1}(t)=\int_0^t (s(Y_u))^{-1}\dd u$ with a numerical solver. 
\end{remark}

\begin{remark}
   Equation \eqref{eq:lln_equivalence} reveals a connection between sampling using time-changed processes and importance sampling. In importance sampling, samples from a measure $\tilde{\mu}$ are re-weighted by $\mu/\tilde{\mu}$ to approximate expectations with respect to $\mu$. Similarly, the paths of the time-changed process $X$ are obtained by re-weighting paths from the $\tilde{\mu}$-stationary base process, where $\tilde{\mu}$ is defined in \eqref{mu.tilde.def:1}. Here, the time change assigns importance weights, where large weights are assigned to states where the speed function is small, i.e. where the time-changed process moves slowly. The speed function, defined as $\tilde{\mu}/\mu$ up to a constant factor, can then be interpreted as the reciprocal of the importance weight. Intuitively, a well-chosen speed function defines a distribution $\tilde{\mu}$ that makes the task of interest easier to solve, e.g. connecting the modes of $\mu$ when it is multimodal. As in importance sampling, it is essential to choose speed functions bounded below by a strictly positive constant, ensuring that $\tilde{\mu}$ has heavier tails than $\mu$. 

\end{remark}

\section{Examples}\label{sec:examples}

\subsection{Time-changes of piecewise deterministic Markov processes (PDMPs)}\label{sec:pdmp}

In this section we show that time-changes of general PDMPs are also PDMPs and we identify their characteristics.
The PDMPs considered in the MCMC literature are typically of the form $(X,V)$, where $X$ is the position vector and $V$ can be interpreted as a velocity vector. With an abuse of notation, we denote the stationary distribution of the PDMP as $\mu(\dd x,\dd v) = \mu(\dd x) \nu(\dd v)$, where $\mu(\dd x)$ is the target distribution and $\nu$ is an auxiliary distribution. In particular, the position vector is the object of interest in the context of MCMC since it is marginally invariant with respect to the target distribution.
A PDMP is identified by the triple $(\vf,\lambda,Q)$, which governs the dynamics of the process as follows:
\begin{itemize}[leftmargin=5mm]
    \item The smooth vector field $\vf$ defines the deterministic motion of the process through the system of ODEs $\dd \varphi_t(y,w)=\vf(\varphi_t(y,w)) \dd t$
    where $\varphi_t$ denotes the integral curve that solves the ODE, and satisfies $\varphi_0(y,w)=(y,w)$%, and leaves $\rmE$ invariant
    . The process starting from $(y,w)$ follows the deterministic path $\left(\varphi_t(y,w)\right)_{t \geq 0}$.
    \item Noise comes into the system at random times $\tau$ following the probability distribution $$\mathbb{P}_{y,w}(\tau>t)=\exp\left(-\int_0^t \lambda(\varphi_u(y,w))\dd u\right),$$ where $\lambda:\rmE\to [0,\infty)$ is the event rate and $(y,w)$ is the initial condition. 
    \item At these random times the state of the PDMP is updated by drawing from the probability kernel $Q(\varphi_{\tau}(y,w),\dd (y',w'))$ %evaluated at the state right before the event time
    .
\end{itemize}

Now we connect the characteristics of the base PDMP with its time change.
\begin{theorem}[Time change of a PDMP]\label{thm:timechange_pdmp}
	Assume that \Cref{ass:s.integrability} holds. Consider a PDMP with characteristics $(\vftilde,\lambdatilde,\Qtilde)$. Then the time-changed process with speed function $s$ is the PDMP with characteristics $(\vf,\lambda,Q)=(s\vftilde,s\lambdatilde,\Qtilde)$.
\end{theorem}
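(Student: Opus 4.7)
The plan is to verify the three PDMP characteristics $(\vf,\lambda,Q)$ of the time-changed process in turn. Under \Cref{ass:s.integrability} the clock $r$ in \eqref{eq:time_change} is an absolutely continuous, strictly increasing bijection of $[0,\infty)$ onto itself with $r'(t)=s(X_t)$, and it maps each inter-jump interval of $X$ onto the corresponding inter-jump interval of $Y$. I would therefore first analyse a single inter-jump interval (starting just after a jump, by restarting) and then patch intervals together via the strong Markov property of $Y$ applied at its successive jump times.

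On an inter-jump interval, $Y$ follows the flow of $\vftilde$, so the chain rule applied to $X_t = Y_{r(t)}$ gives
\begin{equation*}
    \dot X_t \,=\, \dot Y_{r(t)}\, r'(t) \,=\, s(X_t)\,\vftilde(X_t),
\end{equation*}
which identifies the deterministic vector field of $X$ as $s\vftilde$ and its flow as $\varphi_t(y,w) = \varphitilde_{r(t)}(y,w)$. Let $\tilde\tau$ and $\tau$ denote the first jump times of $Y$ and $X$; the bijectivity of $r$ yields $\tau = r^{-1}(\tilde\tau)$, hence
\begin{equation*}
    \mathbb{P}_{y,w}(\tau > t)
    \,=\, \mathbb{P}_{y,w}(\tilde\tau > r(t))
    \,=\, \exp\!\left(-\int_0^{r(t)} \lambdatilde(\varphitilde_u(y,w))\,\dd u\right).
\end{equation*}
The change of variables $u = r(h)$, $\dd u = s(\varphi_h(y,w))\,\dd h$, combined with $\varphitilde_{r(h)}(y,w) = \varphi_h(y,w)$ for $h<\tau$, rewrites the right-hand side as $\exp\!\bigl(-\int_0^t s(\varphi_h(y,w))\,\lambdatilde(\varphi_h(y,w))\,\dd h\bigr)$, identifying the event rate of $X$ as $\lambda = s\lambdatilde$. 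At the jump itself, the pre-jump state is $X_{\tau-} = Y_{\tilde\tau -} = \varphi_\tau(y,w)$ and the post-jump state $X_\tau = Y_{\tilde\tau}$ is by definition drawn from $\Qtilde(\varphitilde_{\tilde\tau}(y,w),\cdot)=\Qtilde(\varphi_\tau(y,w),\cdot)$, so the kernel is unchanged, $Q=\Qtilde$.

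The step I expect to require the most care is the patching of the three identifications across successive jumps: one needs to restart the argument from each post-jump state and use the strong Markov property of $Y$ at $\tilde\tau_n$ to argue that the inter-jump times of $X$ are conditionally independent given its skeleton, yielding the global PDMP characterisation of $X$ with characteristics $(s\vftilde,s\lambdatilde,\Qtilde)$. Because $s$ is continuous and bounded below by $\underline{s}>0$ (so $r$ is a homeomorphism of $[0,\infty)$), and because the jumps of $X$ and $Y$ are in one-to-one correspondence through $r$, no additional integrability or non-explosion issue arises beyond what is already assumed on the base PDMP.
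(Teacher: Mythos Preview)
Your proposal is correct and follows essentially the same approach as the paper: chain rule on the flow to identify $s\vftilde$, a change of variables $u=r(h)$ in the survival-time integral to identify the rate $s\lambdatilde$, and the observation that pre- and post-jump states coincide with those of $Y$ to conclude $Q=\Qtilde$. The only cosmetic difference is that the paper works with the explicit Poisson-construction random variables $E_{n+1}$ (showing $\int_{T_n}^{T_{n+1}} s\lambdatilde(X_u,V_u)\,\dd u = E_{n+1}$) rather than with the survival probability, and handles the patching by induction on the skeleton index rather than invoking the strong Markov property; both routes are equivalent.
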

\begin{proof}
    The proof is presented in \Cref{sec:proof_timechange_pdmp}.
\end{proof}
Importantly, the vector field $\vf$ and rate $\lambda$ are obtained multiplying the ones of the base PDMP with the speed function $s$, while the jump kernel $Q$ is not affected and remains equal to $\Qtilde$. 

\subsubsection{Running example: the Zig-Zag process}\label{sec:intro_zigzag}

The Zig-Zag process (ZZP) \citep{ZZ} is a PDMP $(Y,W)$ with state space $\rmE=\mathbb{R}^d\times \{-1,+1 \}^d$. 
The vector $Y\in \mathbb{R}^d$ denotes the position of the ZZP, while $W \in \{-1,+1 \}^d$ represents its velocity. 
The ZZP is designed so that it has a unique stationary distribution with density $\mutilde(y,w)  = \mutilde(y) \times  2^{-d}$ on $\rmE$, where $\mutilde(y) \propto \exp(-\pottilde(y))$ is the density of a given probability measure $\mutilde(\dd y) = \mutilde(y) \dd y$ defined on $\mathbb{R}^d$.
This process evolves according to simple deterministic dynamics between random event times, at which the process changes direction. The deterministic dynamics follow the system of ODEs, $\dd Y_t = W_t \,\dd t,   \dd W_t = 0,$ i.e. the process evolves with constant velocity.  
The ZZP has $d$ types of random events, each consisting in a change of sign of the corresponding coordinate of the velocity vector. The jump mechanism for the coordinate $i \in \{ 1,\dots , d \}$ is described by the following pair of jump rate and kernel
\begin{equation}\label{eq:switching_rates_zzp}
\lambdatilde_i(y,w)=(w_i\partial_i \pottilde(y))_+ +\gammatilde_i(y), \quad \Qtilde_i((y,w),(\dd \tilde{y},\tilde{w})) = \delta_{(y,R_iw)}(\dd \tilde{y},\tilde{ w})
\end{equation}
\sloppy for all $(y,w),(\tilde y, \tilde w)\in\rmE$, where $\gammatilde_i(y)\geq 0$ are refreshment rates that can be set to zero and $R_i$ is the map $w \mapsto (w_1,\dots, w_{i-1},-w_i,w_{i+1},\dots, w_d)$. Here we used the notation $a_+ = \max\{0,a\}$, and $\partial_i$ denotes the partial derivative with respect to $y_i$.

\begin{remark}
    Notice that PDMPs with more than one type of jump, each with their own rate and kernel $(\tilde\lambda_i,\Qtilde_i)_{i=1,\dots,n}$, can be described in the formulation of \Cref{sec:pdmp} by setting $\lambdatilde = \sum_{i=1}^n \lambdatilde_i$ and $\Qtilde = \sum_{i=1}^n (\nicefrac{\lambdatilde_i}{\lambdatilde})\, \Qtilde_i$. This has the interpretation that event times are governed by the total rate $\lambdatilde$, and at each event time there is probability $\nicefrac{\lambdatilde_i}{\lambdatilde}$ of obtaining the new state applying $\Qtilde_i$.
\end{remark}

Now we characterise the time change of the ZZP, enforcing that the resulting process has the target distribution $\mu\propto\exp(-\pot)$ as marginal stationary measure for the position vector. As given by \Cref{thm:lln_timechange}, this can be achieved considering as base process a standard ZZP with marginal invariant distribution $\mutilde \propto s \mu$, that is the distribution with density $\mutilde(y)\propto \exp(-\pottilde(y))$ for $\pottilde (y) = \pot(y) - \ln s(y).$
We shall assume the following condition on $s$ and $\pot$ to ensure that the base ZZP satisfies a LLN \citep{Bierkensergodicity}.
\begin{assumption}\label{ass:growth_ergo_ZZ}
    The function $\pottilde (y) = \pot(y) - \ln s(y)$ is $\mathcal{C}^3(\R^d)$, has a non-degenerate local minimum, and there exist constants $c>d$, $c'$ such that $\pottilde(y) >c \ln(\lvert y\rvert)-c'$ for all $y\in\R^d.$
\end{assumption}
In the following result, we restrict our attention to speed functions that only depend on the position vector.
\begin{proposition}[Time-changed ZZP]\label{prop:suzz}
	Let $s:(x,v)\mapsto s(x)$ be a speed function such that $s\in\mathcal{C}^1(\R^d)$.
    Suppose that \Cref{ass:s.integrability} and \Cref{ass:growth_ergo_ZZ} are satisfied.
    Let $(Y,W)$ be a ZZP with invariant density $\mutilde(y,w)\propto\exp(-\pottilde(y))$.  
	Then, the process $(X_t,V_t) = (Y_{r(t)},W_{r(t)}) \text{ for } r(t)= \int_0^t s(X_u) \dd u$
	is a PDMP with deterministic dynamics 
	 \begin{equation}\label{eq:ODE_suzz}
		\begin{aligned}
			& \dd X_t = s(X_t)V_t\, \dd t,\quad \dd V_t = 0,
		\end{aligned}
	\end{equation}
	and with jump mechanism described by the rates and kernels for $ i=1,\dots,d,$
	\begin{equation}\label{eq:switching_rates_suzz}
		\lambda_i(x,v)=\left(v_i\left(s(x) \partial_i \pot(x) - \partial_i s(x) \right)\right)_+ \,+s(x)\gammatilde_i(x), \quad Q_i ((x,w),(\dd \tilde{x},\tilde{v})) = \delta_{(x,R_iv)}(\dd \tilde{x},\tilde{v}).
	\end{equation}
	Moreover, $(X_t,V_t)$ has unique stationary density $\mu(x,v)= \mu(x)\times  2^{-d}$ for $\mu\propto\exp(-\pot)$, and it satisfies a LLN.
\end{proposition}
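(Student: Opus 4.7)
The plan is to simply combine \Cref{thm:timechange_pdmp} with \Cref{thm:lln_timechange}, with all non-trivial work absorbed into a short computation of the new rates.

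First, I identify the characteristics of the base ZZP $(Y,W)$: the deterministic dynamics are governed by the vector field $\vftilde(y,w) = (w,0)$, and the jump mechanism decomposes into $d$ coordinate-wise pairs $(\lambdatilde_i,\Qtilde_i)$ with $\lambdatilde_i$ and $\Qtilde_i$ given exactly by \eqref{eq:switching_rates_zzp}. Under \Cref{ass:growth_ergo_ZZ}, known ergodicity results for the ZZP (e.g.\ the Bierkens--Roberts--Zitt framework cited in \citep{Bierkensergodicity}) yield that the base process has a unique invariant distribution with density $\mutilde(y,w)\propto \exp(-\pottilde(y))\times 2^{-d}$ and satisfies a LLN, which is precisely \Cref{ass:lln_base} with the $\mutilde$ of \eqref{mu.tilde.def:1}, since $\exp(-\pottilde(y))=s(y)\exp(-\pot(y))\propto s(y)\mu(y)$.

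Second, I invoke \Cref{thm:timechange_pdmp} to read off that $(X,V)$ is a PDMP with characteristics $(s\vftilde,\,s\lambdatilde_i,\,\Qtilde_i)_{i=1,\dots,d}$. The new vector field $s(x)(v,0)$ is exactly \eqref{eq:ODE_suzz}, and the kernels $Q_i=\Qtilde_i$ are the coordinate flips appearing in \eqref{eq:switching_rates_suzz}. The only algebraic step is to rewrite the rates: since $s>0$ everywhere,
\[
s(x)\,(v_i\partial_i\pottilde(x))_+ \;=\; \bigl(v_i\,s(x)\,\partial_i\pottilde(x)\bigr)_+,
\]
and using $\partial_i\pottilde(x)=\partial_i\pot(x)-\partial_i s(x)/s(x)$ this becomes $(v_i(s(x)\partial_i\pot(x)-\partial_i s(x)))_+$, which together with the refreshment contribution $s(x)\gammatilde_i(x)$ yields \eqref{eq:switching_rates_suzz}. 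Regularity of these rates is ensured by $s\in\mathcal{C}^1$ and the $\mathcal{C}^3$ assumption on $\pottilde$ in \Cref{ass:growth_ergo_ZZ}.

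Third, I conclude by applying \Cref{thm:lln_timechange}: \Cref{ass:s.integrability} is in the hypotheses, and \Cref{ass:lln_base} was verified in the first step, so $(X,V)$ has unique stationary density $\mu(x,v)=\mu(x)\times 2^{-d}$ and satisfies a LLN.

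I do not anticipate any real obstacle: the entire argument is a bookkeeping exercise, with the only substantive point being the simplification of $s(x)(v_i\partial_i\pottilde(x))_+$ above. The mildly delicate item is to make sure that the lower bound $s\geq\underline{s}>0$ and the $\ln s$ correction do not invalidate the tail growth hypothesis in \Cref{ass:growth_ergo_ZZ}; but this is assumed directly on $\pottilde$, so no additional verification is needed here.
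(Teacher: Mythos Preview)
Your proposal is correct and follows essentially the same approach as the paper: apply \Cref{thm:timechange_pdmp} to obtain the characteristics of the time-changed PDMP (with the short algebraic simplification of the rates via $s>0$ and $\partial_i\pottilde=\partial_i\pot-\partial_i s/s$), and then invoke \Cref{thm:lln_timechange} together with the LLN for the base ZZP from \citet{Bierkensergodicity}. Your write-up simply spells out the rate computation in a bit more detail than the paper's one-line reference to \Cref{thm:timechange_pdmp}.
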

\begin{proof}
    The statement on the dynamics, jump rates and kernels of the process is a corollary of \Cref{thm:timechange_pdmp}.
	The statement on the stationary distribution and LLN follows from \Cref{thm:lln_timechange} along with the fact that the ZZP satisfies a LLN itself, as proven in \citet{Bierkensergodicity}.
\end{proof}

It turns out the very same process described in \Cref{prop:suzz} was introduced by \cite{Vasdekis_speedup} with the goal of improving convergence for heavy tailed distributions. In \cite{Vasdekis_speedup} the process, termed \emph{Speed Up Zig-Zag}, was obtained with ad-hoc arguments and not as a time changed ZZP, a point of view that allows to take advantage of known properties of the base process, i.e. the ZZP. Moreover, seeing the $(X,V)$ as a time-change of a ZZP with invariant distribution $\mutilde$ gives a clear understanding of the role of the speed function on the dynamics of the process.

\subsubsection*{One-dimensional bimodal target}\label{sec:bimodal_example}
\begin{figure}[t]
\begin{subfigure}[t]{\textwidth}
		\centering
		\includegraphics[width=0.35\textwidth]{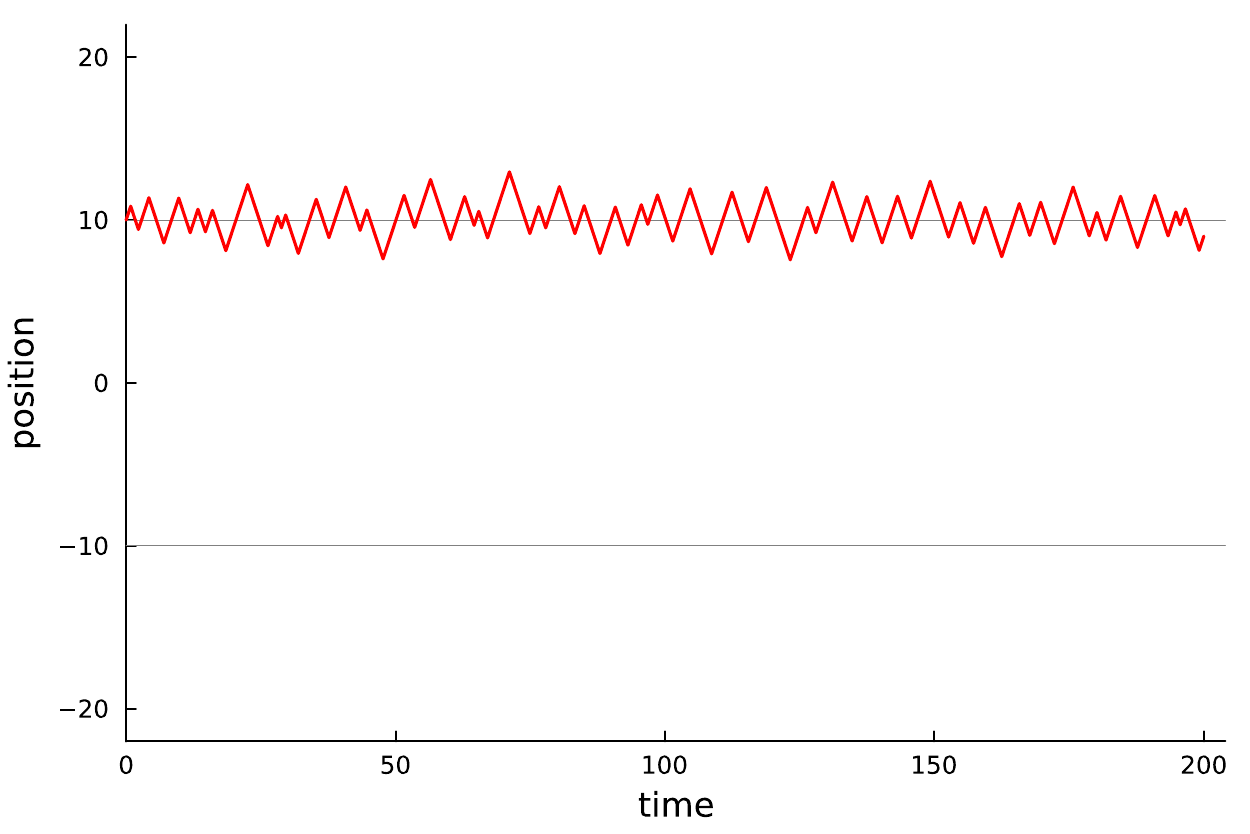}
        \hspace{20pt}
        \includegraphics[width=0.35\textwidth]{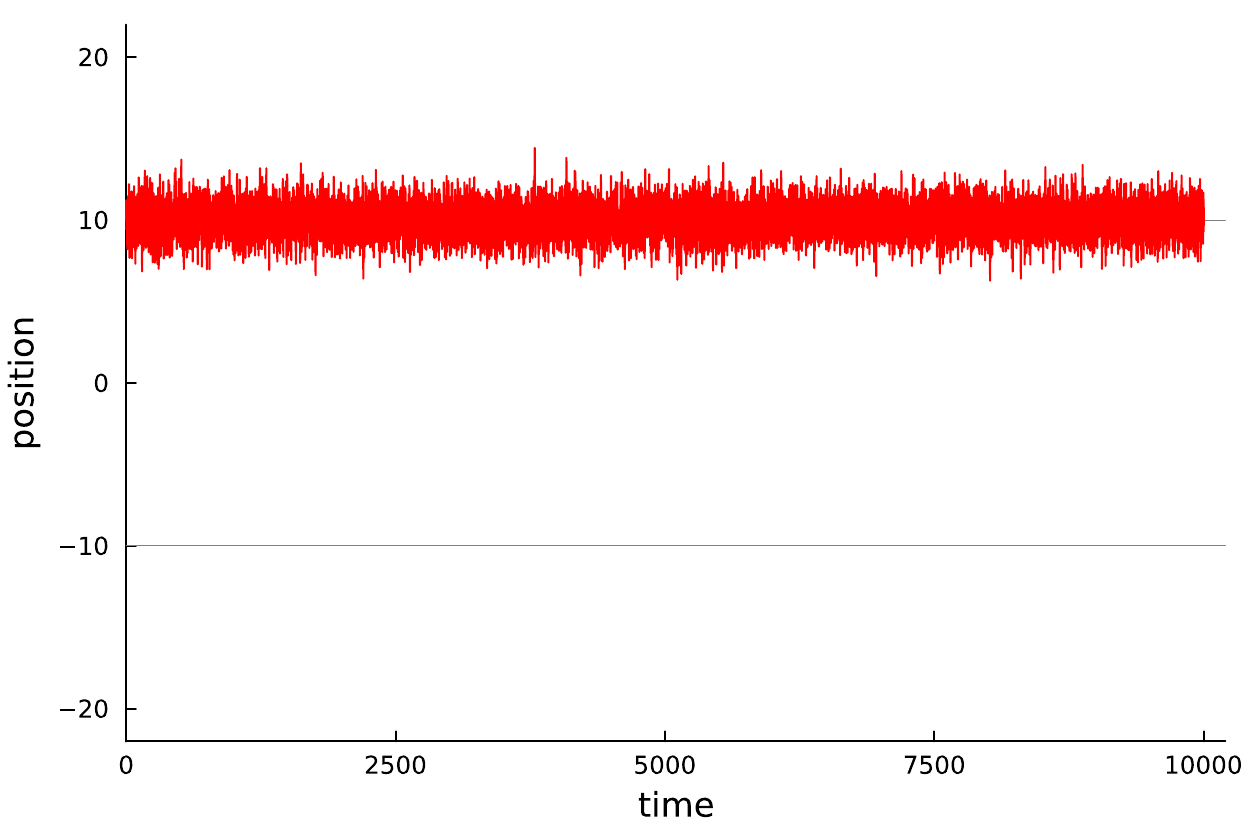}
		\caption{Results for the speed function $s(x)=1$.}
		% \label{fig:Ytilde}
	\end{subfigure}
    \begin{subfigure}[t]{\textwidth}
		\centering
		\includegraphics[width=0.35\textwidth]{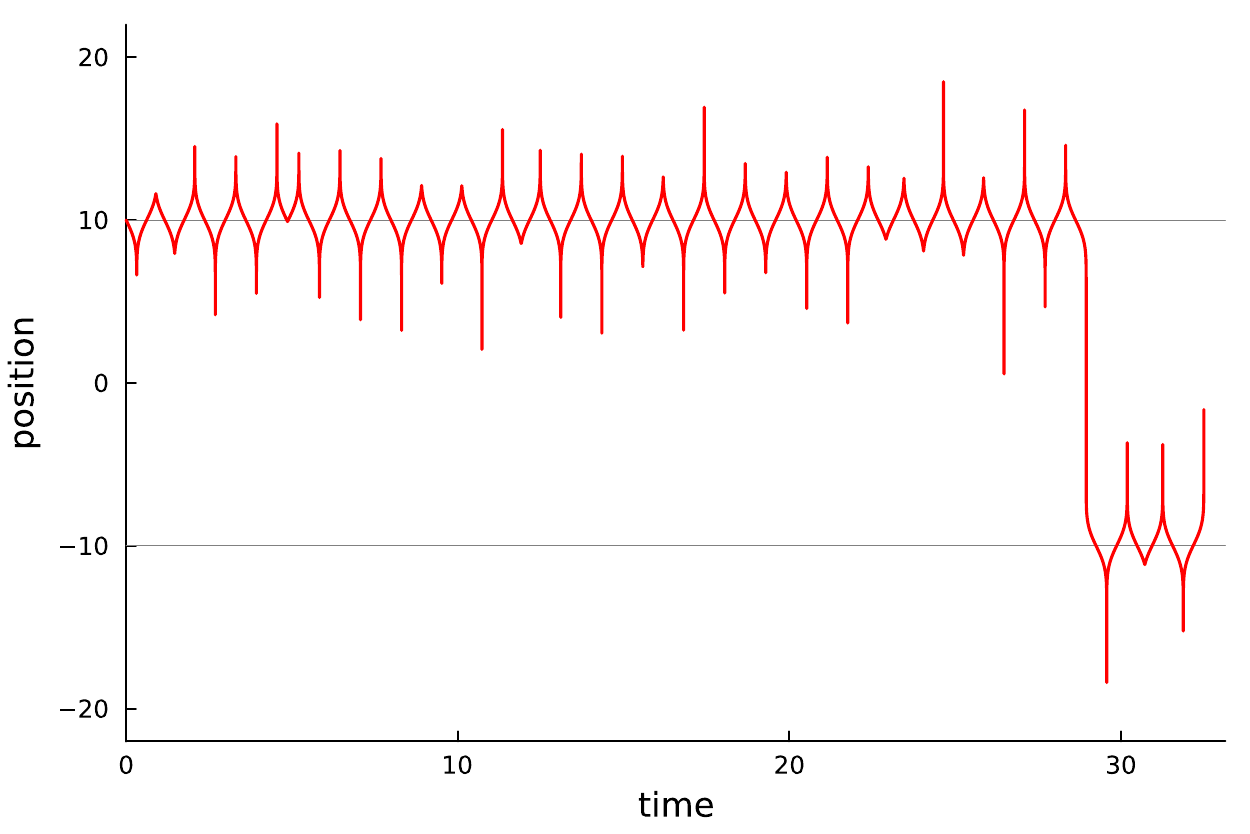}
        \hspace{20pt}
        \includegraphics[width=0.35\textwidth]{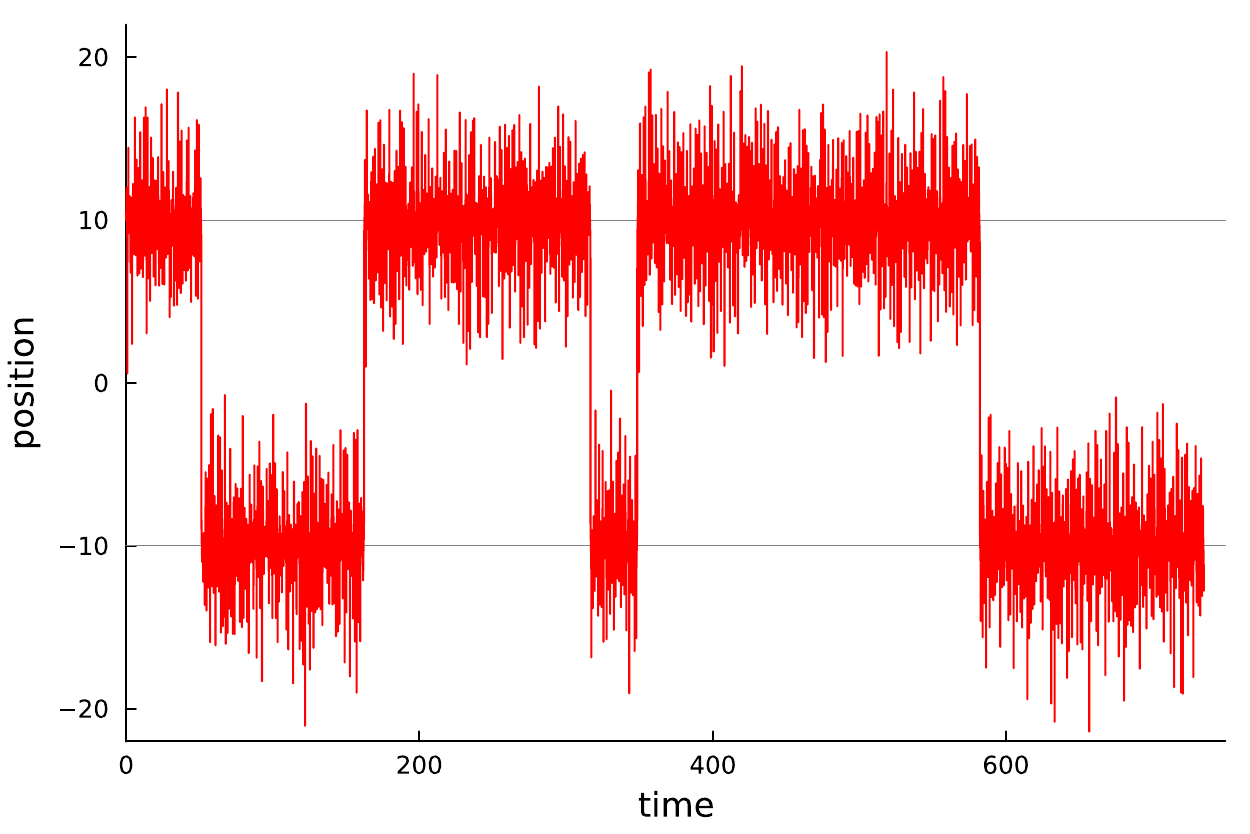}
		\caption{Results for the speed function $s(x)=\mu^{-\alpha}(x)$ with $\alpha = 0.9$.}
		% \label{fig:Ytilde}
	\end{subfigure}
	\caption{Simulations for a mixture of two, one-dimensional standard normal random variables with means at $10$ and $-10$, and equal weight. The base process is the standard ZZP. The plots on the left show short simulations, highlighting the different dynamics, while the plots on the right show longer runs of the processes.}
	\label{fig:gaussianmixture_1d}
\end{figure}
Consider the case where $\mu$ is a one-dimensional bimodal density $\mu(x) \propto \exp(  -\pot(x) )$, corresponding to a smooth double well potential $\pot$ with two minima at $x_0, x_2 \in \mathbb{R}$ and a local maximum at $x_1 \in \mathbb{R}$, where $x_0 < x_1 < x_2$.  Here we compare the probabilities that, starting at $x_0$ with velocity $+1$, the ZZP and its time changed counterpart successfully reach the mode $x_2$ in one go, that is without intermediate velocity flips. This setting was studied for the ZZP by \cite{Monmarche2016}. Let us constraint on speed functions such that $\sign(\pottilde'(x)) = \sign(\pot'(x))$ for $\pottilde(x) = \pot(x) -\ln s(x)$ and assume the refreshment rates are zero, i.e. $\gamma \equiv 0$. The probability that the time changed ZZP, $(X,V)$, reaches the mode around $x_2$ is equal to the probability that the base process, $(Y,W)$, has its first event after it has reached the point $x_1$, since the switching rate is zero when $x_1<x<x_2$ and the velocity is $V_0=+1$. Defining $\tau_1 :=\inf\{ t>0: V_t=-1\},$  with simple computations we can show that 
\begin{equation}\label{eq:prob_crossing_suzz}
	\begin{aligned}
		\mathbb{P}_{(x_0,+1)}(X_{\tau_1}\geq x_1) &=
		\frac{s(x_1)}{s(x_0)} \exp( -\left(\pot(x_1)-\pot(x_0)\right)).
	\end{aligned}
\end{equation}
Recalling that the standard ZZP corresponds to the choice $s(x)=1$, we see that the probability of crossing the low density region for the time changed ZZP is increased if $s(x_1)>s(x_0)$ and decreased if $s(x_1)<s(x_0)$. This confirms our intuition on the benefits of increasing the speed in low-density regions that constitute a bottleneck for the base process. As a result, the time-changed process with speed satisfying $s(x_1)>s(x_0)$, and similarly $s(x_1)>s(x_2)$, will hop between the two modes more often than the base process. 
A natural choice in this case is $s(x)=\exp(\alpha \pot(x))$ with $\alpha\in (0,1)$, where values $\alpha \geq 1$ are not allowed because the time-changed process becomes explosive \citep{Vasdekis_speedup}. We illustrate this choice of speed function in \Cref{fig:gaussianmixture_1d}. In this case  $\mathbb{P}_{(x_0,+1)}(X_{\tau_1}\geq x_1) = \exp( -(1-\alpha)(\pot(x_1)-\pot(x_0))),$ effectively corresponding to using $Y$ in order to target $\mu^{1-\alpha}$, a tempered version of $\mu$.
In \Cref{sec:eyring-kramers} we give an Eyring-Kramers formula for the time changed ZZP, i.e. a bound on the expected hitting time of point $x_1$ starting from $x_0$ with velocity $-1$, leveraging existing result for the standard ZZP developed in \cite{Monmarche2016}. 

\subsubsection*{Tail exploration}
\begin{figure}[t]
\begin{subfigure}[t]{0.49\textwidth}
		\centering
		\includegraphics[width=0.9\textwidth]{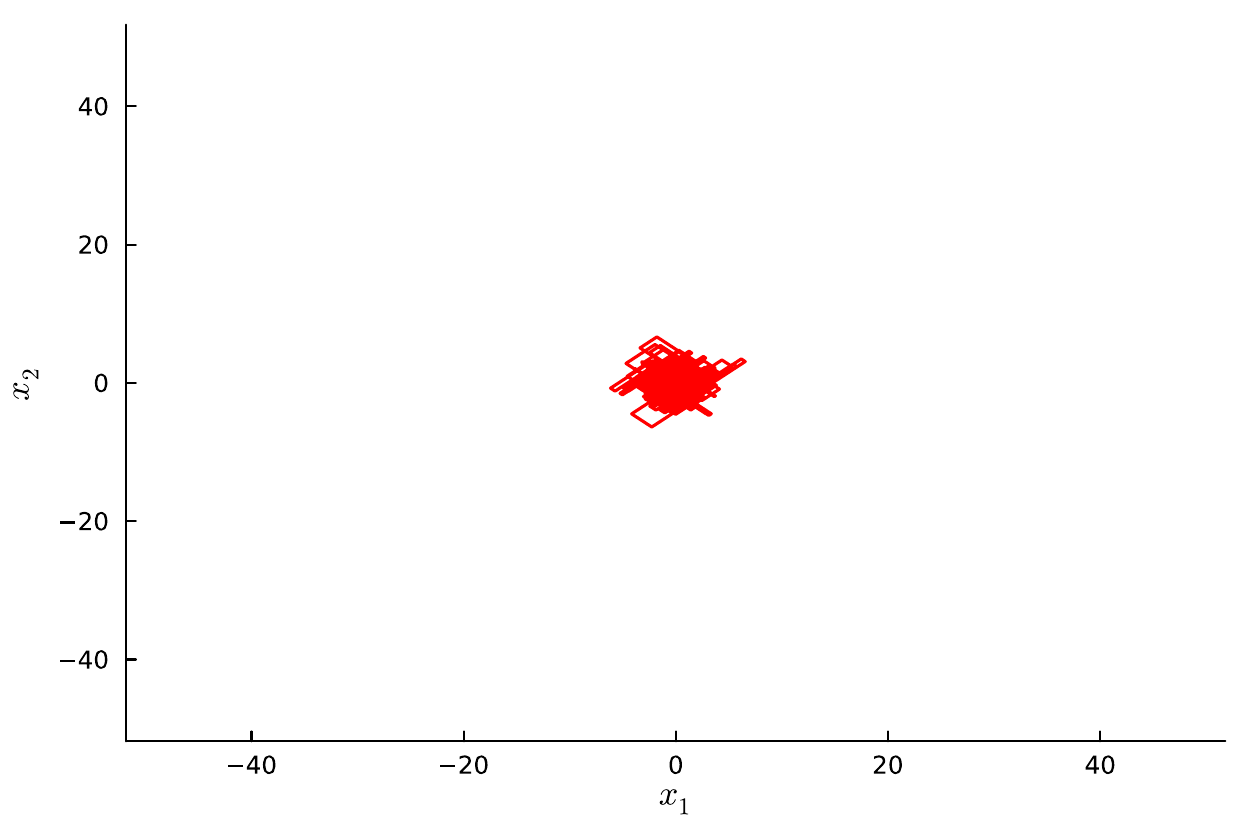}
        \caption{$s(x) = 1.$}
        % \hspace{10pt}
		% \label{fig:Ytilde}
	\end{subfigure}
    \hfill
    \begin{subfigure}[t]{0.49\textwidth}
    \centering
        \includegraphics[width=0.9\textwidth]{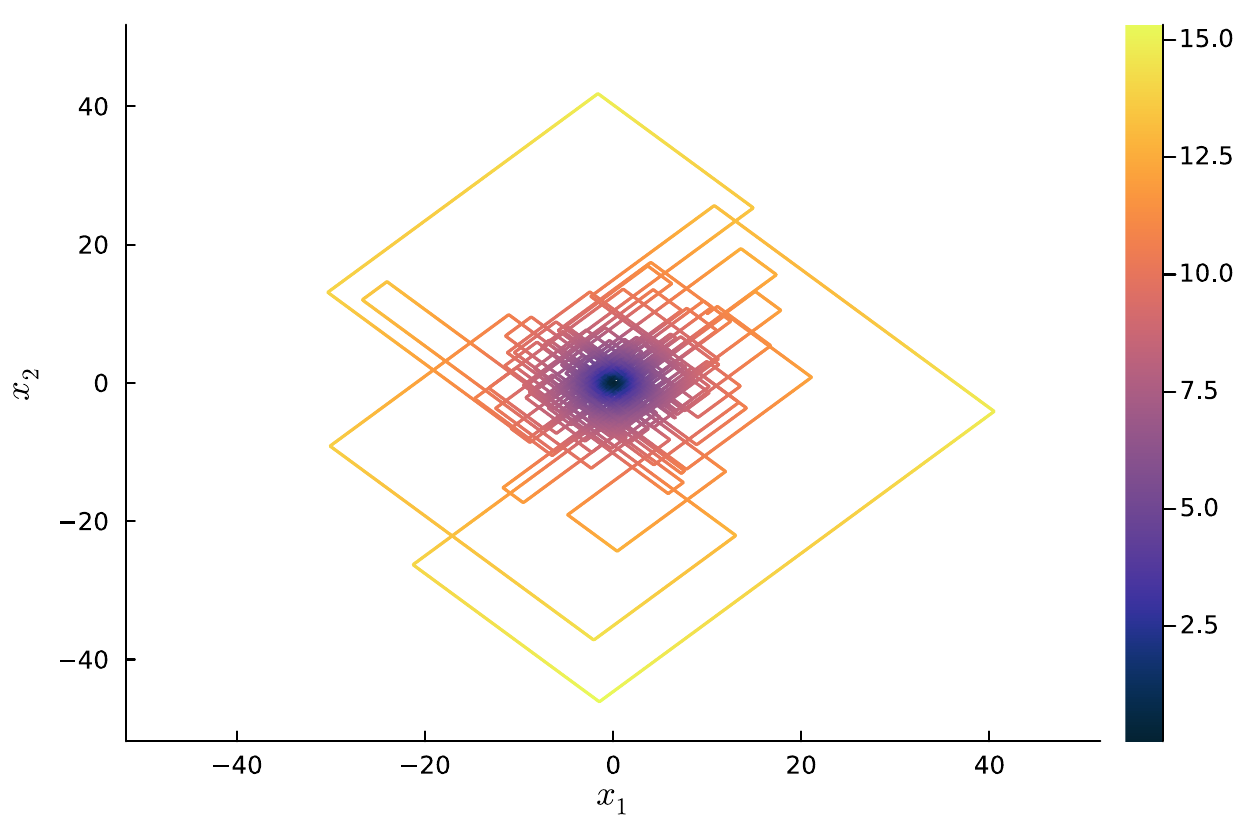}
		\caption{$s(x)=(1+\lvert x\rvert^2)^2$.}
    \end{subfigure}
	\caption{Simulations for a two-dimensional, isotropic student distribution with mean at the origin and $5$ degrees of freedom. The plot on the right shows the logarithm of the velocity in the colourmap.}
	\label{fig:heavytailed_2d}
\end{figure}
Consider a $d$-dimensional unimodal target such that $\pot$ is a differentiable function with global minimum at $x_0$. We study the probability of reaching the tails of the distribution, a task of particular interest when one is interested in estimating expectations of observables that depend on tail regions, such as the probability of rare events.
We consider the time changed ZZP with speed $s$, initial position $x_0$ and velocity $v$ and compute the probability that it reaches a certain distance from $x_0$ before the first even $\tau_1$ takes place. For any $c>0$, the process reaches a distance greater than $c$ with the same probability of a standard ZZP with target $\mutilde$, and therefore we obtain 
\begin{equation}\notag
	\begin{aligned}
		\mathbb{P}_{(x_0,v)}(\lvert X_\tau - x_0 \rvert \geq c) &= \frac{s(x_0+v  c/\sqrt{d})}{s(x_0)}\exp\left(-(\pot(x_0+v  c/\sqrt{d})-\pot(x_0))\right). 
	\end{aligned}
\end{equation}
The process travels a larger distance before the first event takes place if $s$ increases in the tails. Examples of suitable speed functions in this context are $s(x) = 1 + \lvert x-x_0\rvert^2$, with deterministic dynamics that can be implemented directly without numerical error (see \cite{Vasdekis_speedup}), or $s(x) = 1+\pot(x)$ assuming $\pot$ is positive everywhere, but also $s(x) = \exp(\alpha\pot(x))$.

\subsubsection{Other piecewise deterministic Markov processes}\label{sec:PDMP_TT}
Taking advantage of \Cref{thm:timechange_pdmp}, we now obtain several novel PDMPs with % marginal
stationary distribution $\mu$.
Using the same ideas as in \Cref{sec:intro_zigzag}, it is possible to define time-changes of PDMPs such as those designed for subsampling (see e.g. \cite{ZZ}), or of other PDMPs such as \citet{coordinate_sampler,bierkens2020boomerang}.
\begin{example}[Bouncy Particle Sampler (BPS) \citep{BPS}]\label{ex:bps_tt}
	 The BPS is a PDMP $(Y,W)$ defined on $\rmE=\mathbb{R}^d\times \mathbb{R}^d$ , with position-velocity decomposition and with deterministic dynamics identified by the vector field $\vftilde_{\rmBP}(y,w)=(w,0)$. We now describe the standard BPS with stationary distribution with density $\mutilde(y,w)\propto \exp(-\pottilde(y))\nu(w)$, where $\nu$ is a  $d$-dimensional standard normal distribution.
	 The event rate of BPS is $\lambdatilde_{\rmBP}(y,w) = \langle w,\nabla \pottilde(y)\rangle_+ + \lambda_r$, where $\lambda_r>0$ is the refreshment rate, and $\langle \cdot , \cdot \rangle$ denotes the usual inner product in $\mathbb{R}^d$. The jump mechanism  is given by
	 \begin{equation}\notag%\label{eq:jumpkernel_BPS}
		 	\Qtilde_{\rmBP} ((y,w),(\dd y',\dd w')) =\frac{\langle w,\nabla \pottilde(y)\rangle_+}{\lambdatilde_{\rmBP}(y,w)} \, \delta_{(x,\Rtilde_{\rmBP}(x)w)}(\dd y',\dd w') + \frac{\lambda_r}{\lambdatilde_{\rmBP}(y,w)} \,  \delta_x(\dd y')\nu(\dd w'),
	 \end{equation}
	 where $\Rtilde_{\rmBP}$ is the operator that defines elastic reflections on the level curves of the potential $\pottilde$, that is $\Rtilde_{\rmBP}(y)w=w-2\frac{\langle w,\nabla \pottilde(y)\rangle}{\lvert\nabla \pottilde(y)\rvert^2} \nabla \pottilde(y).$
%	 \begin{equation}\label{eq:reflection_operator}
%		 	\Rtilde(y)w=w-2\frac{\langle w,\nabla \pottilde(y)\rangle}{\lvert\nabla \pottilde(y)\rvert^2} \nabla \pottilde(y).
%		 \end{equation}
	At event times, the process is either reflected according to $\Rtilde_{\rmBP}$, or the velocity is refreshed by an independent draw from $\nu$.
	 \citet{BPS} prove that if $\lambda_r >0$ and $\pottilde \in \mathcal{C}^1$, the process has $\mutilde$ as stationary distribution and a LLN (i.e. \Cref{ass:lln_base}) is satisfied. 
    
	Applying \Cref{thm:timechange_pdmp} we find that the time changed BPS, $(X_t,V_t)=(Y_{r(t)},W_{r(t)})$, with speed $s$ satisfying \Cref{ass:s.integrability}, is the PDMP with characteristics $(s\vftilde_{\rmBP},s \lambdatilde_{\rmBP},\Qtilde_{\rmBP})$. \Cref{thm:lln_timechange} gives that, if $\lambda_r >0$ and $\pottilde , s \in \mathcal{C}^1(\R^d)$, this process has invariant distribution with density $\mu(x,v)= \mu(x)\nu(v)$ for $\mu(x) \propto\exp(-\pot(x))$ when $\pottilde = \pot - \ln s $. Simple calculations show that the time-changed BPS has event rate $\lambda(x,v) = \langle v,s(x) \nabla \pot(x)-\nabla s(x) \rangle_+ + \lambda_r s(x)$ and deterministic motion given by \eqref{eq:ODE_suzz}. We note that the jump kernel $Q$ consists of bounces on the contour lines of $\mutilde$. 
\end{example}

\begin{example}[Randomised Hamiltonian Monte Carlo (RHMC) \citep{bou2017randomized}]\label{ex:randomisedHMC}
	% The process underlying the randomised Hamiltonian Monte Carlo (RHMC) sampler of \cite{bou2017randomized} is the PDMP: Hamiltonian dynamics are interrupted by refreshments of the velocity vector. 
	% We denote the process as $(Y_t,W_t)$, where $Y_t$ and $W_t$ denote respectively the position and the mechanical momentum. Assuming we wish to sample from $\pi(y)\propto \exp(-V(y))$, the deterministic motion of RHMC is obtained by \eqref{eq:ODE_pdmp} for $\vf(y,w) = (w,-\nabla V(y))$. The solution of the ODE, which as usual we denote as $\varphi_t$, cannot be computed analytically in general. In its simplest form, refreshment times have rate $\lambda_r>0$ and correspond to drawing a new value of the momentum $w$ from a standard normal distribution, which here we denote $\nu$. Hence $\lambda(y,w)=\lambda_r$ and $Q((y,w),(dy',dw'))=\delta_{y}(dy') \nu(dw')$. Therefore RHMC is a PMDP $(\varphi_t,\lambda,Q)$ for characteristics as just defined with invariant distribution $\mu(y,w) \propto \exp(-V(y)-(1/2)\lVert w\rVert^2)$.
	RHMC is a PDMP which follows Hamiltonian dynamics with momentum refreshments at random times.
	The RHMC with stationary distribution $\mutilde(y,w) \propto \exp(-\pottilde(y)-(1/2)\lvert w\rvert^2)$ has dynamics described by the system of ODEs 
	\begin{equation}
			\frac{\dd Y_t}{\dd t} = W_t, \quad \frac{\dd W_t}{\dd t} = -\nabla \pottilde(Y_t), %+ \frac{\nabla s(Y_t)}{s(Y_t)} ,
	\end{equation}
	while at random times with rate $\lambda_r$ the velocity is refreshed with a draw from the standard normal distribution.
	
	The time-changed RHMC with speed $s$ has unique invariant distribution with density $\mu(x,v) \propto \exp(-\pot(x)-(1/2)\lvert v\rvert^2)$ for $\pottilde = \pot - \ln s$. This process is a PDMP with deterministic dynamics
	\begin{equation}\notag
		\begin{aligned}
			& \frac{\dd X_t}{\dd t} = s(X_t)V_t, \quad \frac{\dd V_t}{\dd t} = -s(X_t) \nabla \pot(X_t) + \nabla s(X_t). 
		\end{aligned}
	\end{equation}
	Such process moves on the level curves of the Hamiltonian $H(x,v)=\pot(x)-\ln s(x)+(1/2)\lvert v\rvert^2$ with speed given by $s(x)$. The jump mechanism consists again of velocity refreshments from the standard normal distribution with rate $s(x)\lambda_r$. With analogous ideas it is possible to define a time-changed version of RHMC samplers with different choices kinetic energy, considered e.g. \cite{Hamiltonian_ZZS}.
\end{example}

\subsection{Diffusion processes}\label{sec:TT_diffusions}
In this section we apply the framework of time-changes to Langevin diffusions. We start with a statement characterising the time change of a diffusion process with characteristics $(\btilde,\sigmatilde)$, that is the process satisfying the SDE $\dd Y_t = \btilde (Y_t)\dd t + \sigmatilde(Y_t)\dd B_t$. where $\btilde:\R^d\to\R^d$ is the drift coefficient, $\sigmatilde:\R^{d}\to \R^{d\times d}$ is the diffusion coefficient, and $B_t$ is a Brownian motion.

\begin{proposition}[Time change of a diffusion process]\label{prop:timechange_diffusion}
	Consider a speed function $s$ and a diffusion with characteristics $(\btilde,\sigmatilde)$. The time changed process defined through \eqref{eq:time_change} with speed $s$ is the diffusion with characteristics $(b,\sigma)=(s\btilde,\sqrt{s}\sigmatilde)$.
\end{proposition}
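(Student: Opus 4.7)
The plan is to derive the SDE satisfied by $X_t = Y_{r(t)}$ directly from the SDE satisfied by $Y$ via the classical time-change results for continuous (local) martingales. Since $s$ is continuous and bounded below by $\underline{s}>0$ by \Cref{ass:s.integrability}, the map $t \mapsto r(t)$ is strictly increasing and $\mathcal{C}^1$, with inverse $r^{-1}(t)=\int_0^t s(Y_u)^{-1} \dd u$ as already recorded in \eqref{eq:time_transformed_process_inverse}. Differentiating the identity $t = \int_0^{r(t)} s(Y_u)^{-1} \dd u$ gives $r'(t) = s(Y_{r(t)}) = s(X_t)$, which is the key relation I would use throughout.

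First I would write the Itô SDE for $Y$ in integrated form
\[
Y_{r(t)} \;=\; Y_0 + \int_0^{r(t)} \btilde(Y_u)\,\dd u + \int_0^{r(t)} \sigmatilde(Y_u)\,\dd B_u,
\]
and treat the two integrals separately. For the Lebesgue (drift) integral I would perform the deterministic change of variables $u = r(\tau)$, $\dd u = s(X_\tau)\dd\tau$, obtaining $\int_0^{t} s(X_\tau)\btilde(X_\tau)\,\dd\tau$. For the stochastic integral I would appeal to the time-change theorem for continuous local martingales: the process $M_t := \int_0^{r(t)} \sigmatilde(Y_u)\,\dd B_u$ is a continuous local martingale (in the time-changed filtration $\mathcal{F}_{r(t)}$) whose matrix-valued quadratic covariation is
\[
\langle M\rangle_t \;=\; \int_0^{r(t)} \sigmatilde\sigmatilde^{T}(Y_u)\,\dd u \;=\; \int_0^t s(X_\tau)\,\sigmatilde\sigmatilde^{T}(X_\tau)\,\dd\tau.
\]
Invoking the multidimensional Dambis–Dubins–Schwarz representation (enlarging the probability space with an independent Brownian motion if $\sigmatilde$ is degenerate), this yields a Brownian motion $W$ on $\R^d$ such that $M_t = \int_0^t \sqrt{s(X_\tau)}\,\sigmatilde(X_\tau)\,\dd W_\tau$; the factor $\sqrt{s}$ is precisely what the quadratic variation forces, since $(\sqrt{s}\,\sigmatilde)(\sqrt{s}\,\sigmatilde)^{T} = s\,\sigmatilde\sigmatilde^{T}$. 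Combining the two pieces gives
\[
\dd X_t \;=\; s(X_t)\,\btilde(X_t)\,\dd t + \sqrt{s(X_t)}\,\sigmatilde(X_t)\,\dd W_t,
\]
which is the announced diffusion with characteristics $(b,\sigma)=(s\btilde,\sqrt{s}\,\sigmatilde)$.

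The main obstacle is the careful handling of the stochastic integral under the random time change, especially in the multidimensional setting: one must verify both that $M$ remains a continuous local martingale with respect to the time-changed filtration $\mathcal{F}_{r(\cdot)}$ (which uses that $r(t)$ is a finite stopping time for each fixed $t$, guaranteed by $s \ge \underline{s}$) and that its quadratic covariation transforms as above, before the DDS representation can be applied coordinate-wise. Everything else is a direct substitution.

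As a quick consistency check, I would verify the result at the level of infinitesimal generators: the generator of $Y$ acts as $\cLtilde f = \btilde\cdot\nabla f + \tfrac{1}{2}\trace(\sigmatilde\sigmatilde^{T}\nabla^2 f)$, and a generic time change with speed $s$ multiplies the generator by $s$, giving $s\btilde\cdot\nabla f + \tfrac{1}{2}\trace(s\sigmatilde\sigmatilde^{T}\nabla^2 f)$, which is exactly the generator of the diffusion with characteristics $(s\btilde,\sqrt{s}\,\sigmatilde)$. This matches the SDE derivation above and agrees with the PDMP analogue \Cref{thm:timechange_pdmp}, where the drift/vector field scales linearly in $s$ while the ``noise'' scales as $\sqrt{s}$ on the diffusion side versus unchanged on the jump-kernel side.
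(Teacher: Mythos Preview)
Your argument is correct but follows a genuinely different route from the paper. The paper works entirely at the level of generators: it recalls that for $f\in\mathcal{C}^\infty_c$ the generator of $Y$ is $\cLtilde f=\langle\btilde,\nabla f\rangle+\tfrac12\sum_{ij}\tilde a_{ij}\partial_i\partial_j f$, applies \Cref{thm:generator_timechange} to obtain $\cL f=s\cLtilde f$, and then invokes a characterisation from Rogers--Williams to identify $s\cLtilde$ as the generator of the diffusion $(s\btilde,\sqrt{s}\sigmatilde)$. In other words, what you present as a ``quick consistency check'' at the end is essentially the paper's entire proof. Your approach instead works pathwise: you time-change the drift integral by an ordinary change of variables and handle the stochastic integral via the quadratic-covariation computation together with a martingale representation (your DDS step), producing an explicit Brownian motion $W$ driving the new SDE. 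The paper's route is shorter and meshes with the general generator machinery already developed there, while yours is more constructive and avoids the martingale-problem identification, at the cost of having to be careful with the multidimensional representation step (which you flag). Both are valid; yours gives a stronger, pathwise statement.
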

\begin{proof}[Proof of \Cref{prop:timechange_diffusion}]
The proof is presented in Appendix \ref{proof.diffusion.time.change}
\end{proof}

\begin{example}[Overdamped Langevin diffusion]\label{ex:overdamped}
	The overdamped Langevin diffusion with stationary distribution $\mutilde\propto \exp(-\pottilde)$ for $\pottilde = \pot - \ln s$ is the solution of the following stochastic differential equation (SDE):
	\begin{equation}\notag
		\dd Y_t = \left(-\nabla \pot(Y_t) +\frac{\nabla s(Y_t)}{s(Y_t)} \right)\dd t + \sqrt{2}\, \dd B_t,
	\end{equation}
	This SDE was considered in \cite{chak2023optimal} as a basis for the importance sampling procedure described in \eqref{eq:erg_avg_ctstime_Y_compbudg}. The search for the optimal proposal distribution studied in \cite{chak2023optimal} is then equivalent to finding the optimal time-change of the overdamped Langevin diffusion.
	
	Applying \Cref{prop:timechange_diffusion} we find that the process $X_{t} = Y_{r(t)}$ is the solution of the SDE
	\begin{equation}\label{eq:timechange_overdamped}
		\dd X_t = \left(-s(X_t)\nabla \pot(X_t) +\nabla s(X_t) \right)\dd t + \sqrt{2 s(X_t)}\, \dd B_t,
	\end{equation}
	and has invariant measure with density $\mu(x) \propto \exp(-\pot(x))$.
	For $s(x)=\exp(\alpha\pot(x))$ with $\alpha \in(0,1)$, the process $X$ coincides with the one considered in \cite{RobertsStramer}. This choice of speed function was shown to be optimal in some settings \citep{lelièvre2024optimizingdiffusioncoefficientoverdamped}. 
    We remark also that \citet{Roberts_Ros_Peskunordering} obtained a Peskun ordering for SDEs of the form \eqref{eq:timechange_overdamped}, showing that the asymptotic variance of the diffusion \eqref{eq:timechange_overdamped} with speed $s_1$ is smaller than with speed $s_2$ when $s_1(x)>s_2(x)$ for all $x\in\rmE.$
\end{example}

\begin{example}[Underdamped Langevin diffusion]\label{ex:underdamped}
	Consider the process $(Y,W)$ solving the SDE
	\begin{align}
		&\dd Y_t = W_t \dd t, \quad \dd W_t = - \nabla \pottilde(Y_t) \dd t - W_t \dd t  + \sqrt{2} \,\dd B_t,
	\end{align}
    where $\pottilde=\pot - \ln s$. 
	This process has $\mutilde(y,w)\propto \exp(-\pottilde(y) -(1/2) \lvert w\rvert^2 )$ as stationary distribution under mild assumptions. 
	
	Following \Cref{prop:timechange_diffusion} we find that the time-change $(X_t,V_t)=(Y_{r(t)},W_{r(t)})$ solves the SDE
	\begin{equation}\notag
		\begin{aligned}
			&\dd X_t = s(X_t) V_t \dd t\, \quad  \dd V_t = - (s(X_t) \nabla \pot(X_t)-\nabla s(X_t)) \dd t  -s(X_t)  V_t \dd t + \sqrt{2 s(X_t)}\, \dd B_t.
		\end{aligned}
	\end{equation}
	Assuming this SDE is well posed, we find that the solution has the measure with density $\mu(x,v) \propto \exp(-\pot(x)-(1/2)\lvert v \rvert^2)$ as stationary distribution. \citet{Leimkuhler2024} consider the well-posedness of this time-changed SDE (as well as that of \Cref{ex:overdamped}) and propose discretisation schemes based on splitting schemes.
\end{example}

\subsection{Jump processes on discrete state spaces}\label{sec:jumpproc_discrete}
Now we consider the setting of a target distribution $\mu$ with support on a discrete state space $\mathrm{E}$. The process we consider in the following example is a pure jump Markov process, i.e. it is also a PDMP with vector field $\vf=0$. Nonetheless, we find it relevant to apply our framework to a discrete process for the sake of illustration.

\begin{example}
	\cite{power2019acceleratedsamplingdiscretespaces} introduced a pure jump process that jumps at random times from a current state $x\in \mathrm{E}$ to a neighbouring state $z\in\mathcal{N}(x)$, where $\mathcal{N}(x)\subset \rmE$ denotes the set of neighbours of $x$. 
	Each transition $x \to z$ is associated with the event rate $\lambda_z(x) = g\big( \frac{\mutilde(z)}{\mutilde(x)}\big)$, for some density $\mutilde$. This process is $\mutilde$-invariant when the function $g$ satisfies the condition $g(t)=tg(1/t)$, which holds e.g. when $g(t)= 1\wedge t$ or $g(t) = \frac{t}{1+t}$ \citep{power2019acceleratedsamplingdiscretespaces}. 
	
	We now apply a time-change with speed $s$ to this process, by choosing $\mutilde\propto s \mu$ in order to obtain a $\mu$-invariant jump process.
	Applying \Cref{thm:timechange_pdmp} we find that the time-changed process is a pure jump process that jumps from $x\in\mathrm{E}$ to a neighbouring state  $z\in\mathcal{N}(x)$ with rate $s(x)g\big( \frac{\mutilde(z)}{\mutilde(x)}\big)$.
	% This process has generator 
	% \[\cL f(x) = s(x) \sum_{z\in \mathcal{N}(x)} g\left(  \frac{\mutilde(z)}{\mutilde(x)} \right)  (f(z) - f(x)).\]
	Interestingly, the time changed process can be simulated exactly, just like the original process of \cite{power2019acceleratedsamplingdiscretespaces}. 
	The case of variations of this process on a continuous state space \citep{Livingstone2022} can be handled in the same fashion.
\end{example}

\section{Convergence properties of time changed Markov processes}\label{sec:theory}

In this section we focus on convergence properties of a time changed process $X$ under suitable assumptions on the base process $Y$.

\subsection{Ergodicity of time changed processes}\label{sec:convergence_MP}

% \subsubsection*{Convergence to the invariant measure}\label{sec:ergodicity_MP}
In this section we are interested in the qualitative convergence of the law of the time-changed process, $X$, to its stationary distribution. We shall obtain conditions under which $X$ satisfies, for any $x\in\rmE$,
\begin{equation}\label{eq:geom_ergodicity}
	\lVert \mathbb{P}_x(X_t \in \cdot)-\mutilde(\cdot)\rVert_{\text{TV}} \leq \rmM(x)\,  \rho^t
\end{equation}
where $\lVert\cdot\rVert_{\text{TV}}$ is the total variation (TV) distance, $\rho\in(0,1)$ is a constant, and $\rmM:\rmE \to \R_+$. When \eqref{eq:geom_ergodicity} holds, $X$ is \emph{geometrically ergodic}, while we say it is \emph{uniformly ergodic} when $\rmM$ is a constant function. %$X$ is \emph{ergodic} when $\lVert \mathbb{P}_x(X_t \in \cdot)-\mutilde(\cdot)\rVert_{\text{TV}} \xrightarrow{t \rightarrow \infty} 0$
% We say the process is \emph{ergodic} if $\rmR(t)\to0$ as $t\to\infty$, \emph{polynomially ergodic} when $\rmR(t)$ is a polynomial, \emph{geometrically ergodic} when $\rmR(t) = \rho^t$ for some $\rho \in (0,1)$, and \emph{uniformly ergodic} when it is geometrically ergodic and $\rmM(y)\leq b$ for a finite constant $b$.  
This type of convergence can be proved under \emph{drift conditions} for the generator of the process \citep{DownMeynTweedie}.

Next we define some necessary notions of stability (see e.g. \citet{MeynTweedie_3}). 
The process $Y$ is said \emph{$\psi$-irreducible} if there exists a non-trivial measure $\psi$ such that for any measurable set $\rmB$ with $\psi(\rmB)>0$ it holds $\mathbb{E}_y[\int_0^\infty \1_{Y_t\in \rmB}\, \dd t]>0$. A set $ C $ is \emph{petite} if there exists a probability measure $\alpha$ on $(0,\infty)$ such that for all $y\in  C $
$\int_0^\infty \mathbb{P}_y(Y_t\in \cdot\,) \,\alpha(\dd t)  \geq \eta \nu(\cdot).$  The process is called \emph{aperiodic} if there exists a petite set $C$ and $T_0 \geq 0$, such that $\mathbb{P}_x(Y_t\in C )>0$ for all $t\geq T_0$ and all $x\in C .$

We will also use the notion of the weak generator of the process $Y$, defined as follows.
Let $\cD(\cLtilde)$ be a set of measurable functions $f:\rmE \rightarrow \mathbb{R}$ and $\cLtilde$ be an operator defined on $\cD(\cLtilde)$. 
The pair $( \cLtilde , \cD(\cLtilde)  )$ is the \emph{weak generator} of the process $Y$ if for any $f \in \cD(\cLtilde)$, and any starting point $x \in \mathbb{R}^d$, the process
\begin{equation}
    M_t=f(Y_t)-\int_0^t\cLtilde f(Y_s) \dd s
\end{equation}
is a martingale with respect to the natural filtration of the process. We will write $(\cL , \cD(\cL))$ to denote the weak generator of the time-changed process $X$, which is defined in the same way.

In the following assumption we state conditions on $Y$ which will be used in \Cref{thm:ergodicity_markovprocess} to obtain geometric and uniform ergodicity of $X$. 

\begin{assumption}\label{ass:generic_drift_condition_tt}
	$Y$ is a $\psi$-irreducible Markov process with generator $(\cLtilde,\cD(\cLtilde))$ and stationary distribution $\mutilde$. Moreover, $Y$ satisfies the following conditions.
    \begin{enumerate}
	   \item There exist bounded sets $ C  \subset   D $, constants $t_0>0$, $\eta>0$,  $\varepsilon>0$, and a non-trivial measure $\nu$ on $ C $ such that for all $y\in  C $ and for any measurable set $ A $ it holds that
	   \begin{equation}\notag%\label{eq:smallset_ass}
		  \int_{t_0}^{t_0+\varepsilon} \mathbb{P}_y ( Y_{t} \in  A , Y_u\in   D  \textnormal{ for all } u\in [0,t]  ) \dd t\geq \eta \nu( A ).
	   \end{equation}
       \item There exist a function $\rmV:\rmE \to [1,\infty]$, with $\rmV \in \cD(\cLtilde)$, a function $\rmW:\rmE\to\R_+$, and a constant $\gamma \geq 0$ such that for all $y \in \rmE$
	   \begin{equation}\label{eq:generic_drift}
		  \cLtilde \rmV(y) \leq - \rmW(y) \rmV(y) + \gamma \mathbbm{1}_{ C }(y),
	   \end{equation}
       where $ C $ is the same set of point $(1)$.

  % \item We have that $V \in \cD(\cL)$ and that for all $x, \in E$, we have $\cL V(x) = s(x) \cLtilde V(x)$.
    \end{enumerate}

\end{assumption}

\Cref{ass:generic_drift_condition_tt}(1) is enough to ensure that the bounded set $ C $ is petite for the time-changed process (see \Cref{lem:small_sets} for the proof). For this, we require that $Y$ satisfies a petite set condition while staying inside a bounded set during a finite time horizon. We note that for most local algorithms, such as our running example, and other constant velocity PDMPs, \Cref{ass:generic_drift_condition_tt}(1) is met. The function $\rmW$ appearing in  \Cref{ass:generic_drift_condition_tt}(2),
is indicative of the convergence of the process $Y$. In particular, $Y$ is geometrically ergodic when $\rmW(y)\geq \eta$ for some $\eta>0$, and uniformly ergodic when in addition $\rmV(y)\leq \beta$ for  $\beta<\infty$ \citep{DownMeynTweedie}. Informally, $Y$ has slower convergence when these conditions are not met. 

In order to present our results, we need the following technical definition, required to translate \eqref{eq:generic_drift} to a drift condition for the time-changed process.
\begin{definition}\label{defn:nice.set}
    We define  $\mathcal{A} \subset \cD(\cL) \cap \cD(\tilde\cL)$ to be the set of measurable functions $f:\rmE \rightarrow \mathbb{R}$ such that for all $x \in E$,
    \begin{equation}\label{defn:nice.set:1}
        \cL f(x) = s(x) \cLtilde f(x).
    \end{equation}
\end{definition}
As we discuss in \Cref{appendix:generator}, the set $\mathcal{A}$ typically contains a large class of functions, for which the generator of the time-changed process is indeed of the form $s\cLtilde$. Nonetheless, this is in general not enough to ensure that the function $\rmV$ appearing in \Cref{ass:generic_drift_condition_tt}(2) is in $\mathcal{A}$, as required to prove \Cref{thm:ergodicity_markovprocess}. Throughout this section, we shall then assume that $\rmV\in\mathcal{A}$.

Finally, we will also be making the following assumption on the time-changed process.

\begin{assumption}\label{ass:X_is_aperiodic}
    $X$ is aperiodic.%, i.e. there exists a small set $ C $ there exists $T_0$ such that $\mathbb{P}_x(Y_t\in C )>0$ for all $t\geq T_0$ and all $x\in C .$  
\end{assumption}

Aperiodicity of $X$ is crucial in order to establish its convergence to the target. We conjecture that \Cref{ass:X_is_aperiodic} holds under aperiodicity of $Y$ and some regularity conditions. Indeed, the speed function is bounded away from zero, hence the time-change cannot force the process to get stuck anywhere in the space. 
We note that the time-changed ZZP is proven to be aperiodic under mild regularity conditions on the target and the speed function \citep{Vasdekis_speedup}. We believe that similar results can be established for the other time-changed processes discussed in \Cref{sec:examples}.

We now state the main convergence result of the section.

\begin{theorem}\label{thm:ergodicity_markovprocess}
	Suppose \Cref{ass:s.integrability} holds. 
	Consider a process $Y$ that satisfies \Cref{ass:lln_base} and  \Cref{ass:generic_drift_condition_tt} for some functions $\rmV,\,\rmW$ and set $ C $. Assume further that $\rmV \in \mathcal{A}$ as in \Cref{defn:nice.set} and that $X$ satisfies \Cref{ass:X_is_aperiodic}.
	Then, $X$ is \emph{geometrically ergodic} with respect to $\mu$ if there exists $\zeta >0$ such that $\speed(x) \geq \frac{\zeta}{\rmW(x)}$ for all $x\notin  C $. In addition, $X$ is \emph{uniformly ergodic} if $\rmV$ is bounded. 
\end{theorem}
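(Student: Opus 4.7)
The plan is to verify the three standard ingredients of the continuous-time Down-Meyn-Tweedie criterion \citep{DownMeynTweedie} for the time-changed process $X$: $\psi$-irreducibility, aperiodicity together with a petite set, and a Foster-Lyapunov drift inequality for the weak generator $\cL$ of $X$. Aperiodicity of $X$ is provided by \Cref{ass:X_is_aperiodic}, while the petite set property of $C$ for $X$ is established in \Cref{lem:small_sets} starting from \Cref{ass:generic_drift_condition_tt}(1). The only genuinely new work concerns irreducibility of $X$ and the drift inequality for $\cL$.

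For irreducibility, the change of variables $u=r(t)$ (permitted by \Cref{ass:s.integrability}, which provides $s$ bounded below and strictly positive) yields for every measurable $B\subset\rmE$ and every $x\in\rmE$ the identity
\begin{equation}\notag
\PE_x\!\left[\int_0^\infty \mathbbm{1}_{X_t\in B}\,\dd t\right] \;=\; \PE_x\!\left[\int_0^\infty \frac{\mathbbm{1}_{Y_u\in B}}{s(Y_u)}\,\dd u\right].
\end{equation}
Since $1/s>0$, the two sides vanish simultaneously, and $X$ inherits $\psi$-irreducibility from $Y$ with the same irreducibility measure.

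The heart of the argument is the derivation of the drift inequality. Since $\rmV\in\mathcal{A}$, \Cref{defn:nice.set} gives the pointwise identity $\cL\rmV(x)=s(x)\,\cLtilde\rmV(x)$ for every $x\in\rmE$. Combined with \Cref{ass:generic_drift_condition_tt}(2), this produces
\begin{equation}\notag
\cL\rmV(x) \,\leq\, -s(x)\rmW(x)\rmV(x) + s(x)\gamma\,\mathbbm{1}_C(x).
\end{equation}
For $x\notin C$, the hypothesis $s(x)\geq \zeta/\rmW(x)$ immediately yields $\cL\rmV(x)\leq -\zeta\rmV(x)$. For $x\in C$, continuity of $s$ on the bounded set $C$ and boundedness of $\rmV$ on $C$ (a standard regularity requirement on Lyapunov functions, satisfied in the settings of interest) furnish a finite constant $\gamma'$ such that $\cL\rmV(x)\leq -\zeta\rmV(x)+\gamma'\,\mathbbm{1}_C(x)$ holds on all of $\rmE$. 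The continuous-time geometric ergodicity theorem of \citet{DownMeynTweedie} then delivers \eqref{eq:geom_ergodicity} with bounding function $\rmM$ proportional to $\rmV$. For the uniform ergodicity statement, boundedness of $\rmV$ renders $\rmM$ a bounded function, which is exactly the definition of uniform ergodicity.

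The main technical hurdle is precisely the passage from a drift condition for $\cLtilde$ to one for $\cL$: a priori one can only guarantee agreement of $\cL$ with $s\,\cLtilde$ on certain cores of test functions (say smooth compactly supported ones), whereas the Lyapunov function $\rmV$ is typically unbounded and will not belong to such cores. The hypothesis $\rmV\in\mathcal{A}$ is designed to bypass exactly this difficulty, and the discussion in \Cref{appendix:generator} indicates that $\mathcal{A}$ is rich enough to accommodate Lyapunov functions in the examples of interest.
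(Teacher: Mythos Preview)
Your proof is correct and follows essentially the same route as the paper: use $\rmV\in\mathcal{A}$ to obtain $\cL\rmV=s\,\cLtilde\rmV$, multiply the drift inequality of \Cref{ass:generic_drift_condition_tt}(2) by $s$, invoke \Cref{lem:small_sets} for the petite set, and conclude via Theorem~5.2 of \citet{DownMeynTweedie}. The paper's own proof is terser---it writes the single inequality $\cL\rmV(x)\leq -s(x)\rmW(x)\rmV(x)+\gamma\,\overline{s}_C\,\mathbbm{1}_C(x)$ and appeals directly to Down--Meyn--Tweedie---while you spell out the irreducibility transfer (handled separately in the paper as \Cref{prop:irreducibility}) and the absorption of the on-$C$ term into a constant, but the substance is identical.
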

\begin{proof}
	The proof is presented in \Cref{sec:proof_ergod_MP}.
\end{proof}
% Notice that, since $X$ has a fixed stationary distribution $\mu$, changing the speed $\speed$ implies a change in the stationary distribution of $Y_t$ and typically in its Lyapunov function. This can make the application of the theorem non-trivial.  \giorgos{we can probably remove this remark}
% \end{remark}

\Cref{thm:ergodicity_markovprocess} gives general conditions on the speed function to obtain a time-changed process with the wanted type of convergence. However, uniform ergodicity requires a bounded Lyapunov function $\rmV$, which is typically unavailable. The next theorem shows that for PDMPs we can obtain uniform ergodicity under an inequality of the type \eqref{eq:generic_drift} where $\rmV$ is not necessarily bounded. 
\begin{theorem}[Uniform ergodicity of a PDMP]\label{thm:uniform_ergodicity_pdmps}
	Suppose \Cref{ass:s.integrability} holds. 
	Consider a PDMP with characteristics $(\vftilde,\lambdatilde,\Qtilde)$ that satisfies Assumptions \ref{ass:lln_base}, \ref{ass:generic_drift_condition_tt} for some functions $\rmV \in \mathcal{A},\,\rmW$ and set $ C $.
	Assume that the function $\bar{\rmV}(z):= \rmV(z)\left( 1+\rmV(z) \right)^{-1} \in \mathcal{A}$. Finally, assume that \Cref{ass:X_is_aperiodic} holds. If there exists $\beta >0$ such that $s(x) \geq   \nicefrac{\beta\rmV(x)}{\rmW(x)}$ for all $x\notin  C $, then the time-changed process with speed function $s$ is \emph{uniformly ergodic}.
\end{theorem}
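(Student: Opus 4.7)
The plan is to verify the hypotheses of \Cref{thm:ergodicity_markovprocess} using $\bar\rmV$ (or a rescaled version of it) as the new Lyapunov function. Since $\bar\rmV$ is automatically bounded, this will yield uniform ergodicity, provided I can (i) derive a drift inequality for $\tilde\cL\bar\rmV$ from the one for $\tilde\cL\rmV$, and (ii) check that the assumed speed bound $s(x)\geq\beta\rmV(x)/\rmW(x)$ translates into the speed condition required by \Cref{thm:ergodicity_markovprocess} for the new drift.

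For the first step, I would write the PDMP extended generator as $\tilde\cL f(z)=\tilde\varphi(z)\cdot\nabla f(z)+\tilde\lambda(z)\int(f(z')-f(z))\tilde Q(z,\dd z')$, and apply it to $\bar\rmV=g(\rmV)$ with $g(u)=u/(1+u)$. The transport term obeys the chain rule, giving $\tilde\varphi(z)\cdot\nabla\bar\rmV(z)=g'(\rmV(z))\,\tilde\varphi(z)\cdot\nabla\rmV(z)$. For the jump term, concavity of $g$ yields $g(\rmV(z'))-g(\rmV(z))\leq g'(\rmV(z))(\rmV(z')-\rmV(z))$, so the jump part is bounded by $g'(\rmV(z))\,\tilde\lambda(z)\int(\rmV(z')-\rmV(z))\tilde Q(z,\dd z')$. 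Combining the two parts gives the clean bound $\tilde\cL\bar\rmV(z)\leq g'(\rmV(z))\,\tilde\cL\rmV(z)$. Using $g'(u)=(1+u)^{-2}$ and the identity $\rmV/(1+\rmV)^2=\bar\rmV/(1+\rmV)$, together with \Cref{ass:generic_drift_condition_tt}(2), I obtain
\begin{equation*}
\tilde\cL\bar\rmV(z)\leq -\frac{\rmW(z)}{1+\rmV(z)}\bar\rmV(z)+\frac{\gamma}{(1+\rmV(z))^2}\1_C(z),
\end{equation*}
which has the form of \Cref{ass:generic_drift_condition_tt}(2) with new rate function $\rmW_{\text{new}}(z):=\rmW(z)/(1+\rmV(z))$.

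For the second step, the speed condition needed by \Cref{thm:ergodicity_markovprocess} for the new drift is $s(z)\geq\zeta/\rmW_{\text{new}}(z)=\zeta(1+\rmV(z))/\rmW(z)$ for $z\notin C$. Since $\rmV\geq 1$ implies $1+\rmV\leq 2\rmV$, the hypothesis $s(z)\geq\beta\rmV(z)/\rmW(z)$ yields this bound with $\zeta=\beta/2$. Finally, $\bar\rmV$ takes values in $[1/2,1]$, which is bounded; after the cosmetic rescaling $W:=2\bar\rmV\in[1,2]$ (so that the convention $W\geq 1$ of \Cref{ass:generic_drift_condition_tt}(2) is satisfied, while preserving the drift and speed inequalities up to trivial constants), I can invoke \Cref{thm:ergodicity_markovprocess} with the triple $(W,\rmW_{\text{new}},C)$ in place of $(\rmV,\rmW,C)$. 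The hypothesis $\bar\rmV\in\mathcal{A}$ guarantees $\cL\bar\rmV=s\,\tilde\cL\bar\rmV$, so the drift lifts to the time-changed generator, the petite-set property of $C$ for $X$ is already available (from the lemma on small sets invoked in the proof of \Cref{thm:ergodicity_markovprocess}), and aperiodicity of $X$ holds by \Cref{ass:X_is_aperiodic}. The bounded case of \Cref{thm:ergodicity_markovprocess} then delivers uniform ergodicity.

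The main obstacle is the concavity-based control of the jump part of $\tilde\cL\bar\rmV$: one needs the integral $\tilde\lambda(z)\int\rmV(z')\tilde Q(z,\dd z')$ to make sense pointwise, but this is already implicit in the fact that $\rmV\in\cD(\tilde\cL)$. The hypothesis $\bar\rmV\in\mathcal{A}$ is what allows the chain-rule/concavity argument at the level of $\tilde\cL$ to translate cleanly to the generator $\cL$ of $X$; without it, one would need a separate verification that $\bar\rmV$ belongs to the domain of $\cL$ and that the multiplicative identity $\cL=s\tilde\cL$ holds on $\bar\rmV$.
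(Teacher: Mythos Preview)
Your proposal is correct and follows essentially the same route as the paper. The only cosmetic difference is that you bound the jump part via the concavity of $g(u)=u/(1+u)$, whereas the paper obtains the identical inequality $\bar\rmV(y)-\bar\rmV(z)\leq (1+\rmV(z))^{-2}(\rmV(y)-\rmV(z))$ by a direct two-case argument on the sign of $\rmV(y)-\rmV(z)$; both yield $\tilde\cL\bar\rmV\leq (1+\rmV)^{-2}\tilde\cL\rmV$ and then the same application of \Cref{thm:ergodicity_markovprocess}.
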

\begin{proof}
    The proof is postponed to \Cref{sec:proof_unif_erg_pdmp}.
\end{proof}

The proof of \Cref{thm:uniform_ergodicity_pdmps} relies on the fact that the bounded function $\bar{\rmV} (z)=\rmV(z)/(1+\rmV(z))$ satisfies a drift condition of the form \eqref{eq:generic_drift} for the generator of the process $Y$.
This theorem can be applied to obtain uniform ergodicity of the time-changed PDMPs discussed in \Cref{sec:PDMP_TT}, taking advantage of the theory developed by \cite{Bierkensergodicity,BPSexp_erg,BPS_Durmus}.

\subsection{Functional central limit theorem}\label{sec:FCLT}

In this section, we obtain a functional central limit theorem for the time changed process relying on \cite[Theorem 4.3]{glynn1996liapounov}. For this we assume the base process satisfies \Cref{ass:generic_drift_condition_tt} and $s$ is such that $s(y)\rmW(y)\rmV(y)\geq 1$ for all $y$. In the theorem below we show that
\begin{equation}\label{eq:Z_fctl}
	Z_n(t) = n^{-1/2} \int_0^{nt} (g(X_u)-\mu(g))\dd u 
\end{equation}
converges weakly to $\gamma_g B$ in the Skorokhod space, where $B$ is Brownian motion. An interesting result shown in \Cref{thm:FCLT} is that the asymptotic variance of the time changed process coincides up to a factor $\mu(s) $ with that of the base process corresponding to the function $\frac{g-\mu(g)}{s}$.
\begin{theorem}\label{thm:FCLT}
	Suppose \Cref{ass:s.integrability} holds. 
	Suppose the base process $Y$  satisfies  \Cref{ass:lln_base}. Suppose also that  \Cref{ass:generic_drift_condition_tt} holds for some $\rmV \in \mathcal{A}, \rmW$ and a set $ C $. Assume that the speed function $\speed$ is such that $\speed(x)\rmW(x)\rmV(x)\geq 1$ for all $x\in\rmE$. Suppose that \Cref{ass:X_is_aperiodic} holds.
	Finally, suppose $\mu(\rmV^2)<\infty$.
	Then for any $g:\rmE\to\R$ such that $\lvert g(z)\rvert \leq \speed(z)\rmW(z)\rmV(z)$, for $Z_n(t)$ as in \eqref{eq:Z_fctl} it holds that 
	\begin{equation}\label{eq:FCLT_convergence}
		Z_n \Rightarrow \gamma_g B \qquad \text{weakly as }n\to\infty
	\end{equation}
	in the Skorokhod topology $D[0,1]$.
	In particular, $\gamma_g^2 = 2 \int_\rmE \hat g(z) \overline g(z)  \mu(\dd z) <\infty$, where $\overline g(z) =g(z)-\mu(g)$ and $\hat{g}$ is the solution to the Poisson equation $\overline g = -\cL \,\hat{g}.$
	% $$\gamma_g^2 = 2\mu(s) \int \hat{g}(z) (-\cLtilde \hat g(z))\mutilde(\dd z) <\infty,$$ where 

	Assume further that $\rmW(z)\rmV(z) \geq 1$ for all $z\in\rmE$ and that $\hat g \in \mathcal{A}$. Denote as $\widetilde{\gamma}^2_{f}$ the asymptotic variance of the base process for a function $f$.
	Then it holds that
	\begin{equation}\label{eq:asymptotic_variance}
		\gamma^2_g = \gammatilde^2_{\tilde g}, %\qquad \textnormal{where } \tilde g(z) =\sqrt{\mu(s)} \,\, \frac{g(z)-\mu(g)}{s(z)}.
	\end{equation}
	where $\tilde g(z) =\sqrt{\mu(s)} \,\, \frac{g(z)-\mu(g)}{s(z)}$.
	% $\gamma^2_g = \mu(s) \gammatilde^2_{\tilde g}$, 
	% where $\widetilde{\gamma}^2_{\tilde g}$ is the asymptotic variance of the base process for the function $\tilde g(z) =\sqrt{\mu(s)} \,\, \frac{g(z)-\mu(g)}{s(z)}.$
\end{theorem}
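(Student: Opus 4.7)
The plan is to apply the functional central limit theorem of Glynn and Meyn (their Theorem 4.3) directly to the time-changed process $X$, and then leverage the identity $\cL = s\tilde\cL$ on $\mathcal{A}$ to translate the asymptotic variance into one for the base process $Y$.

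The first step is to derive a suitable Foster-Lyapunov drift condition for $X$. Since $\rmV \in \mathcal{A}$, \Cref{defn:nice.set} combined with \Cref{ass:generic_drift_condition_tt}(2) yields
\begin{equation*}
\cL \rmV(x) \,=\, s(x)\,\tilde\cL \rmV(x) \,\leq\, -\,s(x)\rmW(x)\rmV(x) \,+\, s(x)\gamma\,\1_C(x).
\end{equation*}
Since $C$ is bounded and $s$ is continuous, the term $s\gamma\1_C$ is bounded by $\gamma'\1_C$ for some finite $\gamma'$. Setting $\rmW'(x) := s(x)\rmW(x)\rmV(x)$, the assumption $\rmW'(x) \geq 1$ makes this a genuine drift condition with rate function $\rmW'$. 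Petiteness of $C$ for $X$ is supplied by \Cref{lem:small_sets} and aperiodicity by \Cref{ass:X_is_aperiodic}; integrability $\mu(\rmV^2) < \infty$ and the growth bound $|g(z)| \leq \rmW'(z)$ are exactly what is assumed. These are the ingredients needed to invoke Glynn and Meyn's FCLT, yielding the weak convergence $Z_n \Rightarrow \gamma_g B$ in $D[0,1]$, existence of a solution $\hat g$ to the Poisson equation $-\cL \hat g = \bar g$, and the representation $\gamma_g^2 = 2\int_\rmE \hat g(z)\bar g(z)\,\mu(\dd z) < \infty$.

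For the second claim, suppose $\hat g \in \mathcal{A}$. Then $\bar g = -\cL\hat g = -s\tilde\cL\hat g$, so $-\tilde\cL\hat g = \bar g/s$. By linearity, $\sqrt{\mu(s)}\,\hat g$ solves $-\tilde\cL\big(\sqrt{\mu(s)}\,\hat g\big) = \tilde g$, and a direct computation confirms $\tilde\mu(\tilde g) = \mu(s)^{-1/2}\int \bar g\,\dd\mu = 0$, so $\tilde g$ is $\tilde\mu$-centered. Under the strengthened hypothesis $\rmW(z)\rmV(z) \geq 1$, the base process $Y$ also satisfies a valid drift condition of the Glynn-Meyn form (via \Cref{ass:generic_drift_condition_tt}(2) directly, without the factor $s$), so applying the same theorem to $Y$ and the function $\tilde g$ gives
\begin{equation*}
\tilde\gamma^2_{\tilde g} \,=\, 2\int_\rmE \sqrt{\mu(s)}\,\hat g(z)\cdot\tilde g(z)\,\tilde\mu(\dd z) \,=\, 2\int_\rmE \sqrt{\mu(s)}\,\hat g(z)\cdot\frac{\sqrt{\mu(s)}\,\bar g(z)}{s(z)}\cdot\frac{s(z)}{\mu(s)}\,\mu(\dd z),
\end{equation*}
where I substituted $\tilde\mu(\dd z) = s(z)\mu(\dd z)/\mu(s)$. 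The factors of $\sqrt{\mu(s)}$ and $s$ cancel to yield $\tilde\gamma^2_{\tilde g} = 2\int \hat g\,\bar g\,\dd\mu = \gamma_g^2$, as desired.

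The main obstacle I anticipate is matching the drift condition to the precise hypotheses of Glynn-Meyn's Theorem 4.3: the allowed class of functions $g$ is determined by how the rate function in the drift condition scales, and the combination $\rmW' = s\rmW\rmV \geq 1$ with $|g|\leq \rmW'$ should provide the right integrability for the Poisson equation solution. A secondary technical point, requiring care, is verifying that $\hat g \in L^2(\mu)$ with appropriate martingale bounds; this is where $\mu(\rmV^2)<\infty$ enters. Everything else is a bookkeeping exercise in rewriting integrals against $\tilde\mu$ in terms of $\mu$.
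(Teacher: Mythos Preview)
Your proposal is correct and follows essentially the same approach as the paper: apply Glynn--Meyn's Theorem 4.3 to $X$ using the drift condition $\cL\rmV = s\tilde\cL\rmV \leq -s\rmW\rmV + \gamma'\1_C$ obtained from $\rmV\in\mathcal{A}$, then for the second part exploit $\hat g\in\mathcal{A}$ to identify $\hat g$ (up to the constant $\sqrt{\mu(s)}$) as the Poisson solution for $\tilde g$ under $\tilde\cL$, and rewrite the variance integral against $\tilde\mu$ in terms of $\mu$. The only cosmetic difference is that the paper first shows $\gamma_g^2 = \mu(s)\,\tilde\gamma^2_{\bar g/s}$ and then invokes the scaling $\tilde\gamma^2_{\alpha f} = \alpha^2\tilde\gamma^2_f$, whereas you absorb the $\sqrt{\mu(s)}$ into the Poisson solution from the start; the computations are the same.
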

\begin{proof}
	The proof is presented in \Cref{sec:proof_FCLT}.
\end{proof}

\begin{remark}
	The relation \eqref{eq:asymptotic_variance} is quite similar to the expression of the variance of self-normalised IS. Indeed, given an observable $g$, the asymptotic variance of self-normalised IS with weights $s^{-1}$ is approximately equal to 
	\begin{equation}
		\gamma^{2}_{g,snis} = (\mu(s))^2\,  \PE_{\mutilde}\left[ \frac{(g(Y)-\mu(g))^2}{(s(Y))^2}\right],
	\end{equation}
	(see e.g.\citep{mcbook}), which is essentially the asymptotic variance of the Monte Carlo estimator for a distribution $\mutilde$ and observable $\mu(s)  \frac{g(y)-\mu(g)}{s(y)}$.
	This is clearly similar to \eqref{eq:asymptotic_variance}, which has the same interpretation, but where the asymptotic variance is in terms of that of a base process with target distribution $\mutilde$.
	We further note that given an observable $g$, the self-normalised IS has optimal proposal distribution $\mutilde$, constructed by choosing inverse weights $s(x) = \lvert g(x)-\mu(g)\rvert$. Given the relation between time-change and IS, it is reasonable to expect that a similar speed will be optimal in the setting of time-changes. Naturally this choice of $s$ depends on the unknown, $\mu(g)$ and is of limited applicability.
\end{remark}

\subsection{Application to the running example}
Here we apply the results of the previous two sections to our running example. We first focus on the qualitative convergence properties of the time-changed ZZP.

We work under the following assumption on the target distribution $\mu$ and on the speed function $s$. 
\begin{assumption}\label{ass:ZZ_growth_proposal}
	Let $\mu(x,v) = Z^{-1} \exp(-\pot(x)) 2^{-d}$ for $(x,v)\in\R^d\times \{\pm 1\}^d$ and let $s\in\mathcal{C}^2(\R^d)$ be a speed function satisfying \Cref{ass:s.integrability}. Assume that the function $\rmV$, defined in Equation \eqref{Lyapunov.for.ZZ} in \Cref{appendix:ZZproofs}, is an element of $\mathcal{A}$ as in \Cref{defn:nice.set}.
	Assume further that the following conditions hold.
	\begin{enumerate}
		\item Let $\pottilde(x) = \pot(x) -\ln s(x)$. Then, $\pottilde \in \mathcal{C}^2(\mathbb{R}^d)$ and
		\begin{equation}\label{eq:growth_zzp}
			\lim_{\lvert x\rvert \to \infty} \frac{\max(1,\lVert \nabla^2 \pottilde(x) \rVert)}{\lvert \nabla \pottilde(x) \rvert} = 0, \qquad  \lim_{\lvert x \rvert \to \infty} \frac{\lvert\nabla \pottilde(x)\rvert}{ \pottilde(x)} = 0.
		\end{equation}
		\item For $i=1,\dots,d$, the refreshment rates $\gammatilde_i:\R^d\to \R_+$ are continuous and such that $ 0 < \gammatilde_i(x) \leq  \overline{\gamma}<\infty$ for all $x\in\R^d$.
		% \item For $i=1,\dots,d$, the refreshment rates $\gammatilde_i:\R^d\to \R_+$ are continuous and such that $ 0 < \gammatilde_i(x)/s(x) \leq  \overline{\gamma}<\infty$ for all $x\in\R^d$.
	\end{enumerate}
\end{assumption}
%\Cref{ass:ZZ_growth_proposal} guarantees that \eqref{eq:generic_drift} holds with $W(x,v)=\eta>0$ for some $\eta>0$ for the standard ZZP with invariant distribution $\mutilde$ \citep[Lemma 11]{Bierkensergodicity}, and also that a quantitative small set condition for the ZZP holds \citep[Lemma 4.1]{bertazzi2020adaptive}. 
Under this assumption we have the following result. 
\begin{proposition}[Ergodicity of the time-changed ZZP]\label{thm:ergodicity_ZZP}
	Suppose \Cref{ass:s.integrability}, and \Cref{ass:ZZ_growth_proposal} hold. Then, the time-changed ZZP with speed $s$ and refreshment rates $\gamma_i= s \gammatilde_i$ for any $i=1\dots,d$ is \emph{geometrically ergodic}. Moreover, if $s(x) = \exp(\beta\pot(x))$ for $\beta\in(0,1)$, then $(X,V)$ is \emph{uniformly ergodic}.
\end{proposition}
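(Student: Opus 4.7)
The strategy is to apply \Cref{thm:ergodicity_markovprocess} for the geometric part and \Cref{thm:uniform_ergodicity_pdmps} for the uniform part. In both cases I take the base process $Y$ to be the standard ZZP targeting $\mutilde \propto \exp(-\pottilde)$ with $\pottilde = \pot - \ln s$ and refreshment rates $\gammatilde_i$; by \Cref{prop:suzz} the corresponding time-change with speed $s$ is precisely the process in the statement, and by \Cref{thm:lln_timechange} it has $\mu$ as stationary distribution.

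The four shared hypotheses to verify are the LLN (\Cref{ass:lln_base}) for $Y$, the petite-set condition \Cref{ass:generic_drift_condition_tt}(1), the drift inequality \Cref{ass:generic_drift_condition_tt}(2), and aperiodicity of $X$ (\Cref{ass:X_is_aperiodic}). The LLN follows from \citet{Bierkensergodicity}, since \Cref{ass:ZZ_growth_proposal} implies the growth condition \Cref{ass:growth_ergo_ZZ} on $\pottilde$. The petite-set property is standard for constant-velocity PDMPs with $\mathcal{C}^2$ switching intensities and strictly positive refreshments, obtained by concatenating pieces of deterministic flow with well-timed velocity switches inside a slightly enlarged ball; this mechanism is already used in \citet{Bierkensergodicity}. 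The drift inequality is provided by the Lyapunov function $\rmV$ of \eqref{Lyapunov.for.ZZ}, built as in \cite{Bierkensergodicity}: it satisfies $\cLtilde \rmV \leq -\rmW \rmV + \gamma \mathbbm{1}_C$ with $\rmW(x)$ behaving like $|\nabla \pottilde(x)|$ as $|x|\to\infty$. Since $\rmV \in \mathcal{A}$ by \Cref{ass:ZZ_growth_proposal}, the identity $\cL \rmV = s\, \cLtilde \rmV$ holds pointwise. Aperiodicity of the time-changed ZZP under these assumptions is established in \citet{Vasdekis_speedup}.

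For the geometric part, \Cref{thm:ergodicity_markovprocess} requires $s(x) \geq \zeta/\rmW(x)$ for $x \notin C$. Because $s \geq \underline s > 0$ by \Cref{ass:s.integrability} and the growth condition \eqref{eq:growth_zzp} forces $\rmW(x) \to \infty$ as $|x|\to\infty$, enlarging $C$ if necessary makes $\inf_{x \notin C} \rmW(x)$ strictly positive, and the inequality holds with $\zeta := \underline s \cdot \inf_{x \notin C} \rmW(x)$.

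For the uniform part with $s(x) = \exp(\beta \pot(x))$, $\beta \in (0,1)$, I would invoke \Cref{thm:uniform_ergodicity_pdmps}. The membership $\bar{\rmV} := \rmV/(1+\rmV) \in \mathcal{A}$ is checked by the same regularity argument that puts $\rmV$ in $\mathcal{A}$. The nontrivial condition is $s(x) \geq \beta \rmV(x)/\rmW(x)$ on $C^c$. Here $\pottilde = (1-\beta)\pot$ and the Lyapunov $\rmV$ from \eqref{Lyapunov.for.ZZ} contains a free exponential parameter $\xi > 0$ such that, modulo bounded velocity-dependent factors, $\rmV(x,v) \asymp \exp(\xi(1-\beta)\pot(x))$ and $\rmW(x) \asymp (1-\beta)|\nabla \pot(x)|$. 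Choosing $\xi \in (0, \beta/(1-\beta))$, the ratio $\rmV/\rmW$ is dominated by $s(x) = \exp(\beta \pot(x))$ once $|x|$ is large enough, using that $|\nabla \pot|$ cannot vanish in the tails by \eqref{eq:growth_zzp}. The main obstacle I anticipate is the careful bookkeeping of the velocity-dependent prefactors and cross-terms in $\rmV$ when turning the leading-order asymptotics above into a uniform pointwise bound on $C^c$; this is where the explicit form of $\rmV$ from the appendix, rather than just its asymptotic behaviour, becomes indispensable.
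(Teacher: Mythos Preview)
Your proposal is correct and follows the same overall route as the paper: apply \Cref{thm:ergodicity_markovprocess} for geometric ergodicity and \Cref{thm:uniform_ergodicity_pdmps} for uniform ergodicity, verifying the shared hypotheses via \citet{Bierkensergodicity} for the LLN and the Lyapunov function, and \citet{Vasdekis_speedup} for aperiodicity.

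There is one notable simplification you are missing. You take $\rmW(x)$ to behave like $|\nabla\pottilde(x)|$, hence growing to infinity, and then argue that $s \geq \underline s \geq \zeta/\rmW$ on $C^c$ after enlarging $C$. The paper instead invokes Lemma~11 of \citet{Bierkensergodicity} in the form $\cLtilde \rmV \leq -\eta\,\rmV + \gamma \mathbbm{1}_C$ with a \emph{constant} $\rmW \equiv \eta > 0$. With $\rmW$ constant, the geometric condition $s \geq \zeta/\rmW$ is immediate from $s \geq \underline s$, and the uniform condition $s \geq \beta' \rmV/\rmW$ reduces simply to $\exp(\beta\pot(x)) \geq b\, \rmV_\beta(x,v)$ outside $C$. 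This is then dispatched exactly as you do, by choosing the free exponential parameter $\alpha$ (your $\xi$) in $\rmV_\beta$ so that $\alpha(1-\beta) < \beta$, and noting via \eqref{eq:growth_zzp} that the velocity-dependent terms $\phi(v_i(1-\beta)\partial_i\pot(x))$ are subdominant. Your growing $\rmW$ is not wrong---such a sharper drift can indeed be extracted---but it introduces the factor $|\nabla\pot(x)|^{-1}$ into the comparison and forces the extra bookkeeping you flag at the end as ``the main obstacle''. Taking $\rmW$ constant removes that obstacle entirely. For the petite-set condition, the paper cites \cite[Lemma 4.1]{bertazzi2020adaptive} rather than \citet{Bierkensergodicity}, since the former directly yields the time-uniform smallness needed for \Cref{ass:generic_drift_condition_tt}(1).
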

\begin{proof}
	The proof is postponed to \Cref{sec:proof_ergo_ZZP}.
\end{proof}
\begin{remark}
	Notice that for the choice of speed function $s(x) = \exp(\beta\pot(x))$ we have $\pottilde =(1-\beta) \pot$. Therefore,  the growth conditions in \eqref{eq:growth_zzp} are satisfied if and only if the same conditions are satisfied with $\pot$ instead of $\pottilde$.
\end{remark}

Next we obtain a FCLT for the time-changed ZZP taking advantage of \Cref{thm:FCLT}. We shall focus on test-functions that depend only on the position vector, since this is typically the case of interest in the sampling literature.
\begin{proposition}[FCLT for the time-changed ZZP]\label{thm:FCLT_ZZP}
	Suppose \Cref{ass:s.integrability}, \Cref{ass:lln_base}, as well as \Cref{ass:ZZ_growth_proposal} hold. Let $(X,V)$ be the time-changed ZZP with speed $s$, and refreshment rates $\gamma_i= s \gammatilde_i$ for any $i=1\dots,d$. 
	Then, \eqref{eq:FCLT_convergence} and \eqref{eq:asymptotic_variance} hold for any $g:\R^d\to \R$ such that $\lvert g(x)\rvert \leq s(x) \exp(\beta\pot(x))$ for some $\beta\in(0,1)$.
\end{proposition}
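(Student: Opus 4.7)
The plan is to apply \Cref{thm:FCLT} to the time-changed ZZP, with base process equal to the standard ZZP targeting $\mutilde\propto\exp(-\pottilde)$ for $\pottilde=\pot-\ln s$. The proof of \Cref{thm:ergodicity_ZZP} (see \Cref{sec:proof_ergo_ZZP}, building on \cite{Bierkensergodicity,Vasdekis_speedup}) already establishes \Cref{ass:generic_drift_condition_tt} for the base ZZP, providing a Lyapunov pair $(\rmV,\rmW)$ with $\rmV$ given by equation (Lyapunov.for.ZZ) in \Cref{appendix:ZZproofs}. By \Cref{ass:ZZ_growth_proposal}, $\rmV\in\mathcal{A}$. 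The petite-set part \Cref{ass:generic_drift_condition_tt}(1) is standard for constant-velocity PDMPs, \Cref{ass:lln_base} is a standing hypothesis, and aperiodicity \Cref{ass:X_is_aperiodic} of the time-changed ZZP is established in \cite{Vasdekis_speedup}.

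The remaining work is to verify the three quantitative inputs of \Cref{thm:FCLT}. For $\mu(\rmV^2)<\infty$, the explicit form of (Lyapunov.for.ZZ) yields $\rmV(x,v)\lesssim\exp(\alpha\pottilde(x))$ for a tunable parameter $\alpha$, hence
\begin{equation*}
\mu(\rmV^2)\lesssim\int_{\R^d}s(x)^{-2\alpha}\exp\big(-(1-2\alpha)\pot(x)\big)\,\dd x,
\end{equation*}
which is finite provided $\alpha<1/2$, using the super-logarithmic growth of $\pot$ forced by \eqref{eq:growth_zzp} together with $s\geq\underline{s}$. For the pointwise bound $s(x)\rmW(x)\rmV(x)\geq 1$, one uses $\rmV\geq 1$ together with the strictly positive refreshment contribution $\gamma_i=s\gammatilde_i>0$ to keep $s\rmW$ positive, and rescales $\rmV$ by a constant if necessary. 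To translate the growth assumption $|g(x)|\leq s(x)\exp(\beta\pot(x))$ into $|g|\leq s\rmW\rmV$, one combines this with the matching lower bound $\rmV\gtrsim\exp(\alpha\pottilde)=s^{-\alpha}\exp(\alpha\pot)$, giving $s\rmW\rmV\gtrsim s^{1-\alpha}\exp(\alpha\pot)$, which dominates $s\exp(\beta\pot)$ whenever $\alpha>\beta$. This yields the FCLT \eqref{eq:FCLT_convergence}.

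For the asymptotic-variance identity \eqref{eq:asymptotic_variance}, the extra hypothesis $\rmW\rmV\geq 1$ follows by the same arguments, and $\hat g\in\mathcal{A}$ follows from the regularity of $s$ and $\pot$ together with the representation of $\hat g$ as a conditional expectation along the piecewise-linear ZZ paths, which transfers $\mathcal{C}^1$ smoothness to $\hat g$ and ensures $\cL\hat g=s\cLtilde\hat g$ pointwise. The main technical obstacle is extending the admissible exponent $\alpha$ to cover all $\beta\in(0,1)$: the naive exponential Lyapunov only delivers $\alpha<1/2$, so for $\beta\in[1/2,1)$ a refined Lyapunov function with velocity-dependent exponent (in the spirit of \cite{Vasdekis_speedup}) is required to push $\alpha$ arbitrarily close to $1$ while preserving $\mu(\rmV^2)<\infty$. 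This calibration, encoded in the precise form of (Lyapunov.for.ZZ) and already exploited in \Cref{sec:proof_ergo_ZZP}, is the delicate part of the argument.
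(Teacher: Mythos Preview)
Your approach is exactly the paper's: apply \Cref{thm:FCLT} with the Lyapunov data established in the proof of \Cref{thm:ergodicity_ZZP}. The paper's own proof is a single sentence deferring to \Cref{sec:proof_ergo_ZZP}, so your outline already contains more detail than the original.

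One comment on your extra analysis: since the drift condition from \cite{Bierkensergodicity} gives a \emph{constant} $\rmW(z)=\eta>0$, the conditions $s\rmW\rmV\ge 1$ and $\rmW\rmV\ge 1$ are immediate (after rescaling $\rmV$ by a constant), so your digression about refreshment rates keeping $s\rmW$ positive is unnecessary. The tension you identify between $\mu(\rmV^2)<\infty$ (which pushes $\alpha<1/2$) and the test-function bound $|g|\le s\eta\rmV$ for $\beta$ close to $1$ is a genuine technical point that the paper does not explicitly resolve either; your suggestion of a refined Lyapunov construction is plausible but speculative, and likewise your argument for $\hat g\in\mathcal{A}$ via path-representation is heuristic. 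These are not gaps relative to the paper's proof, which simply does not engage with them.
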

\begin{proof}
	The result follows by the same arguments used in the proof of \Cref{thm:ergodicity_ZZP} (see \Cref{sec:proof_ergo_ZZP}). 
\end{proof}

\subsection{Convergence of time-changed Markov jump processes}\label{sec:conv.mjp}
We now specialise the main results of \Cref{sec:convergence_MP,sec:FCLT} to the jump processes described  as the third alternative of \Cref{sec:estimate_expectations}. These are based on applying a time-change with speed $s$ to a jump process $Y$ with rate one and kernel $\Qtilde$ which is $\mutilde$-stationary. 
Here we relate the convergence properties of the time-changed process $X$ to those of $\Qtilde$ and its associated Markov chain $(\Ybar_n)_{n\geq 0}$.
We work on the following assumption.
\begin{assumption}\label{ass:geoerg_discretetime}
	Consider a transition kernel $\Qtilde$ and an associated Markov chain $(\Ybar_n)_{n\geq 0}$. The following conditions are satisfied.
	\begin{enumerate}
        \item For all $f\in L^1(\mutilde)$, it holds that $N^{-1} \sum_{n=1}^{N} f(\Ybar_n) \to \mutilde(f)$ almost surely.
        %\item $\Qtilde$ is $\psi$-irreducible: there exists non-trivial measure $\psi$ such that $\sum_{n=0}^{\infty}\Qtilde(x, A) >0$ for all $x\in\rmE$ and all measurable sets $A$ such that $\psi(A)>0.$
        \item There exists a bounded set $ C $ such that for some $n_*\in\N$, $c>0$, a non-trivial measure $\nu$, and a bounded set $  D \supset  C $, such that
		\[\mathbb{P}_y(\Ybar_{n_*}\in \cdot, \Ybar_i\in   D \textnormal{ for all } i=1,\dots,n_*) \geq c\nu(\cdot), \quad \textnormal{for all $y\in  C $. }\]
		\item There exists functions $\rmV:\rmE\to [1,\infty)$, with $\rmV \in \mathcal{A}$ and $\rmW:\rmE\to(0,1]$ and a constant $\eta\geq 0$ such that
		\begin{equation}\label{eq:drift_dicretetime}
			% \Qtilde V(z) \leq (1-\rho) \rmV(z) + \eta\1_{ C }(z).
            \Qtilde V(y) \leq \rmW(y) \rmV(y) + \eta\1_{ C }(y),
		\end{equation}
        where $ C $ is the same set of (2).
        %\item For all $g\in L^1(\mutilde)$, it holds that $N^{-1} \sum_{n=1}^{n} f(\Ybar_n) \to \mutilde(f)$ almost surely.
	\end{enumerate}
\end{assumption}
\Cref{ass:geoerg_discretetime}(2) is needed to ensure that the set $ C $ is petite for the jump process $X$ and can be verified in most cases of interest. \Cref{ass:geoerg_discretetime}(3) is instead used to obtain a drift condition for $X$. Notice that the case $\rmW(y) \leq \beta \in (0,1)$ corresponds to the drift condition ensuring geometric ergodicity of the Markov chain $(\Ybar_n)_{n\geq 0}$.

We are now ready to state our result on the geometric and uniform ergodicity of the process.
\begin{theorem}\label{prop:ergodicity_jumpproc}
	Suppose \Cref{ass:s.integrability} holds. Let $\Qtilde$ be a transition kernel that satisfies \Cref{ass:geoerg_discretetime} for some functions $\rmV \in \mathcal{A},\rmW$ and a set $ C $. 
    % Let $\bar s$ be such that $s(z)\leq \bar s$ for all $z\in C $. 
    % Moreover suppose that for some constant $b'>0$, $s(z) \geq  b'V(z)$ for all $ z \notin  C $. 
    Then for the time-changed jump process with rate $s$ and kernel $\Qtilde$, the following hold.
    \begin{enumerate}
        \item the process is \emph{geometrically ergodic} with respect to $\mu$ if there exists a constant $\beta>0$ such that $s(z) \geq \nicefrac{\beta}{(1-\rmW(z))}$ for all $z\notin  C .$
        \item the process is \emph{uniformly ergodic} with respect to $\mu$ if there exists a constant $\beta>0$ such that $s(z) \geq \nicefrac{\beta \rmV(z)}{(1-\rmW(z))}$ for all $z\notin  C .$
    \end{enumerate}
    %(s,\Qtilde)$ is uniformly ergodic.
\end{theorem}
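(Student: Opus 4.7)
The plan is to reduce the statement to \Cref{thm:ergodicity_markovprocess} and \Cref{thm:uniform_ergodicity_pdmps}, applied to the base process $Y$, which is the pure jump process with constant rate $\lambdatilde \equiv 1$ and kernel $\Qtilde$. By \Cref{thm:timechange_pdmp} its time-change with speed $s$ is again a jump process, with rate $s$ and the same kernel $\Qtilde$, so the convergence theorems of \Cref{sec:convergence_MP} are indeed applicable.

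First, since $\lambdatilde \equiv 1$, the weak generator of $Y$ acts on $\mathcal{A}$ as $\cLtilde f = \Qtilde f - f$, and substituting $\rmV$ the drift inequality of \Cref{ass:geoerg_discretetime}(3) becomes
\begin{equation*}
    \cLtilde \rmV(y) = \Qtilde \rmV(y) - \rmV(y) \leq -(1 - \rmW(y)) \rmV(y) + \eta \1_{ C }(y).
\end{equation*}
This is exactly \Cref{ass:generic_drift_condition_tt}(2) with $1 - \rmW$ in place of $\rmW$. I would next verify \Cref{ass:generic_drift_condition_tt}(1). The crucial observation is that, because the rate of $Y$ equals one, the inter-jump times are i.i.d.\ Exp(1) and independent of the embedded chain $(\Ybar_n)_{n\geq 0}$; writing $T_n$ for the $n$-th jump time, on the event $\{T_{n_*} \leq t < T_{n_*+1}\} \cap \{\Ybar_i \in  D \text{ for all } 1 \leq i \leq n_*\}$ the path satisfies $Y_u \in  D $ for all $u \leq t$ and $Y_t = \Ybar_{n_*}$. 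Combining this with independence and \Cref{ass:geoerg_discretetime}(2) gives
\begin{equation*}
    \mathbb{P}_y\bigl( Y_t \in  A ,\; Y_u \in  D  \text{ for all } u \in [0,t] \bigr) \geq c \nu( A )\, \mathbb{P}(T_{n_*} \leq t < T_{n_*+1}),
\end{equation*}
and choosing $t_0, \varepsilon > 0$ so that the time integral of the last factor on $[t_0, t_0 + \varepsilon]$ is strictly positive yields \Cref{ass:generic_drift_condition_tt}(1). The LLN \Cref{ass:lln_base} for $Y$ follows from \Cref{ass:geoerg_discretetime}(1) by writing ergodic averages in time as normalised sums $N_T^{-1} \sum_{n \leq N_T} f(\Ybar_n) \tau_n$ and invoking the SLLN for both the chain and the renewal counting process of jumps.

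With these verifications in place, part (1) is immediate from \Cref{thm:ergodicity_markovprocess}: the condition $s(z) \geq \nicefrac{\beta}{(1-\rmW(z))}$ outside $ C $ is the drift-speed hypothesis of that theorem with $\rmW$ replaced by $1-\rmW$. For part (2), $\rmV$ is in general unbounded, so one cannot use the uniform ergodicity clause of \Cref{thm:ergodicity_markovprocess}; instead I would apply \Cref{thm:uniform_ergodicity_pdmps} by viewing $Y$ as a degenerate PDMP with $\vftilde \equiv 0$. The extra requirement $\bar{\rmV} := \rmV/(1+\rmV) \in \mathcal{A}$ is automatic, since for pure jump processes the identity $\cL f = s \cLtilde f$ only demands that $\Qtilde f$ be well-defined, which holds trivially for the bounded function $\bar{\rmV}$. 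Applying \Cref{thm:uniform_ergodicity_pdmps} with $1-\rmW$ in place of $\rmW$ then yields the condition $s(z) \geq \nicefrac{\beta \rmV(z)}{(1-\rmW(z))}$ of part (2).

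The principal technical obstacle is the translation of the discrete-time small set in \Cref{ass:geoerg_discretetime}(2) to the continuous-time petite set of \Cref{ass:generic_drift_condition_tt}(1), which requires careful bookkeeping of the joint event that the chain does not exit $ D $ during an entire random time interval of $Y$. A secondary point is that the quoted theorems also require aperiodicity of $X$ (\Cref{ass:X_is_aperiodic}); this should be imposed in the hypotheses or verified under mild regularity of $\Qtilde$, noting that for a pure jump process the speed $s$ is bounded below and thus cannot introduce pathologies.
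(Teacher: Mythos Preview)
Your proposal is correct and mirrors the paper's own proof almost exactly: reduce to \Cref{thm:ergodicity_markovprocess} and \Cref{thm:uniform_ergodicity_pdmps} after translating the discrete-time drift and small-set conditions of \Cref{ass:geoerg_discretetime} into \Cref{ass:generic_drift_condition_tt} for the unit-rate jump process $Y$, with $1-\rmW$ playing the role of $\rmW$. The only substantive addition in the paper is that aperiodicity of $X$ is not assumed but verified directly: for $x\in C$ one has $\PP_x(X_t\in C)\geq \PP_x(t<T_1)\geq \exp\bigl(-t\sup_{z\in C}s(z)\bigr)>0$, which is the simple argument your final paragraph anticipates.
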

\begin{proof}
   The proof is presented in \Cref{sec:proof_ergo_jumpproc}.
\end{proof}

We now give an analogue of \Cref{thm:FCLT}.
\begin{proposition}
    Suppose \Cref{ass:s.integrability} holds.
    Suppose $Y$ is a jump process with rate $1$ and jump kernel $\Qtilde$. 
    Suppose also that \Cref{ass:geoerg_discretetime} holds for some $\rmV \in \mathcal{A}, \rmW$ and a set $ C $. Assume that the speed function $\speed$ is such that $\speed(z)(1-\rmW(z))\rmV(z)\geq 1$ for all $z\in\rmE$.
    Finally, suppose $\mu(\rmV^2)<\infty$.
    Then, the FLCT result of \eqref{eq:FCLT_convergence} holds for any $g:\rmE\to\R$ such that $\lvert g(z)\rvert \leq \speed(z)(1-\rmW(z))\rmV(z)$, for $Z_n(t)$ as in \eqref{eq:Z_fctl}.
       
    Assuming also that $(1-\rmW(z))\rmV(z) \geq 1$ for all $z\in\rmE$, the asymptotic variance satisfies $\gamma^2_g = \gammatilde^2_{\tilde g},$ where $\tilde g(z) =\sqrt{\mu(s)} \,\, \frac{g(z)-\mu(g)}{s(z)}$ and $ \gammatilde^2_{f}$ is the asymptotic variance of the kernel $\Qtilde$.
\end{proposition}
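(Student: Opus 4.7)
The plan is to recognize that the continuous-time jump process $Y$ with rate $1$ and kernel $\Qtilde$ falls squarely within the scope of \Cref{thm:FCLT}, and then to translate Assumption \ref{ass:geoerg_discretetime} into the hypotheses of that theorem. The weak generator of $Y$ acts on bounded measurable $f$ by $\cLtilde f(y) = \Qtilde f(y) - f(y)$, since $Y$ is a pure-jump process with unit intensity. Subtracting $\rmV$ from both sides of the discrete-time drift inequality \eqref{eq:drift_dicretetime} gives the continuous-time drift
\begin{equation}\notag
\cLtilde \rmV(y) = \Qtilde \rmV(y) - \rmV(y) \leq -(1-\rmW(y))\rmV(y) + \eta\1_{ C }(y),
\end{equation}
which is exactly the form required by \Cref{ass:generic_drift_condition_tt}(2) with the continuous-time rate function $\rmW^{\mathrm{ct}}(y) := 1-\rmW(y) \in [0,1)$. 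With this identification, the hypothesis $s(z)(1-\rmW(z))\rmV(z) \geq 1$ matches the condition $s\rmW^{\mathrm{ct}}\rmV \geq 1$ of \Cref{thm:FCLT}, and the admissible observables $|g(z)| \leq s(z)(1-\rmW(z))\rmV(z)$ match the corresponding class. The final asymptotic-variance identity is then literally the conclusion of \Cref{thm:FCLT} applied with base process $Y$.

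The remaining task is to verify that the other hypotheses of \Cref{thm:FCLT} hold. For \Cref{ass:lln_base}, one decomposes $\frac{1}{T}\int_0^T f(Y_u)\dd u = \frac{1}{T}\sum_{n=0}^{N_T} f(\Ybar_n)(T_{n+1}\wedge T - T_n)$, where the inter-arrival times are i.i.d.\ $\mathrm{Exp}(1)$ and independent of the embedded chain $(\Ybar_n)$; by SLLN for the Poisson process and the LLN assumed in \Cref{ass:geoerg_discretetime}(1) (extended from positive to integrable functions by the usual decomposition), one obtains \eqref{lln.base:1}. For the petite-set condition \Cref{ass:generic_drift_condition_tt}(1), fix $t_0=n_*$ and a small $\varepsilon>0$; conditioning on the event $\{N_t = n_*\}$ and using independence of the holding times from the embedded chain,
\begin{equation}\notag
\PP_y\bigl(Y_t \in  A ,\; Y_u \in  D \;\forall u \in [0,t]\bigr) \;\geq\; \PP(N_t = n_*)\, \PP_y\bigl(\Ybar_{n_*}\in A ,\; \Ybar_i \in  D \;\forall i\leq n_*\bigr) \;\geq\; e^{-t}\tfrac{t^{n_*}}{n_*!}\, c\nu( A ),
\end{equation}
which after integration in $t \in [n_*, n_*+\varepsilon]$ yields the required minorisation. \Cref{ass:X_is_aperiodic} follows from the observation that $X$ is itself a jump process with rate $s$ and kernel $\Qtilde$, so for any $y\in  C $ and $t\geq 0$, $\PP_y(X_t\in  C ) \geq \PP_y(\text{no jump in }[0,t]) = e^{-s(y)t} > 0$, hence $T_0=0$ works.

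Having checked all the hypotheses, \Cref{thm:FCLT} immediately yields the stated FCLT and the variance identity $\gamma_g^2 = \gammatilde^2_{\tilde g}$ with $\tilde g = \sqrt{\mu(s)}(g-\mu(g))/s$, where $\gammatilde^2$ is interpreted as the asymptotic variance of the continuous-time base process, which for the unit-rate jump process driven by $\Qtilde$ reduces via the Poisson equation $(I-\Qtilde)\hat g = g - \mutilde(g)$ to a spectral quantity expressible in terms of $\Qtilde$ alone. The main technical obstacle is the handling of the set $\mathcal{A}$ of \Cref{defn:nice.set}: one must argue, analogously to \Cref{sec:convergence_MP}, that $\rmV \in \mathcal{A}$ transfers through the subtraction $\cLtilde = \Qtilde - I$ (which is straightforward since both $\Qtilde \rmV$ and $\rmV$ are well defined on all of $\rmE$), and that the solution $\hat g$ of the Poisson equation lies in $\mathcal{A}$ as required by \Cref{thm:FCLT} for the variance identity; this last point is the delicate step and should be addressed by the same regularisation arguments used in the proof of \Cref{thm:FCLT}.
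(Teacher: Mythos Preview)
Your proposal is correct and follows essentially the same route as the paper: apply \Cref{thm:FCLT} after translating \Cref{ass:geoerg_discretetime} into \Cref{ass:generic_drift_condition_tt} via $\cLtilde\rmV=\Qtilde\rmV-\rmV$ and verifying the remaining hypotheses exactly as in the proof of \Cref{prop:ergodicity_jumpproc} (petite set via conditioning on the number of jumps, aperiodicity via the no-jump probability). The only nuance is that in the paper $\gammatilde^2_f$ denotes the asymptotic variance of the discrete-time kernel $\Qtilde$, and the identification with the continuous-time variance of $Y$ is obtained directly from the \citet{glynn1996liapounov} expression rather than through the Poisson-equation reduction you sketch.
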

\begin{proof}
	The proof is based on applying \Cref{thm:FCLT} and follows the same ideas of the proof of \Cref{prop:ergodicity_jumpproc}. The final result that $\gamma^2_g = \gammatilde^2_{\tilde g}$ is a simple consequence of the expression of the asymptotic variance of a kernel $\Qtilde$ obtained by \cite{glynn1996liapounov}.
\end{proof}

\section{Connection to space transformations}\label{sec:spacetransformations}

\begin{figure}[t]
\centering
	\begin{subfigure}[t]{0.47\textwidth}
		%\centering
		\includegraphics[width=\textwidth]{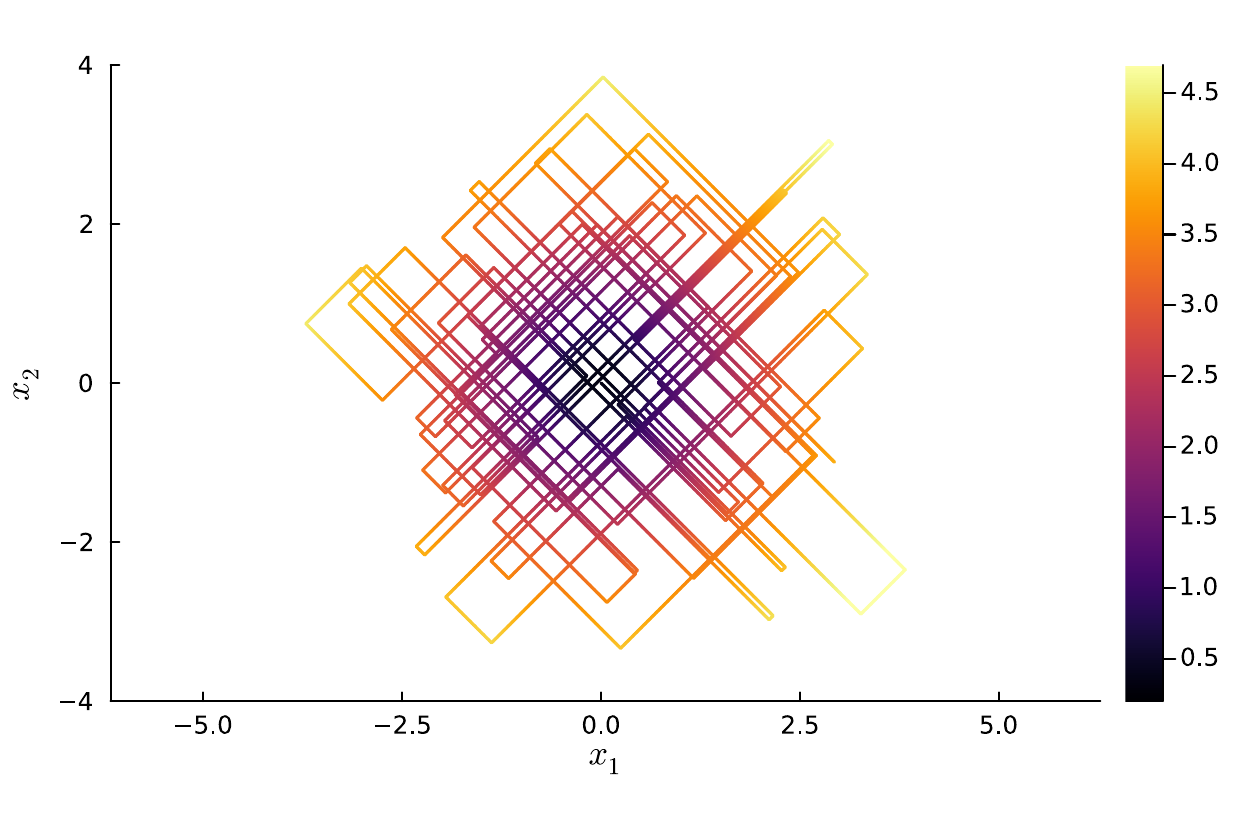}
        \vspace{-20pt}
		\caption{Time-changed ZZP with speed $s(x) = (1+\lvert x\rvert^2)^{(d+1)/2}$.}
        \label{fig:timechanged}
	\end{subfigure}
    \hfill
    \begin{subfigure}[t]{0.47\textwidth}
		% \centering
		\includegraphics[width=\textwidth]{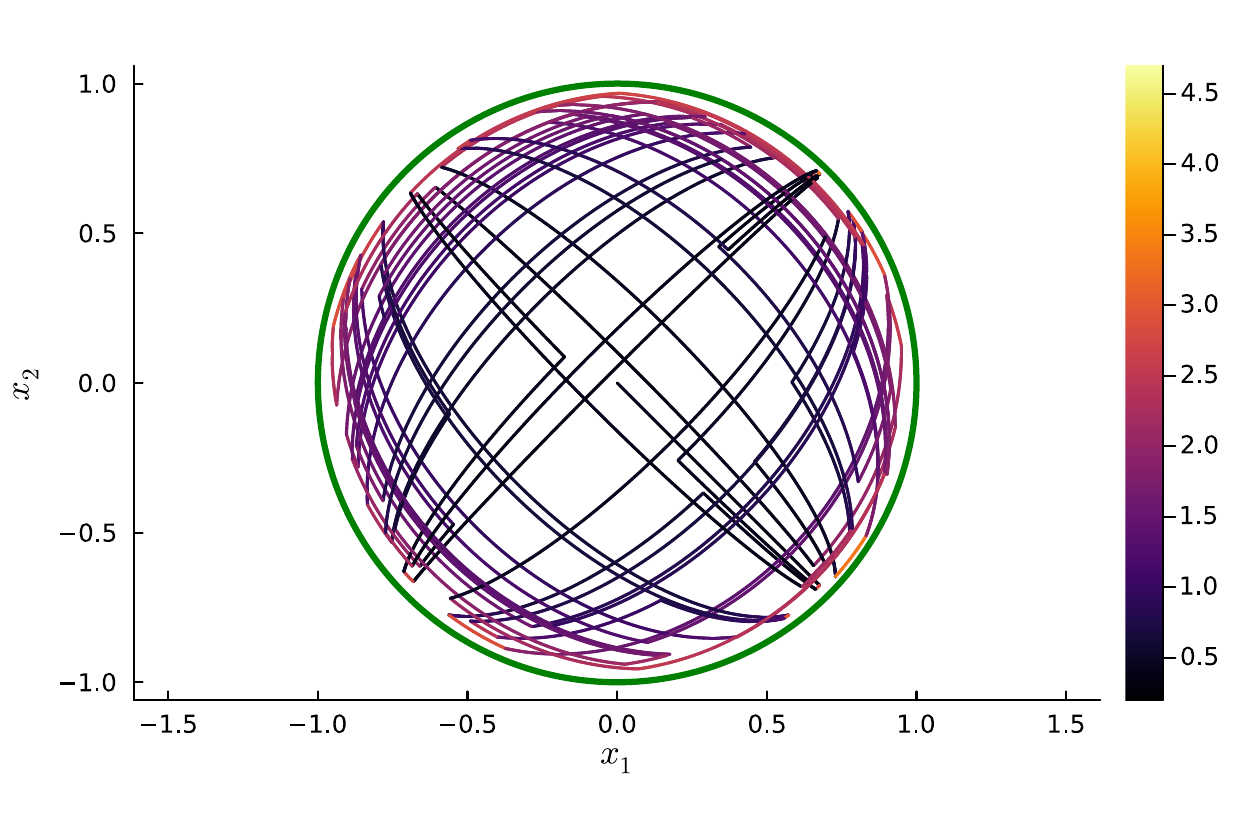}
        \vspace{-20pt}
		\caption{Space transform of the time-changed ZZP, with transformation $\rmH^{-1}(x) = x (1+\lvert x\rvert^2)^{-1/2}$.}
		\label{fig:timechanged_circle}
	\end{subfigure}
    \vspace{5pt}
    \begin{subfigure}[t]{0.47\textwidth}
		%\centering
		\includegraphics[width=\textwidth]{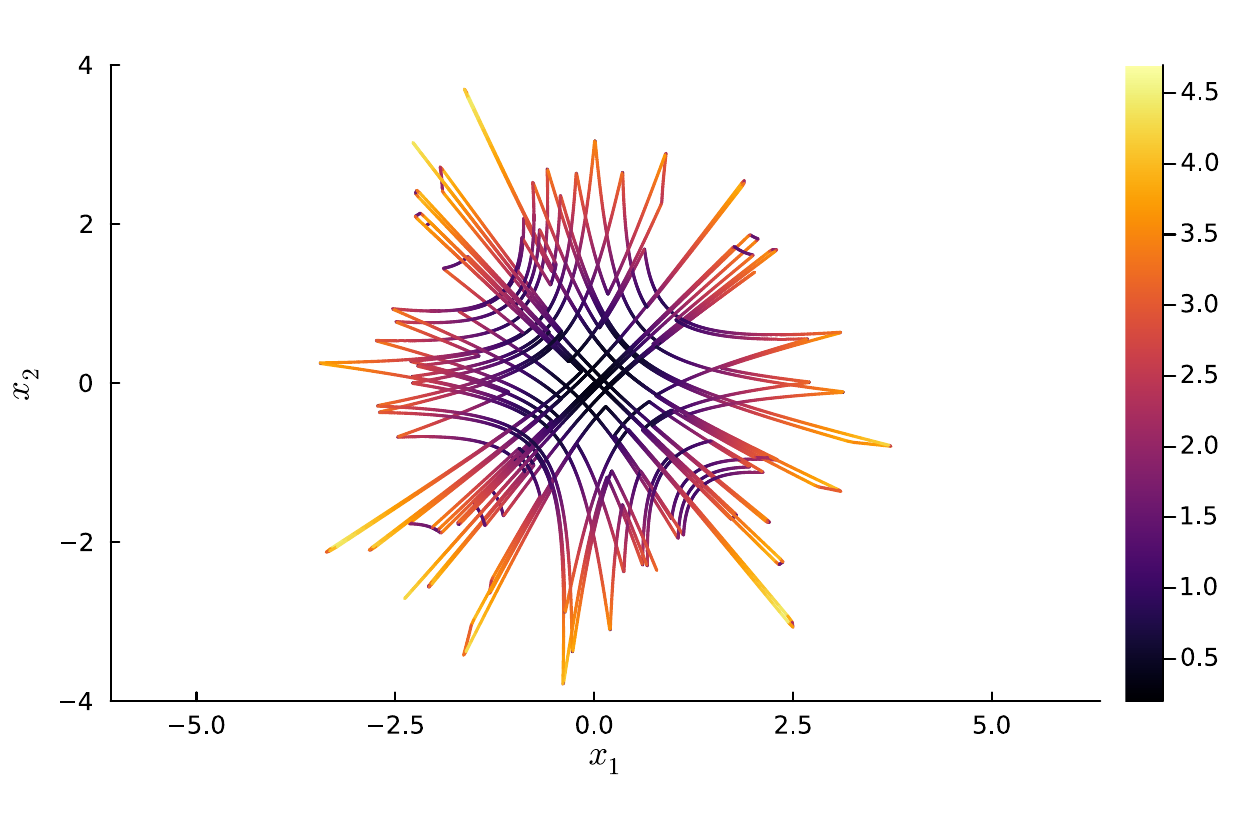}
        \vspace{-20pt}
		\caption{Space-transformed ZZP, with transformation $\rmH(y) = y (1-\lvert y\rvert^2)^{-1/2}$ for $y\in \{ z\in\R^2:\lvert z\rvert < 1\}.$}
		\label{fig:spacetransf}
	\end{subfigure}
    \hfill
	\begin{subfigure}[t]{0.47\textwidth}
		%\centering
		\includegraphics[width=\textwidth]{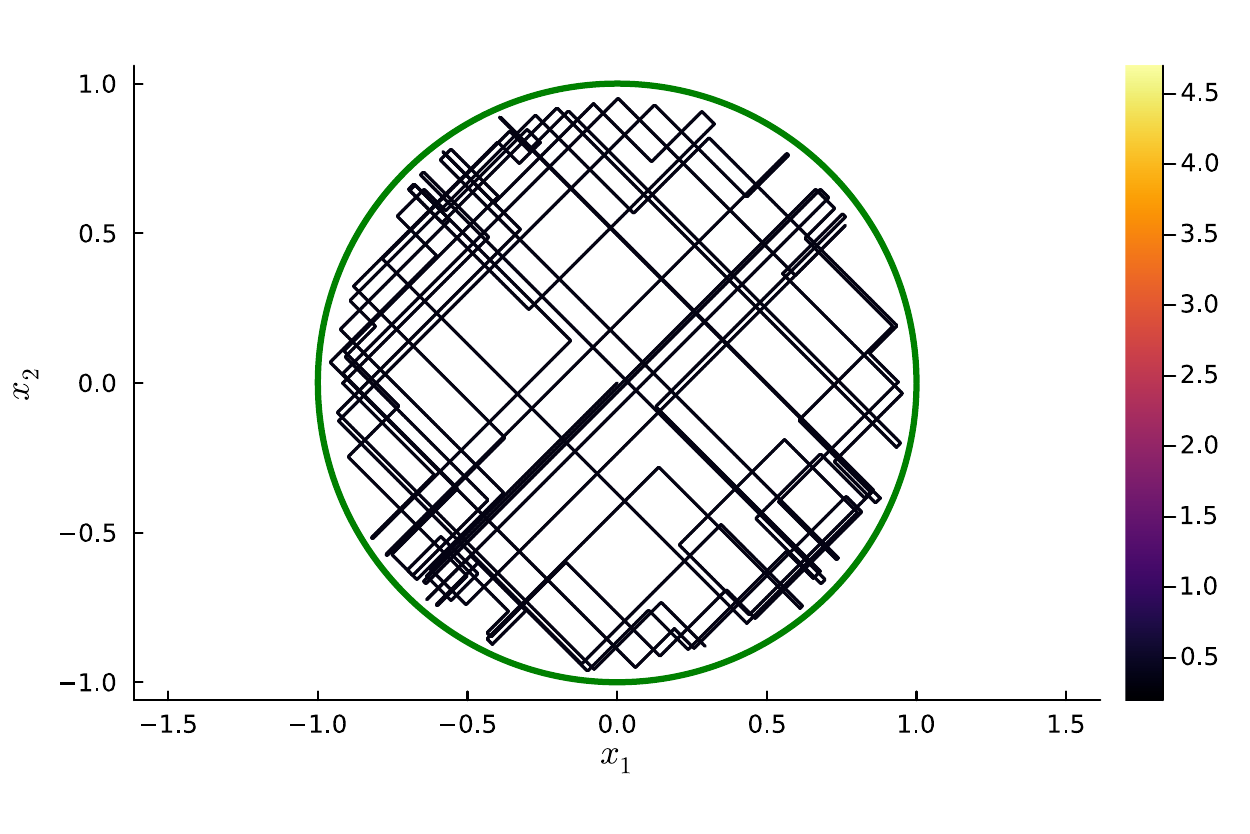}
        \vspace{-20pt}
		\caption{Base process for the space-transformed ZZP, that is a standard ZZP with target $\mutilde(y,w) = \mu(\rmH(y),w) s(\rmH(y))$ with $y\in \{ z\in\R^2:\lvert z\rvert < 1\}.$}
		\label{fig:spacetransf_circle}
	\end{subfigure}
	\caption{Comparison between time-change and space transformation, using the standard ZZP as base process and a two-dimensional, standard normal target distribution. The choices of the speed function and diffeomorphism are as in \eqref{eq:speed_figure} and \eqref{eq:diffeomorphism_figure}. 
    The colours and the scale on the right of each plot represent the logarithm of the speed of each process, calculated as distance travelled in a small time unit divided by the length of the time unit.}
	\label{fig:space_vs_time_transf}
\end{figure}

\cite{Johnson_Geyer} suggested another approach to improve convergence of MCMC algorithms in the context of heavy tailed target distributions relying on a diffeomorphism $\rmH:\R^d\to\R^d$. Their idea is to apply the change of variables $x = \rmH(y)$ to transform the original target density $\mu$ to $\mutilde(y) = \mu(\rmH(y)) \left\lvert \det \rmJ_{\rmH}(y) \right\rvert$ where $\rmJ_{\rmH}$ denotes the Jacobian matrix of $\rmH$. 
Choosing $\rmH$ suitably gives that $\mutilde$ has lighter tails than $\mu$ and thus it can be easier to target using a Markov process $Y$ that is stationary with respect to $\mutilde$. Then, one can apply the map $\rmH$ to define the process $X_t = \rmH(Y_t)$, which can be used to obtain asymptotically exact samples from $\mu$. 
It is straightforward to check that the process $X$ has stationary distribution $\mu$ and inherits the convergence properties of $Y$, hence enabling e.g. geometric convergence even when $\mu$ is heavy tailed. 

The approaches of space and time transformations are strongly related. In order to see this, it is helpful to rewrite $\mutilde$ as
\begin{equation}\label{eq:target_spacetransf}
	%	\pitilde(y) = \pi(\rmH(y)) \left\lvert \det \rmJ_{\rmH}(y) \right\rvert,
	\mutilde(y) = \mu(x) \left\lvert \det \rmJ_{\rmH}(\rmH^{-1}(x))\right\rvert, \quad \text{for } x=\rmH(y).
\end{equation}
This shows that both strategies are built on a base process that targets a distribution $\mutilde$ of the form $\mutilde \propto s \mu$, where $s$ represents either the speed function that defines the time change or the Jacobian determinant term of the diffeomorphism $\rmH$. It is indeed reasonable to interpret the Jacobian determinant as a speed function, since it measures how infinitesimal volumes are affected by the map $\rmH$ and hence how fast the process $X$ moves through a unit of volume.

Following this connection, we can obtain novel speed functions as Jacobian determinants of a suitable space transformation.
For instance, consider the diffeomorphism defined for points $y \in \{z \in\R^d: \lvert z\rvert < 1 \}$ as
\begin{equation}\label{eq:diffeomorphism_figure}
    \rmH(y) = \frac{y}{(1-\lvert y\rvert^2)^{1/2}}\,.
\end{equation}
This has inverse $\rmH^{-1}(x) = x (1+\lvert x\rvert^2)^{-1/2},$ which maps points $x\in\R^d$ to points $y = \rmH^{-1}(x)$ inside the $d$-dimensional unit sphere. It is not difficult to obtain that the associated Jacobian determinant satisfies $\det \rmJ_{\rmH}(y)= (1-\lvert y\rvert^2)^{-(d+1)/2}$, which motivates using as speed function $s(x) = \lvert \det \rmJ_{\rmH}(\rmH^{-1}(x))\rvert$, i.e.
\begin{equation}\label{eq:speed_figure}
    s(x) = (1+\lvert x\rvert^2)^{(d+1)/2}\,.
\end{equation}
Intuitively, this transformation is helpful when one wants to explore the tails of a distribution that is centred around zero. In the case of time-changed ZZP or the time-changed BPS, using a speed function of the form \eqref{eq:speed_figure} has the additional advantage that the deterministic dynamics can be simulated analytically, without the need of numerical integration \citep{Vasdekis_speedup}. In \Cref{fig:timechanged} and \Cref{fig:spacetransf} we compare the approaches of time and space transformations obtained respectively with speed \eqref{eq:speed_figure} and diffeomorphism \eqref{eq:diffeomorphism_figure}, using the ZZP as base process in each case. The time-changed ZZP, shown in \Cref{fig:timechanged}, moves in straight lines with speed that increases with the distance from the origin, regardless of the direction of the process. \Cref{fig:timechanged_circle} shows the trace plot of the 
process obtained applying the transformation $\rmH^{-1}$ to the position vector of the time-changed ZZP. This process has constant speed when it moves radially, while it speeds up when it moves with a tangential component. Because such base process is compactly supported and has speed that is lower bounded by a positive constant (see \Cref{fig:timechanged_circle}), it is then reasonable to expect that it can be uniformly ergodic, and as a consequence so is the time-changed ZZP \citep{Johnson_Geyer}.
On the other hand, the space-transformed ZZP, shown in \Cref{fig:spacetransf}, moves in curved lines and speeds up when moving radially away from the origin, while it slows down when moving with tangential component to the level curves of the Jacobian determinant.

Similarly, we can obtain other speed functions that are the Jacobian determinant of diffeomorphisms from (a compact subset of) $\R^d$ to $\R^d$.
A related example is the stereographic map, recently considered by  \citet{yang2024stereographicmarkovchainmonte}, that transforms points in $\R^d$ to points on the surface of the unit sphere in $\R^{d+1}$ (excluding the North pole). The Jacobian determinant of the stereographic projection turns out to be (up to a constant factor) $s(x)=(1+\lvert x\rvert^2 )^d,$ that is similar to \eqref{eq:speed_figure}.

\section{Numerical simulations}\label{sec:numerical_simulations}
In this section we test algorithms based on the time-change approach on two toy examples: a multimodal setting and a rare event estimation task for a heavy tailed distribution. In both cases, we construct a jump process with jump rate $s$ and jump kernel that is the one-step transition kernel of the Metropolis-adjusted ZZP, a discrete-time chain recently introduced in \citet{bertazzi_splitting}. We then use this process as prescribed by Equations \eqref{eq:erg_avg_jumpprocess} and \eqref{eq:erg_avg_jumpprocess_discrete}.
The codes for all the experiments in this paper can be found at \url{https://github.com/andreabertazzi/time_changed_sampling}.

\subsection{Sampling from a mixture of normal distributions}\label{ex:gaussianmixture}

\begin{figure}[t]
	\begin{subfigure}[t]{0.4\textwidth}
		\centering
		\includegraphics[width=\textwidth]{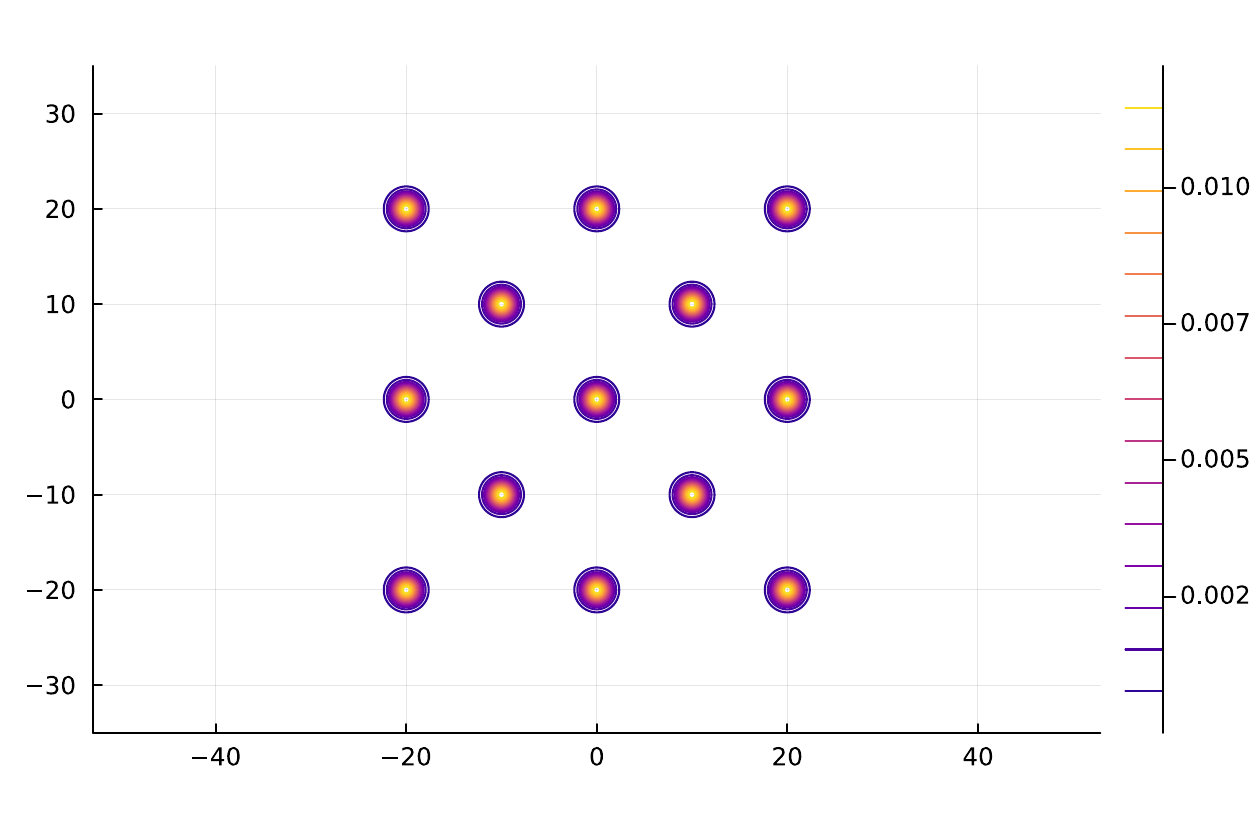}
        \vspace{-20pt}
		\caption{Level curves of $\mu$.}
		\label{fig:mixture_target}
	\end{subfigure}

	\begin{subfigure}[t]{0.4\textwidth}
		%\centering
		\includegraphics[width=\textwidth]{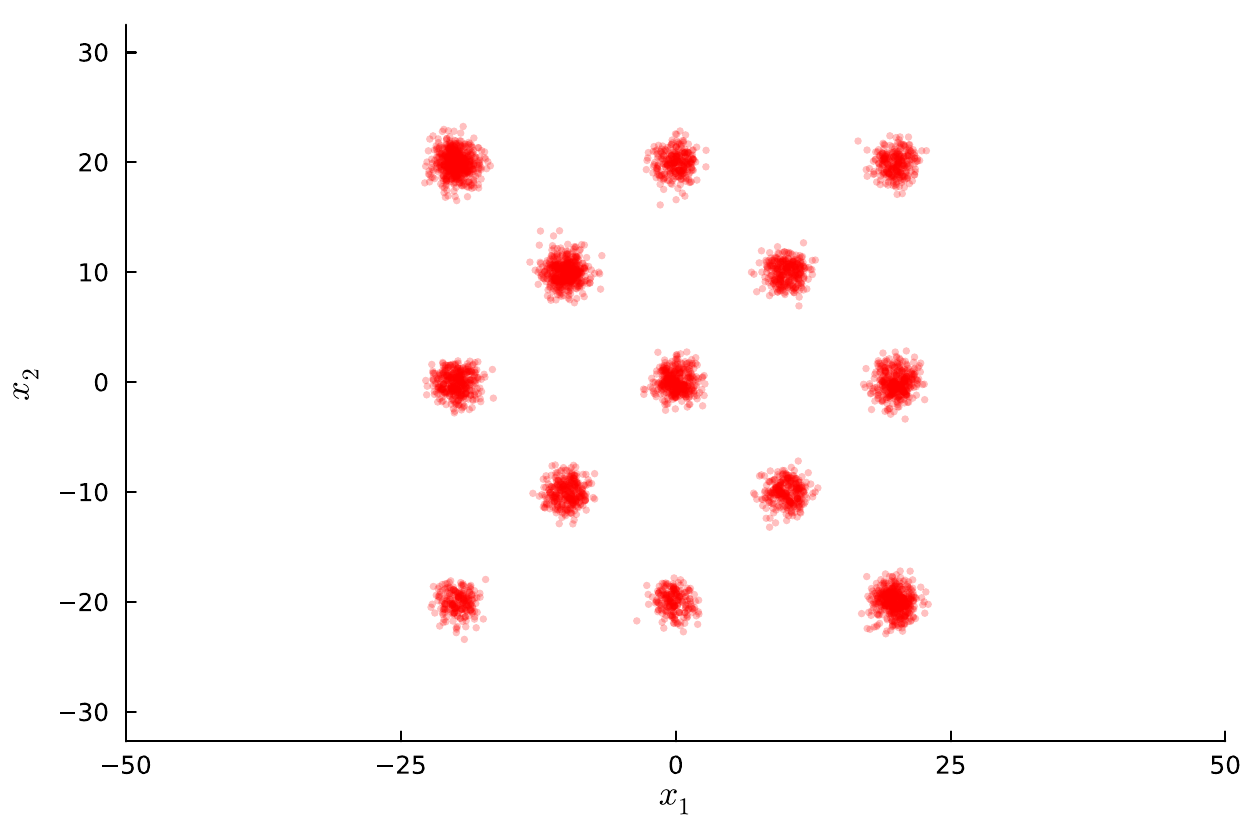}
		\caption{Discretisation of the time-changed jump process with $s(x)=\mu(x)^{-0.9}$.}
		\label{fig:Xtilde}
	\end{subfigure}
	\hspace{10pt}
	\begin{subfigure}[t]{0.4\textwidth}
		%\centering
		\includegraphics[width=\textwidth]{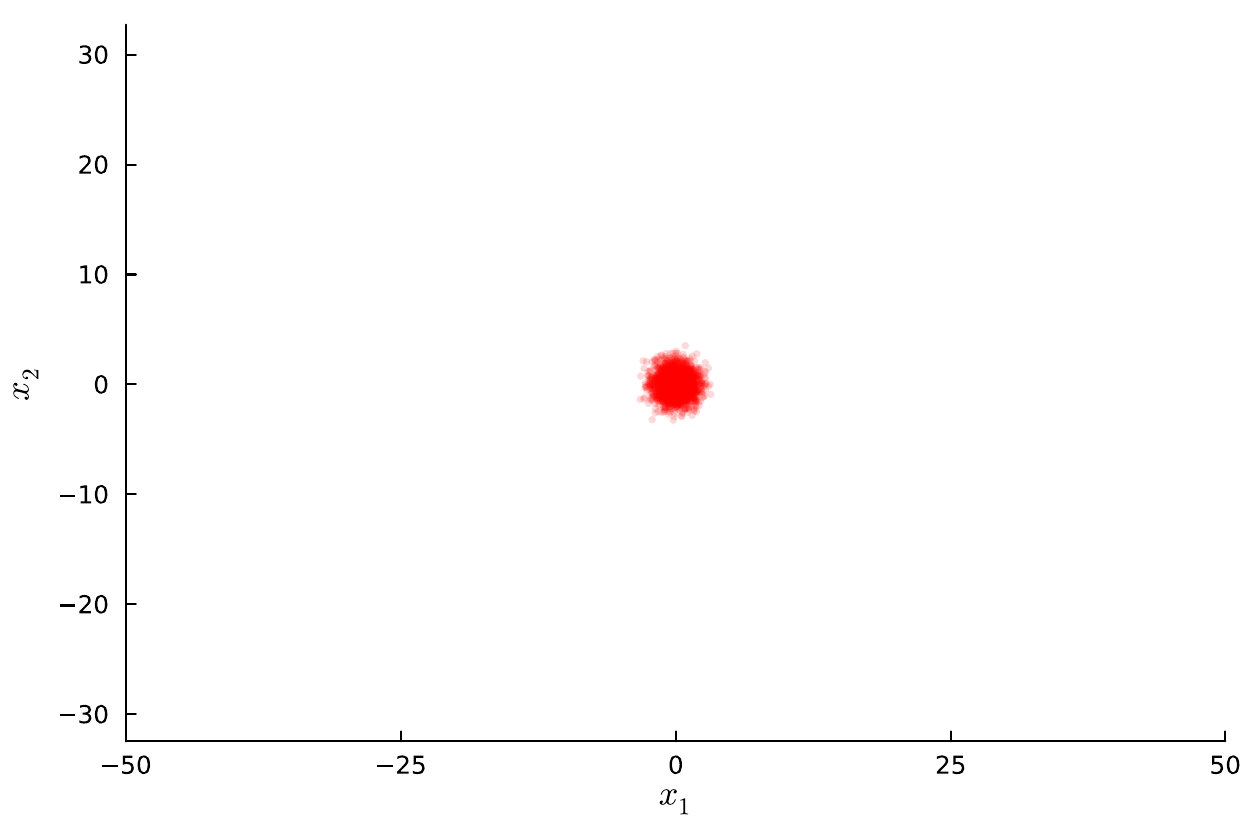}
		\caption{Discretisation without time-change ($s(x)=1$).}
		\label{fig:s=1}
	\end{subfigure}
	\caption{Numerical simulations in the context of Section \ref{ex:gaussianmixture}. 
    The plots in the second row are obtained as follows. First, we simulate a jump process with initial condition at the origin, jump rate $s$, and jump kernel given by the Metropolis adjusted ZZP \citep{bertazzi_splitting} with random step size $\Exp(1/\delta)$ for $\delta = 0.1$ and invariant distribution $\mutilde \propto s\mu$.
    Then, we discretise the obtained paths with step size $10^{-2}$. In either case, we performed $5\times 10^4$ iterations of the Metropolis adjusted ZZP.}
	\label{fig:gaussianmixture}
\end{figure}
We consider the case where the target $\mu$ is a mixture of thirteen, two-dimensional standard Gaussian distributions with equal weights as shown in Figure~\ref{fig:mixture_target}. 
In order to sample from this distribution, we simulate the jump  process based on the Metropolis-adjusted ZZP and then perform an a posteriori discretisation of the path as prescribed in \eqref{eq:erg_avg_jumpprocess_discrete}. This approach discards samples that are far from the modes with high probability.
Figures~\ref{fig:Xtilde} and \ref{fig:s=1} show the trace plots obtained using the speed functions $s(x)=\mu(x)^{-0.9}$ and $s(x) = 1$ respectively. In the former case, the process successfully visits all modes, while the latter, standard sampler remains stuck in the mode it is initialised in.

\subsection{Sampling from a heavy-tailed distribution}\label{sec:heavytailed}
\begin{figure}[t]
	\begin{subfigure}[t]{0.45\textwidth}
		%\centering
		\includegraphics[width=\textwidth]{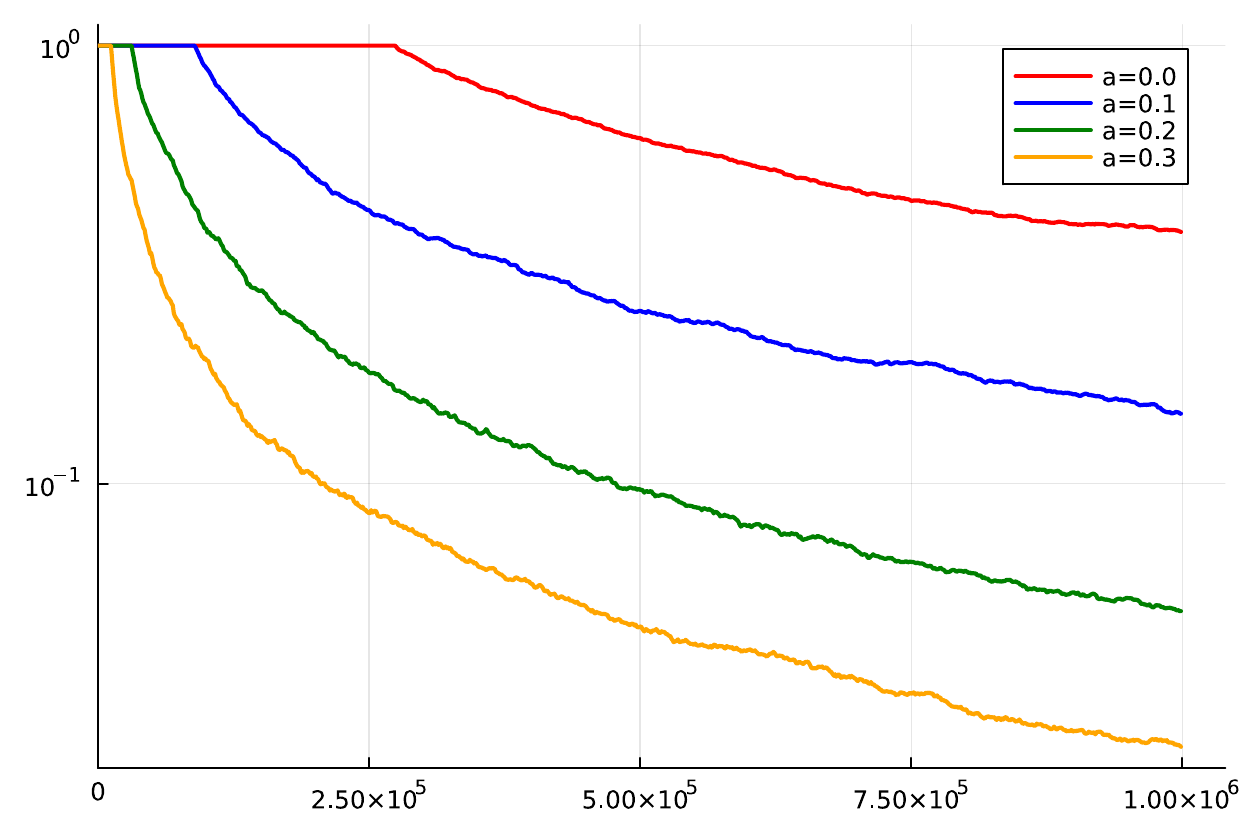}
		\caption{Median of the relative square error ($y$-axis) as a function of the number of jumps of the base process ($x$-axis).}
        \label{fig:heavytailed_mse}
	\end{subfigure}
\hfill
	\begin{subfigure}[t]{0.45\textwidth}
		%\centering
		\includegraphics[width=\textwidth]{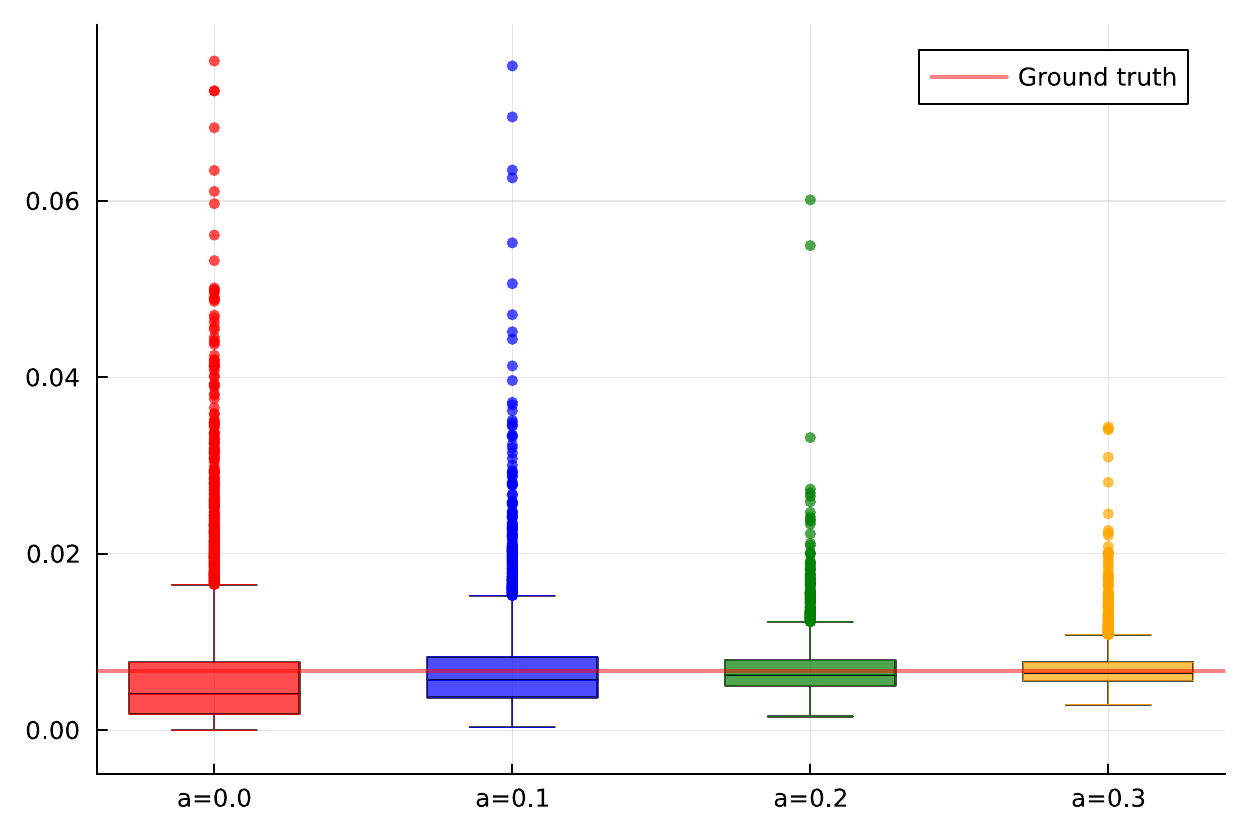}
		\caption{Estimates for the probability of the set $\{x\in\R^2: \lvert x\rvert > 150\}$ for different values of $a$.}
%		\label{fig:mixture_target}
	\end{subfigure}
	\caption{Numerical simulations in the context of \Cref{sec:heavytailed}. The plots show estimates of the probability of the event $\{x\in\R^2: \lvert x\rvert > 150\}$ obtained simulating $5000$ independent runs of the jump process based on the Metropolis-adjusted ZZP for $10^6$ iterations.}
	\label{fig:heavytailed}
\end{figure}

% \begin{table}[h]
% 	\centering
% 	\begin{tabular}{c|c|c|c|c}
% 		 & $a = 0$ & $a = 0.1$ & $a = 0.2$& $a = 0.3$ \\
% 		\hline
% 		Bias & $2.4367 \times 10^{-4}$ & $2.7639\times 10^{-4}$ & $6.4519\times 10^{-5}$ & $2.6163 \times 10^{-4}$ \\
% 		Variance & $6.9907\times 10^{-5}$ & $5.6918\times 10^{-5}$ & $ 6.7820\times 10^{-6}$ & $6.1886\times 10^{-6}$
% 	\end{tabular}
% 	\caption{Bias and variance in the context of Figure~\ref{fig:heavytailed}.}
% 	\label{tab:results_heavy_tailed}
% \end{table}
Consider a $2$-dimensional t-distribution with one degree of freedom and identity covariance matrix. This distribution has density proportional to $(1+\lvert x\rvert^2)^{-3/2}$ for $x\in\R^2$.
We consider the task of estimating the probability assigned to the set $\{x\in\R^2: \lvert x\rvert > 150\}$. Since we want to encourage the process to visit the tails of the target distribution, we take speed functions of the form $s(x)=\exp(a\pot(x))$, where $0 \leq a<1/3$ ensures that $\mu(s)<\infty$. In this case we follow the approach \eqref{eq:erg_avg_jumpprocess} since samples in the tails are important to obtain a good estimator. The results are shown in Figure~\ref{fig:heavytailed}. Figure~\ref{fig:heavytailed_mse} shows that the introduction of the speed function decreases the mean squared error of the estimator. For $a=0.3$ we obtain the estimator with the smallest MSE.

\section{Discussion}\label{sec:discussion}

In this paper we have discussed the use of time-changes of Markov processes in the context of MCMC algorithms. As we have seen, this concept is closely related to performing importance sampling on the paths of a base process with biased stationary distribution.
We discussed suitable speed functions that can be beneficial in the context of multimodal and heavy tailed distributions and illustrated their strengths with simple numerical simulations.
A conceptual comparison to the framework of space transformations \citep{Johnson_Geyer} highlighted that time-changed processes can be more intuitive to use for tail exploration (see \Cref{sec:spacetransformations}). Similarly, it appears simpler to find a good speed function in the context of multimodal distributions compared to a good diffeomorphism that enhances mode exploration. On the other hand, space transformations seem preferable when the target has complex geometry \citep{Girolami_Calderhead_riemannHMC}.

There are several directions which we leave as topics for future research. For example, although we gave intuitive choices of speed function, these tend to depend on a parameter (e.g. $s(x) =\mu^{-a}(x)$ for $a\in(0,1)$). It is then natural to wonder how to learn this parameter, or also how to let it dynamically evolve in order to alternate between exploration and exploitation phases. 
At the same time, another application we have not explored is the use of time-changes to obtain more accurate approximations of stochastic processes. Indeed, it can be useful to take advantage of a time-changed process to slow down a process to incur in smaller numerical errors.

\section*{Acknowledgements}
The authors would like to thank Joris Bierkens, Paul Dobson, Pierre Monmarché, Samuel Power, and Samuel Livingstone for helpful discussions and comments on various versions of this manuscript. 

AB is funded by the European Union (ERC-2022-SyG, 101071601). Views and opinions expressed are however those of the authors only and do not necessarily reflect those of the European Union or the European Research Council Executive Agency. Neither the European Union nor the granting authority can be held responsible for them.

The authors would like to thank the Isaac Newton Institute for Mathematical Sciences, Cambridge, for support and hospitality during the programme Stochastic systems for anomalous diffusion where work on this paper was undertaken. This work was supported by EPSRC (EP/Z000580/1). The authors would also like to thank Codina Cotar for inviting them to participate at the INI programme.

\bibliographystyle{plainnat}
\bibliography{main.bib}

\appendix 

\section{Proof of \Cref{thm:lln_timechange}}\label{sec:proof_LLN}
We first prove the law of large numbers  \eqref{lln.transform:1}. Let $x \in \rmE$ and $f \in L^1(\mu)$. We observe that by the change of variables $u=r(t)$ we get
    \begin{align}
        \frac{1}{T} \int_0^T f(X_t) \dd t &= \frac{1}{T}\int_0^Tf(Y_{r(t)}) \dd t \\
        & = \frac{1}{T} \int_0^{r(T)} f(Y_u) \frac{1}{s(Y_u)} \dd u \\
        & =\frac{r(T)}{T} \frac{1}{r(T)} \int_0^{r(T)} f(Y_u) \frac{1}{s(Y_u)} \dd u.\label{lln.proof.eq.1}
    \end{align}
    Therefore, due to \Cref{ass:lln_base}, we find
     \begin{equation}\label{lln.proof.eq.2}
         \frac{T}{r(T)} \, \frac{1}{T} \int_0^T f(X_t) \dd t = \frac{1}{r(T)} \int_0^{r(T)} f(Y_u)\frac{1}{s(Y_u)} \dd u \xrightarrow[a.s.]{T \rightarrow \infty} \frac{1}{\mu(s)} \int f(y) \mu(\dd y).
     \end{equation}
     This holds for any $f \in L^1(\mu)$. In particular, it holds when $f \equiv 1$, in which case \eqref{lln.proof.eq.2} gives
     \begin{equation}
         \frac{r(T)}{T} \xrightarrow[a.s.]{T \rightarrow \infty} \mu(s).
     \end{equation}
     Therefore, from \eqref{lln.proof.eq.1} we get
     \begin{equation}\label{eq:lln_proof}
         \frac{1}{T} \int_0^T f(X_t) \dd t \, \xrightarrow[a.s.]{ T \rightarrow \infty} \, \int f(y) \mu(\dd y) ,
     \end{equation}
    as required. 
    
    Now, let $A \subset E$ be a borel set with $\mu(A)>0$. Using the law of large numbers we get that
    \begin{equation}
    \mathbb{P}\left( \int_0^{\infty} 1_{X_t \in A}=\infty \right)=1,
    \end{equation}
    meaning that $X$ is Harris recurrent in the sense of \citet{MeynTweedie_2}, and therefore it possesses a unique invariant measure $\nu$ (see e.g. \cite{Azema1969, MeynTweedie_2}). From  \eqref{eq:lln_proof} we get that $\nu=\mu$ and this concludes the proof.

\section{General results for  time-changes}\label{appendix:theory}

\subsection{Weak Generator}\label{appendix:generator}
We turn our attention to the \emph{weak generator} of the time-changed process, a mathematical object that is essential in the analysis of Markov processes. The weak generator was defined in \Cref{sec:convergence_MP}.
%At least informally, the generator $\cL$ associated to $Y_t$ describes the dynamics of the process in the sense $\cL f(x) = \lim_{t\to 0} \frac1t (\mathbb{E} [f(X_t)\lvert X_0=x] - f(x)),$ hence giving the infinitesimal motion of the process.
%and hence $\cL = \partial_t P_t\rvert_{t=0}$ where $P_t$ is the semigroup of $Y_t$.
%The domain of the generator, denoted $\mathcal{D}(\cL)$, contains the subset of functions for which such limit exists in an appropriate fashion. Because of the importance of the domain of the generator, it is custom to refer to the generator as $(\cL,\cD(\cL))$. 
%More rigorously, we will use the following definition.

%\begin{definition}
%The operator $\cL$ is the \emph{weak generator} of a process $\left( Y_t \right)_{t \geq 0}$ with \emph{domain} $\cD(\cL)$ if for any $f \in \cD(\cL)$, and $y \in \rmE$, the process
%\begin{equation}\label{martingale.weak.gen:1}
%    M^f_t:=f(Y_t)-f(y) - \int_0^t \cL f(Y_s)ds
%\end{equation}
%is a Martingale with respect to the natural filtration of $Y$ starting from $y$.
%\end{definition}
%the generator of $X_t$ as the operator $\cL$ that makes 
%$ f(X_t) - \int_0^t \cLtilde f (X_u)\dd u$ a martingale with respect to the filtration generated by $(X_t)_{t\geq 0}$. The set of functions for which this holds forms the \emph{domain of the generator}, denoted as $\cD(\cL)$. 
We shall assume throughout that the martingale problem for $Y_t$ is well posed, that is it admits a unique solution. The following theorem shows that for a large class of functions, the generator of $(X_t)_{t\geq 0}$ is obtained by multiplying the generator of $(Y_t)_{t\geq 0}$ by the speed function $s$. Heuristically, this is to be expected since the generator represents the derivative of a test function with respect to the dynamics of the process, so speeding up the time by a function $s$ should be reflected in the derivative. We have the following.
\begin{proposition}\label{thm:generator_timechange}
	Suppose that the Markov process $(Y_t)_{t\geq 0}$ has a weak generator $(\cLtilde,\cD(\cLtilde))$. Let $\mathcal{C} \subset \cD(\cLtilde) \cap C_b(\rmE)$ be such that for any $f \in \mathcal{C}$, $s \cLtilde f$ is a bounded function. Then, $\mathcal{C}$ is in the domain of the weak generator of the time-changed process $(X_t)_{t\geq 0}$, defined via \eqref{eq:time_change}, Furthermore, for the weak generator $\cL$ of the process $X$, we have that for all $f \in \mathcal{C}$,
 \begin{equation}
     \cL f= s \cLtilde f,
 \end{equation} 
 meaning that $\mathcal{C} \subset \mathcal{A}$, where $\mathcal{A}$ is defined in \Cref{defn:nice.set}.
\end{proposition}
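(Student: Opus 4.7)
The strategy is to relate the candidate martingale for $X$ to the known martingale for $Y$ via the time change, and then apply the optional sampling theorem. Fix $f \in \mathcal{C}$. By assumption, $M^Y_t := f(Y_t) - \int_0^t \cLtilde f(Y_u)\,\dd u$ is an $\mathcal{F}^Y$-martingale. Let $M^X_t := f(X_t) - \int_0^t s(X_u)\cLtilde f(X_u)\,\dd u$ be the candidate martingale whose existence I want to prove.

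The first step is the algebraic identity $M^X_t = M^Y_{r(t)}$. Since $X_t = Y_{r(t)}$, the first terms match. For the integral, the change of variables $u = r(v)$, with $\dd u = s(X_v)\,\dd v$ and $Y_{r(v)} = X_v$, yields
\begin{equation}
\int_0^{r(t)} \cLtilde f(Y_u)\,\dd u = \int_0^t \cLtilde f(X_v)\, s(X_v)\,\dd v,
\end{equation}
which is exactly the integral term in $M^X_t$.

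Next, I would define the natural filtration of $X$ as $\mathcal{F}^X_t := \mathcal{F}^Y_{r(t)}$, which is consistent since $X_t$ is $\mathcal{F}^Y_{r(t)}$-measurable and $r$ is continuous and strictly increasing (using $s \geq \underline{s} > 0$ from \Cref{ass:s.integrability}, so $r$ is a.s. finite and invertible on $[0,t]$ for every $t$). Moreover, $r(t)$ is an $\mathcal{F}^Y$-stopping time because $\{r(t) \leq u\} = \{r^{-1}(u) \geq t\}$ and $r^{-1}(u) = \int_0^u 1/s(Y_w)\,\dd w$ is $\mathcal{F}^Y_u$-measurable. Hence, establishing the martingale property for $M^X$ with respect to $\mathcal{F}^X$ reduces, for $s \leq t$, to
\begin{equation}
\E\bigl[M^Y_{r(t)}\,\big|\,\mathcal{F}^Y_{r(s)}\bigr] = M^Y_{r(s)},
\end{equation}
which is exactly optional sampling of the martingale $M^Y$ at the ordered stopping times $r(s) \leq r(t)$.

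The final and most delicate step is to legitimise the optional sampling, since $r(t)$ need not be bounded. The plan is to truncate: apply Doob's optional sampling theorem to the bounded stopping times $r(s)\wedge N \leq r(t)\wedge N$ and then pass to the limit $N \to \infty$. The boundedness assumptions provide the uniform control needed for dominated convergence. Indeed, since $s \geq \underline{s} > 0$, the hypothesis that $s \cLtilde f$ is bounded automatically implies $\|\cLtilde f\|_\infty \leq \underline{s}^{-1}\|s\cLtilde f\|_\infty < \infty$. Consequently, using the identity $M^Y_{r(t)\wedge N} = M^X_{r^{-1}(r(t)\wedge N)}$ (with the same change of variables as above) one obtains the deterministic bound
\begin{equation}
\bigl|M^Y_{r(t)\wedge N}\bigr| \leq \|f\|_\infty + t\,\|s\cLtilde f\|_\infty,
\end{equation}
uniform in $N$, which allows one to pass to the limit under expectation. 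This identifies $M^X$ as an $\mathcal{F}^X$-martingale, showing that $f \in \cD(\cL)$ and $\cL f = s\cLtilde f$, hence $\mathcal{C} \subset \mathcal{A}$.

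The main obstacle is the unboundedness of $r(t)$, which forces the truncation argument; the interplay between the lower bound on $s$ and the boundedness of $s\cLtilde f$ is precisely what produces a deterministic, $N$-independent majorant and makes the limit pass through cleanly.
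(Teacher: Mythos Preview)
Your argument is correct and is precisely the approach the paper has in mind: the paper's proof simply invokes \cite[Theorem 1.3, Chapter 6]{Ethier}, whose proof is exactly the optional-sampling argument you have written out (identify $M^X_t$ with $M^Y_{r(t)}$ by the change of variables $u=r(v)$, note that $r(t)$ is an $\mathcal{F}^Y$-stopping time, and pass to the limit using the deterministic bound coming from $\|f\|_\infty$ and $\|s\cLtilde f\|_\infty$). There is nothing to add.
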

\begin{proof}
	 The proof is obtained using the same argument as in the proof of \cite[Theorem 1.3, Chapter 6]{Ethier}.
\end{proof}

\Cref{thm:generator_timechange} partially justifies a technical assumption we make throughout our main results in \Cref{sec:theory}, i.e. that the Lyapunov function $\rmV \in \mathcal{A}$. The proposition is also used in \Cref{proof.diffusion.time.change} to study time-changes of diffusions. In most practical cases, the set $\mathcal{C}$, and therefore the set $\mathcal{A}$ as well, will contain a large family of functions. For example, for our running example, the Zig-Zag process, the infinitely differentiable functions with compact support are a subset of $\mathcal{C}$. The same holds for other PDMPs that move around the space with constant speed, such as the Bouncy Particle Sampler with velocity refreshments chosen from the unit sphere, but also for diffusion processes as well. %several times to connect properties of the two processes, which are often obtained through conditions on the generator. As a first, important application of this result, we focus on the relation between the stationary distributions of $X_t$ and $Y_t$. Recall that invariance of $X_t$ with respect to a distribution $\mu$ is equivalent to the condition $ \int \cL f(x) \, \mu(\dd x) = 0$ for all functions $f$ in a core of the generator \citep[Theorem 3.37]{Liggett2010ContinuousTM}. In general, it is very technical to characterise such class of functions, and it is not uncommon in the literature to conclude that $\mu$ is stationary when $ \int \cL f(x) \, \mu(\dd x) = 0$ on a certain class of functions without rigorously checking it is a core. We follow this informal approach to obtain a simple relation between the stationary distributions.

\subsection{Irreducibility}\label{sec:tt_properties_irred}
Let us consider the concept of \emph{irreducibility}, which is important to obtain convergence of the law of the process to its stationary distribution. The process $Y_t$ is $\psi$-irreducible if for some non-trivial measure $\psi$, for any measurable set $\rmB$ such that $\psi(\rmB)>0$ it holds $\mathbb{E}_y[\int_0^\infty \1_{Y_t\in \rmB}\, \dd t]>0$. In the next proposition we show that irreducibility of $X_t$ follows from irreducibility of $Y_t.$
\begin{proposition}\label{prop:irreducibility}
	Let $Y_t$ and $X_t$ on $\rmE$ be two Markov processes related by \eqref{eq:time_change}, where $s$ is bounded on bounded sets. 
	Let $\psi$ be a measure that %has a marginal that
    is absolutely continuous with respect to Lebesgue measure. 
	Then, $X_t$ is $\psi$-irreducible if and only if $Y_t$ is $\psi$-irreducible. 
	% Suppose $s(x)\geq \underline{s}>0$ for any $x$ and assume the process $Y_t$ is irreducible. Then, the process $X_t$ obtained by \eqref{eq:time_change} is irreducible. 
\end{proposition}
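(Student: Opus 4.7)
The plan is to relate the occupation times of $X$ and $Y$ in any target set $B$ via the change of variables $u = r(t)$, and then exploit the strict positivity of $s$ to pass equivalences of positive-measure statements from one process to the other.

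First, I would record the basic regularity of the time change. Since $\dot r(t) = s(X_t) \geq \underline{s} > 0$ (by \Cref{ass:s.integrability}), the map $r:[0,\infty)\to[0,\infty)$ is a strictly increasing absolutely continuous bijection, with an absolutely continuous inverse whose derivative is $1/s(Y_\cdot)$. Here I use that $s$ is bounded on bounded sets together with the càdlàg property of $Y$ (and hence $X$) to ensure that both $r$ and $r^{-1}$ are finite for all time, so no explosion issues arise.

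Second, fix an initial condition $y \in \rmE$ and a measurable set $B$, and set
\begin{equation}
A \;=\; \{t\ge 0 : X_t \in B\}, \qquad A' \;=\; \{u\ge 0 : Y_u \in B\}.
\end{equation}
Since $X_t = Y_{r(t)}$, one has $r(A) = A'$, and the change of variables gives the two identities
\begin{equation}
|A'| \;=\; \int_A s(X_t)\,\dd t, \qquad |A| \;=\; \int_{A'} \frac{1}{s(Y_u)}\,\dd u \qquad \text{a.s.}
\end{equation}
Because $s$ is continuous and everywhere strictly positive, the integrands above are strictly positive on their domains of integration, so each of these identities is an integral of a strictly positive function over a Borel set. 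Consequently $|A|>0$ if and only if $|A'|>0$, almost surely.

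Third, I would use this a.s.\ equivalence to transfer the expected-occupation-time condition from $Y$ to $X$ and back. Since $|A|$ and $|A'|$ are nonnegative random variables,
\begin{equation}
\mathbb{E}_y[|A|] > 0 \;\Longleftrightarrow\; \mathbb{P}_y(|A|>0) > 0 \;\Longleftrightarrow\; \mathbb{P}_y(|A'|>0) > 0 \;\Longleftrightarrow\; \mathbb{E}_y[|A'|] > 0,
\end{equation}
where the middle equivalence uses the a.s.\ identity $\{|A|>0\} = \{|A'|>0\}$ from the previous step. Applying this to every measurable $B$ with $\psi(B)>0$ and every starting point $y\in\rmE$ yields exactly the equivalence of $\psi$-irreducibility for $X$ and $Y$.

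The whole argument is essentially an exercise in change of variables, so I do not expect a genuinely hard step. The only point requiring care is step one, verifying that $r$ is a well-defined absolutely continuous bijection under the hypotheses (continuity and local boundedness of $s$, combined with the lower bound $\underline{s}>0$); the absolute continuity of $\psi$ with respect to Lebesgue measure plays no active role here, but it is consistent with the way $\psi$-irreducibility is phrased in terms of expected time-occupation (a Lebesgue quantity on the time axis).
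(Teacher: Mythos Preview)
Your proof is correct and in fact slightly slicker than the paper's. Both arguments start from the same change-of-variables identity
\[
\int_0^\infty \1_{X_t\in B}\,\dd t \;=\; \int_0^\infty \1_{Y_u\in B}\,\frac{1}{s(Y_u)}\,\dd u,
\]
but then diverge. The paper bounds the integrand directly: for the implication $Y\Rightarrow X$ it uses an upper bound on $s$ over $B$ (valid when $B$ is bounded), and then reduces unbounded $B$ to a bounded subset of positive measure---this is where the hypothesis $\psi\ll\mathrm{Leb}$ is invoked. The reverse implication uses the global lower bound $s\ge\underline{s}$. You instead pass through the almost-sure equivalence $\{|A|>0\}=\{|A'|>0\}$ (an integral of a strictly positive function over a set is zero iff the set is null), and then use that for nonnegative random variables positive expectation is equivalent to positive probability of being positive. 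This treats both directions symmetrically and, as you correctly note, makes no use of the absolute continuity of $\psi$; your argument therefore establishes the proposition under slightly weaker hypotheses than stated. The only place the local boundedness of $s$ enters your proof is where you claim it---to ensure $r$ is finite on compacts and hence a genuine bijection---which is consistent with how the paper uses it.
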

\begin{proof}
	Assume $Y_t$ is $\psi$-irreducible and recall the relation \eqref{eq:time_transformed_process_inverse}. Then for any $\rmB$ such that $\psi(\rmB)>0$ we have by a change of variables
	\begin{align}
		\mathbb{E}_y\left[\int_0^\infty \1_{X_t\in \rmB} \dd t\right] = \mathbb{E}_y\left[\int_0^\infty \1_{Y_t\in \rmB}\,\frac{1}{s(Y_t)} \dd t\right].
%		0 < \mathbb{E}_y\left[\int_0^\infty \1_{Y_t\in B}\, \dd t\right]&= \mathbb{E}_y\left[\int_0^\infty \1_{X_{r^{-1}(t)}\in B} \dd t\right] = \mathbb{E}_y\left[\int_0^\infty \1_{X_t\in B} \,s(X_t)\dd t\right].
	\end{align}
	If $\rmB$ is a bounded set, then there exists $\overline s_{\rmB}<\infty$ such that $s(x) \leq s_{\rmB}$ for all $x\in\rmB$. Combined with the fact that $\mathbb{E}_y\left[\int_0^\infty \1_{Y_t\in \rmB} \dd t\right]>0$, from the irreducibility of $Y$, we get that $\mathbb{E}_y\left[\int_0^\infty \1_{X_t\in \rmB} \dd t\right]>0.$ Suppose now $\rmB$ is not bounded but has non-zero Lebesgue measure. In this case there exists $\rmB_1 \subset \rmB $  that is bounded and has positive Lebesgue measure. Then $\mathbb{E}_y\left[\int_0^\infty \1_{X_t\in \rmB} \dd t\right]\geq \mathbb{E}_y\left[\int_0^\infty \1_{X_t\in \rmB_1} \dd t\right]$ which is strictly positive as shown above.
    Therefore, $X_t$ is $\psi$-irreducible.

    The reverse statement can be obtained observing that, since $s(x)\geq \underline{s}>0$ for all $x\in\rmE$,
    \begin{align}
		\mathbb{E}_y\left[\int_0^\infty \1_{Y_t\in \rmB} \,\dd t\right] &= \mathbb{E}_y\left[\int_0^\infty \1_{X_t\in \rmB}\,s(X_t) \,\dd t\right]  \\
        & \geq \underline{s} \, \mathbb{E}_y\left[\int_0^\infty \1_{X_t\in \rmB}\, \dd t\right]
	\end{align}
    which is positive since $X_t$ is irreducible.
\end{proof}

\subsection{Petite sets}
Let us now introduce the important notion of \emph{small set}, which is often used to establish ergodicity of a Markov processes \cite{DownMeynTweedie} and MCMC algorithms \cite{rob_ros_mcmc_survey}. A set $ C  \subset \rmE$ is $(t_0,\eta,\nu)$-small for a process $Y_t$ if there exist a constant $\eta>0$ and a non-trivial probability measure $\nu$ on $\rmE$, i.e. such that $\nu(\rmE)>0$, such that for all $y\in  C $  $$\mathbb{P}_y(Y_{t_0}\in\cdot\,) \geq \eta \nu(\cdot).$$ 
A related, weaker condition is that of \emph{petite set}: there exists a probability measure $\alpha$ on $(0,\infty)$ such that for all $y\in  C $
$$\int_0^\infty \mathbb{P}_y(Y_t\in \cdot\,) \,\alpha(\dd t)  \geq \eta \nu(\cdot).$$ 
We are interested in proving that under suitable conditions a small or petite set for $Y_t$ can be petite for the time changed process $X_t$. For this to hold, we are imposing \Cref{ass:generic_drift_condition_tt}(1) from \Cref{sec:convergence_MP}.

The goal of this section is to show that \Cref{ass:generic_drift_condition_tt}(1) is sufficient to conclude that $ C $ is petite for the time-changed process $X$. The essential idea is that on the event in which $Y_t$ does not leave the bounded set $  C $ it is possible to bound the time-change $r(t)$. Notice that \Cref{ass:generic_drift_condition_tt}(1) holds e.g. when $Y$ satisfies a uniform small set condition of the form $\mathbb{P}_y(Y_{t}\in\cdot\,) \geq \eta \nu(\cdot)$ for all $t \in [t_0,t_0+\varepsilon]$ and $y \in C$, and also $Y_t\in  D$ almost surely for $t\leq t_0+\varepsilon$ whenever $Y_0\in C $. This will be the case for any PDMP with deterministic dynamics having a bounded velocity, such as the Zig-Zag sampler, Bouncy particle sampler, and the Boomerang sampler.

\begin{lemma}\label{lem:small_sets}
	Suppose $Y$ satisfies \Cref{ass:generic_drift_condition_tt}(1) for some sets $ C  \subset D$, and moreover that \Cref{ass:s.integrability} holds. 
	Then, $ C $ is petite for the time-changed process $X$.
\end{lemma}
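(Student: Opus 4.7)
The plan is to exploit the relation $Y_t = X_{r^{-1}(t)}$ with $r^{-1}(t) = \int_0^t \frac{1}{s(Y_u)}\, du$, together with the fact that on bounded sets the speed function is bounded above and below by positive constants, to turn the petite-type condition for $Y$ into one for $X$. A natural candidate for the sampling measure $\alpha$ in the definition of ``petite set'' for $X$ will be the uniform measure on an interval $[T_1, T_2]$, chosen so that $[T_1,T_2]$ is guaranteed to contain $r^{-1}([t_0, t_0+\varepsilon])$ whenever the path of $Y$ stays in $D$.

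The first step is to use continuity of $s$ together with boundedness of $D$ to obtain a constant $\bar s_D := \sup_{y \in D} s(y) < \infty$. Combined with the uniform lower bound $s \geq \underline s > 0$ from \Cref{ass:s.integrability}, this gives, on the event $\{Y_u \in D \text{ for all } u \in [0,t]\}$, the two-sided estimate $t/\bar s_D \leq r^{-1}(t) \leq t/\underline s$. Setting $T_1 := t_0/\bar s_D$ and $T_2 := (t_0+\varepsilon)/\underline s$, I will then verify that on this event $r^{-1}(t) \in [T_1, T_2]$ for every $t \in [t_0, t_0+\varepsilon]$.

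The core calculation is a pathwise change of variables $\sigma = r^{-1}(t)$, $dt = s(Y_t)\, d\sigma = s(X_\sigma)\, d\sigma$. Starting from \Cref{ass:generic_drift_condition_tt}(1), taking the expectation outside the time integral, and writing the inner indicator $\mathbf{1}_{Y_u\in D,\, u\in[0,t]}$ as $\mathbf{1}_{t < \tau}$ for $\tau := \inf\{u: Y_u \notin D\}$, one obtains
\begin{equation}
\eta\, \nu(A) \;\leq\; \mathbb{E}_y\!\left[\int_{t_0}^{(t_0+\varepsilon) \wedge \tau} \mathbf{1}_{Y_t \in A}\, dt\right] \;=\; \mathbb{E}_y\!\left[\int_{r^{-1}(t_0)}^{r^{-1}((t_0+\varepsilon)\wedge \tau)} \mathbf{1}_{X_\sigma \in A}\, s(X_\sigma)\, d\sigma\right].
\end{equation}
On the range of integration we have $r(\sigma) < \tau$, so $X_\sigma \in D$ and thus $s(X_\sigma) \leq \bar s_D$. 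Using the inclusion $[r^{-1}(t_0), r^{-1}((t_0+\varepsilon)\wedge\tau)] \subset [T_1, T_2]$ established in the previous step (the case $\tau < t_0$ makes the integral vanish and is handled trivially), the right-hand side is bounded above by $\bar s_D \int_{T_1}^{T_2} \mathbb{P}_y(X_\sigma \in A)\, d\sigma$.

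Rearranging and letting $\alpha$ be the uniform probability measure on $[T_1, T_2]$ then yields
\begin{equation}
\int_0^\infty \mathbb{P}_y(X_\sigma \in A)\, \alpha(d\sigma) \;\geq\; \frac{\eta}{\bar s_D (T_2 - T_1)}\, \nu(A)
\end{equation}
uniformly in $y \in C$ and measurable $A$, which is precisely the petite set property for $X$ with constants $\tilde\eta = \eta / (\bar s_D (T_2 - T_1))$ and irreducibility measure $\nu$. I expect the main subtlety to be the careful bookkeeping of the event $\{\tau \geq t_0\}$ and the justification of the pathwise change of variables (which relies on $r$ being absolutely continuous and strictly increasing, itself a direct consequence of the positive lower bound on $s$).
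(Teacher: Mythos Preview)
Your proposal is correct and follows essentially the same approach as the paper's proof: both arguments rest on the change of variables between $t$ and $r^{-1}(t)$ (equivalently $r(t)$), the two-sided bounds $\underline{s}\leq s\leq \bar s_D$ on the event that the path stays in $D$, and the identical choice of sampling interval $[t_0/\bar s_D,\,(t_0+\varepsilon)/\underline{s}]$ for the uniform measure $\alpha$. The only organizational difference is the direction of the argument: the paper starts from $\int \mathbb{P}_x(X_t\in A)\,\alpha(\dd t)$, applies the change of variables $t'=r(t)$, and shows it dominates the integral in \Cref{ass:generic_drift_condition_tt}(1); you start from the assumption on $Y$, introduce the exit time $\tau$ from $D$, and push forward via $\sigma=r^{-1}(t)$ to reach the petite inequality for $X$. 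Your use of $\tau$ makes the handling of the event $\{Y_u\in D,\ u\in[0,t]\}$ slightly cleaner than the paper's treatment, but the two proofs are otherwise the same.
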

\begin{proof}
    Let $\beta$ be the uniform distribution on the interval $[t_1,t_2]$ for some $t_1<t_2$ that will be defined later, and denote $\Delta t = t_2-t_1$.
    For any $x\in  C $ and for any measurable set $ A $ it holds that
	\begin{align}
        \int \mathbb{P}_x(X_{t}\in  A ) \beta(t) \dd t & = \frac1{\Delta t}\, \int_{t_1}^{t_2} \mathbb{P}_x(X_{t}\in  A ) \, \dd t \\
		& =  \frac1{\Delta t}\, \mathbb{E}_x \left[  \int_{t_1}^{t_2}\1_{Y_{r(t)}\in  A } \dd t \right].
	\end{align}
	In the last line we applied Fubini's theorem. Applying the change of variables $t'=r(t)$ we find
	\begin{align}
		\int\mathbb{P}_x(X_{t}\in  A ) \beta(t) \dd t & = \frac1{\Delta t} \mathbb{E}_x \left[  \int_{r(t_1)}^{r(t_2)}\frac{1}{s(Y_t)}  \, \1_{Y_{t}\in  A }  \dd t \right]  \\
        &\geq \frac1{\overline{s}_{  C }\Delta t} \mathbb{E}_x \left[  \int_{r(t_1)}^{r(t_2)} \, \1_{Y_{t}\in  A ,\, Y_u\in   D  \textnormal{ for all } u\in [0,t]}\,  \dd t \right] .
	\end{align}
    Now notice that $r(t_2) \geq \underline{s} t_2$ almost surely. Similarly, on the event that $Y_u\in   D \textnormal{ for all } u\in [0,t]$ for $t\geq r(t_1)$, we have $r(t_1) \leq \overline{s}_{  C } t_1$ almost surely. 
	Therefore we find 
	\begin{align}
		\int\mathbb{P}_x(X_{t}\in  A ) \beta(t) \dd t & \geq \frac1{\overline{s}_{  C }\Delta t} \int_{\overline{s}_{  C } t_1}^{\underline{s} t_2} \mathbb{P}_x ( Y_{t} \in  A , Y_u\in   D  \textnormal{ for all } u\in [0,t]  ) \dd t
	\end{align}
    since we are effectively restricting the domain of integration. Choosing $t_2 = \frac{t_0+\varepsilon}{\underline{s}}$ and $t_1 = \frac{t_0}{\overline{s}_{  C }}$ gives
	\begin{align}
		\int\mathbb{P}_x(X_{t}\in  A ) \beta(t) \dd t & \geq \frac1{\overline{s}_{  C }\Delta t} \int_{t_0}^{t_0 + \varepsilon} \mathbb{P}_x ( Y_{t} \in  A , Y_u\in   D  \textnormal{ for all } u\in [0,t]  ) \dd t\\
        & \geq \frac1{\overline{s}_{  C }\Delta t} \eta \nu(A),
	\end{align}    
    which concludes the proof.
\end{proof}

\section{Proofs of \Cref{sec:examples}}
\subsection{Characterisation of time-changed PDMPs}\label{sec:proof_timechange_pdmp}
Let us recall the construction of a time-homogeneous PDMP $(Y,W)$ with characteristics $(\vftilde, \lambdatilde, \Qtilde)$. Let $(E_m)_{m=1,2,\dots}$ and $(U_m)_{m=1,2,\dots}$ denote two independent sequences of i.i.d. random variables respectively with the exponential distribution with parameter $1$ and the uniform distribution on $[0,1]$. Denote as $\varphitilde_t$ the flow map corresponding to the ODE $\dd (y_t,w_t) = \vftilde(y_t,w_t) \dd t$. We introduce the function $\rmF:\rmE \times [0,1] \mapsto \rmE$ such that for $U\sim\Unif([0,1])$ it holds that $\mathbb{P}(\rmF((y,w),U) \in A) = \Qtilde((y,w),A)$ for all measurable sets $A$ and all $(y,w)\in\rmE.$
We inductively construct the event times $(\tilde T_m)_{m=0,1,2,\dots}$ We first set $\tilde T_0=0$. Assume that $\tilde T_1,\dots,\tilde T_n$ and the realisation of the path $(Y_t,W_t)_{t\leq T_n}$ are known. We shall show how $\tilde T_{n+1}$ and $(Y_t,W_t)_{\tilde T_n<t\leq \tilde T_{n+1}}$ can be obtained. First of all, $\tilde T_{n+1} = \tilde T_n + \tilde \tau_{n+1}$ where the interarrival time $\tilde \tau_{n+1}$ is given by
\begin{equation}\label{eq:interarrival_pdmp}
    \tilde \tau_{n+1} = \inf\left\{t>0: \int_0^t \lambdatilde(\varphitilde_u(Y_{\tilde T_n},W_{\tilde T_n})) \dd u \geq E_{n+1} \right\}.%,
\end{equation}
%since it holds that $\mathbb{P}_{Y_{\tilde T_n}}(\tilde \tau_{n+1}>t) = \exp(-\int_0^t \varphitilde_u(Y_{\tilde T_n})\dd u).$
Then one lets $(Y_{\tilde T_n + t},W_{\tilde T_n + t}) = \varphitilde_t(Y_{\tilde T_n},W_{\tilde T_n})$ for $0\leq t< \tilde \tau_{n+1}$ and 
$$(Y_{\tilde T_{n+1}},W_{\tilde T_{n+1}}) = \rmF(\varphitilde_{\tau_{n+1}}(Y_{\tilde T_n},Y_{\tilde T_n}),U_{n+1}).$$
The whole path of the PDMP $(\vftilde, \lambdatilde, \Qtilde)$ can be constructed inductively with this approach.

Now we show that the process defined by $(X_t,V_t) = (Y_{r(t)},W_{r(t)})$ for $r(t) = \int_0^t s(Y_{r(u)},W_{r(u)}) \dd u$ can be constructed following the same inductive procedure above, and is therefore a PDMP. For any $n \in \mathbb{N}$, we define $T_n=r^{-1}(\tilde T_n)$, with the convention that $T_0=0$. Assume that the process has been constructed until time $T_n$ for some $n \in \{ 0 , 1 , \dots \}$.

% Clearly, we have $\left(X_{T_n}, Y_{T_n}\right)=( Y_{\tilde T_n}, W_{\tilde T_n} )$.
Let us define $\tau_{n+1}=T_{n+1}-T_n$, that is the $(n+1)$-th interarrival time of $(X,V)$. Since $(Y,W)$ follows a deterministic movement in the interval $[\tilde T_n , \tilde T_{n+1})$, so does $(X,V)$ in the interval $[T_n,T_{n+1})$. In particular, for any $t \in [0,\tau_{n+1})$, a simple use of the chain rule yields
\begin{align}
    \frac{\dd \left(X_{T_n+t}, V_{T_n+t}\right)}{\dd t}&= \frac{\dd \left(Y_{r(T_n+t)}, W_{r(T_n+t)}\right)}{\dd t} \\ 
    &=\tilde\Phi\left( Y_{r(T_n+t)}, W_{r(T_n+t)} \right)s\left( Y_{r(T_n+t)}, W_{r(T_n+t)} \right) \\
    &= \tilde \Phi \left(X_{T_n+t},V_{X_{T_n+t}}\right) s\left(X_{T_n+t},V_{X_{T_n+t}}\right) .
\end{align}
Therefore, the deterministic motion of $(X,V)$ is described by the ODE with vector field $\vf = s\vftilde$.
Now, we let us focus on the law of the interarrival time $\tau_{n+1}$. Notice that by a change of variables $k=r(u)$, we have
\begin{align}   
% &\int_0^{\tau_{n+1}}\lambdatilde\left( X_{T_n+u},V_{T_n+u} \right) s\left( X_{T_n+u},W_{T_n+u} \right) \dd u=\int_{T_n}^{T_{n+1}}\lambdatilde\left( Y_{r(u)},V_{r(u)} \right) s\left( X_{r(u)},V_{r(u)} \right) \dd u \\
\int_{T_n}^{ T_{n+1}}\lambdatilde\left( X_{u},V_{u} \right) s\left( X_{u},W_{u} \right) \dd u &= \int_{T_n}^{T_{n+1}}\lambdatilde\left( Y_{r(u)},W_{r(u)} \right) s\left( Y_{r(u)},W_{r(u)} \right) \dd u \\
&= \int_{r(T_n)}^{r(T_{n+1})} \lambdatilde\left( Y_k,W_k \right) \dd k \\
& = \int_{\tilde T_n}^{\tilde T_{n+1}} \lambdatilde\left( Y_k,W_k \right) \dd k \\
& =E_{n+1}.
\end{align}
Therefore $\tau_{n+1}$ is the first arrival time of a Poisson process with jump rate $\lambda = s\lambdatilde$.

\begin{comment}
    
satisfies $\tau_{n+1} = \tilde r(\tilde \tau_{n+1},(X_{T_{n}},V_{T_{n}}))$. By a change of variables we can rewrite \eqref{eq:interarrival_pdmp} as  
\begin{equation}\notag
    \tilde \tau_{n+1} = \inf\left\{t>0: \int_0^{\tilde r(t,(X_{ T_n},V_{ T_n}))} \lambdatilde(\varphi_u(X_{ T_n},V_{T_{n}})) s(\varphi_u(X_{ T_n},V_{T_{n}})) \dd u \geq E_{n+1} \right\},
\end{equation}
which implies 
\begin{equation}\notag
    \tau_{n+1} = \inf\left\{r>0: \int_0^{r} \lambdatilde(\varphi_u(X_{ T_n},V_{T_{n}})) s(\varphi_u(X_{ T_n},V_{T_{n}})) \dd u \geq E_{n+1} \right\}.
\end{equation}
This shows that $(X,V)$ has jump rate $\lambda = s\lambdatilde$.
\end{comment}
Finally, at time $T_{n+1}$ the process $(X,V)$ jumps according to the the jump kernel $Q$, which is clearly equal to $\Qtilde$ since $(X_{T_{n+1}},V_{T_{n+1}}) = (Y_{\tilde T_{n+1}},W_{\tilde T_{n+1}})$ and $(X_{T_{n+1}-},V_{T_{n+1}-}) = (Y_{\tilde T_{n+1}-},W_{\tilde T_{n+1}-})$ and hence $(X_{T_{n+1}},V_{T_{n+1}}) = \rmF((X_{T_{n+1}-},V_{T_{n+1}-}),U_{n+1})$.

We have shown that the time changed process $(X,V)$ is defined following the standard construction of a PDMP with characteristics $(\vf,\lambda,Q) = (s\vftilde, s\lambdatilde,\Qtilde)$.

\subsection{Characterisation of time-changed diffusions}\label{proof.diffusion.time.change}
Here we prove  \Cref{prop:timechange_diffusion}.
We denote by $\cLtilde$ the weak generator of the diffusion with characteristics $(\btilde,\sigmatilde)$. \citet{Rogers_Williams_2000} gives that for all $f \in \mathcal{C}^{\infty}_c(\rmE)$, the generator is given by
\begin{equation}
	\cLtilde f(y) = \langle \btilde(y),\nabla f(y)\rangle + \frac12 \sum_{i,j=1}^d \tilde{a}_{ij}(y) \partial_i\partial_j f(y),
\end{equation}
where $\tilde a(x) = \sigmatilde(x) \sigmatilde(x)^T$.  \Cref{thm:generator_timechange} gives that the time changed process has generator 
\begin{equation}
	\cL f(y) = \langle s(y) \btilde(y),\nabla f(y)\rangle + \frac12 \sum_{i,j=1}^d s(y)\tilde{a}_{ij}(y) \partial_i\partial_j f(y),
\end{equation}
for any $f \in \mathcal{C}^{\infty}_c(\rmE)$. Hence the time changed process is a diffusion with characteristics $(s\btilde,\sqrt{s}\sigmatilde)$ by Lemma 1.9, Chapter 5 of \cite{Rogers_Williams_2000}.

\section{Proofs of \Cref{sec:convergence_MP}}
\subsection{Proof of \Cref{thm:ergodicity_markovprocess}}\label{sec:proof_ergod_MP}

    By  \Cref{lem:small_sets} we have that the set $C$ is petite for the process $X$. Since $V \in \mathcal{A}$, we have $\cL V = s \cLtilde V$, hence under \Cref{ass:generic_drift_condition_tt} on $Y_t$ we find 
	\begin{equation}\notag%\label{eq:drift_cond_timetransformed}
		\cL \rmV(x) \leq -\speed(x) \rmW(x) \rmV(x) + \gamma \overline{\speed}_{ C } \mathbbm{1}_{ C }(x),
	\end{equation}
	where $\overline{\speed}_{ C }:=\max_{x\in  C } \speed(x)$. The result follows from Theorem 5.2 of \cite{DownMeynTweedie}.

\subsection{Proof of \Cref{thm:uniform_ergodicity_pdmps}}\label{sec:proof_unif_erg_pdmp}
    Denote the states of the PDMP as $z=(x,v) \in\rmE$ and recall that the weak generator of a PDMP $(\vftilde,\lambdatilde,\Qtilde)$ is of the form \citep{Davis1993}
    \begin{equation}
        \cLtilde f(z) = \langle \vftilde(z),\nabla f(z)\rangle + \lambdatilde(z)(\Qtilde f(z)-f(z)).
    \end{equation}
	Let $\bar{\rmV}(z)=\rmV(z)/(1+\rmV(z))$, where $\rmV$ is a Lyapunov function satisfying \eqref{eq:generic_drift} for the PDMP $(\vftilde,\lambdatilde,\Qtilde)$. We note that $\bar{\rmV}$ is bounded above by $1$, since $\rmV\geq 1$. We shall show that $\bar \rmV$ also satisfies a drift condition of the form \eqref{eq:generic_drift}.
	Applying the generator of $Y$ to $\bar{\rmV}$ we find
	\begin{align}
		\cLtilde \bar{\rmV}(z) & = \frac{1}{(1+\rmV(z))^2} \langle \vftilde (z),\nabla \rmV(z)\rangle +  \lambdatilde(z) \int (\bar{\rmV}(y)-\bar{\rmV}(z)) \Qtilde (z,\dd y).
	\end{align}
	For the jump part we obtain
	\begin{align}
		\lambdatilde(z) \int (\bar{\rmV}(y)-\bar{\rmV}(z)) \Qtilde(z,\dd y)  &= \lambdatilde(z) \int \frac{\rmV(y)-\rmV(z)}{(1+\rmV(y)) (1+\rmV(z))} \Qtilde(z,\dd y) %\\
		%		& =  \lambda(z) \int_{\{y:V(y)>V(z)\}} \frac{V(y)-V(z)}{(1+V(y)) (1+V(z))} Q(z,dy)\\
		%		& \quad  +  \lambda(z) \int_{\{y:V(y)\leq V(z)\}} \frac{V(y)-V(z)}{(1+V(y)) (1+V(z))} Q(z,dy).
	\end{align}
	Considering both the case $\rmV(y)\geq \rmV(z)$ and $\rmV(z)\geq \rmV(y)$ we obtain 
	\begin{align}
		\frac{\rmV(y)-\rmV(z)}{(1+\rmV(y)) (1+\rmV(z))} \leq  \frac{\rmV(y)-\rmV(z)}{(1+\rmV(z))^2}.
	\end{align}
	Therefore applying our assumptions we find 
	%	\begin{align}
		%		\lambda(z) (\bar{\rmV}(Q(z))-\bar{\rmV}(z)) & \leq  \lambda(z)  \frac{V(Q(z)) - V(z)}{(1+V(z))^2},
		%	\end{align}
	%	implying
	\begin{align}
		\cLtilde \bar{\rmV}(z) & \leq  \frac{1}{(1 + \rmV(z))^2} \Big( \langle \vftilde(z),\nabla \rmV(z)\rangle + \lambdatilde(z) \int (\rmV(y)-\rmV(z)) \Qtilde(z,\dd y)  \Big)\\
		& = \frac{1}{(1+\rmV(z))^2}\,\, \cLtilde \rmV(z)\\
		& \leq \frac{1}{(1+\rmV(z))^2}  \Big( - \rmW(z) \rmV(z) + \gamma \1_{ C }(z) \Big)\\
		& \leq - \frac{\rmW(z) }{1+\rmV(z)} \, \bar{\rmV}(z) + \gamma \1_{ C }(z) .
		%	& = - W(z) \bar{\rmV}(z) + d\1_C(z).
	\end{align}

Since outside of $ C $ we have $s \geq \beta \rmV/ \rmW $, and $s$ is bounded above by $\bar{s}$ on $C$, the result follows applying \Cref{thm:ergodicity_markovprocess}.
%\begin{equation}
  %  \cL \bar{\rmV}(z)= s(z) \cLtilde \bar{\rmV}(z) \leq -\beta /2 \bar{\rmV}(z) + \bar{s} \gamma \1_{ C }(z).
%\end{equation}
    %By \Cref{thm:generator_timechange}, we get that $\bar{\rmV}$ is in the domain of the generator $\cL$ of the time transformed process and
    %\begin{equation}
     %  \cL\bar{\rmV}= s(z) \cLtilde \bar{\rmV}(z) \leq -\beta /2 \bar{\rmV}(z) + \bar{s} \gamma \1_{ C }(z).
    %\end{equation}
	%Uniform ergodicity of the time-transformed PDMP $(s\vftilde,s\lambdatilde,\Qtilde)$ follows by Theorem \ref{thm:ergodicity_markovprocess}. 

\section{Proof of \cref{thm:FCLT}}\label{sec:proof_FCLT}
\Cref{eq:FCLT_convergence} follows noticing that the conditions of \citet[Theorem 4.3]{glynn1996liapounov} are verified under our assumptions. By \citet[Theorem 4.3]{glynn1996liapounov} we find that $\gamma_g^2 = 2 \int_\rmE \hat g(z) \overline g(z)  \mu(\dd z),$
        where $\overline g(z) =g(z)-\mu(g)$, while $\hat g$ is the solution to the Poisson equation $\overline g(z) = - \cL \hat g(z)$. 
        
        We now prove \eqref{eq:asymptotic_variance}. Consider the function $f = \nicefrac{\overline g}{s},$ which satisfies $\mutilde(f)=0.$ 
        First, notice that we can apply \citet[Theorem 4.3]{glynn1996liapounov} to the base process $\cLtilde$ under the additional condition $\rmW(z)\rmV(z) \geq 1$ for all $z\in\rmE$. 
        Then we find that the asymptotic variance of $\cLtilde$ for the function $f$ is \[ \gammatilde^2_{f} = 2\int \hat f(z) f(z) \mutilde(\dd z), \quad \text{where } f = -\cLtilde \hat f.\]
        Since $\hat g \in \mathcal{A}$, we get $\overline g= -\cL \hat g = - s \cLtilde \hat g$, therefore $f=-\cLtilde \hat g$. From the uniqueness of the solution of the Poisson equation \citep{glynn1996liapounov} we get that $\hat g = \hat f$.  
        %Under our assumptions e have $\cL = s\cLtilde$ and hence we have $\overline g = - \cL \hat f$, that is $\hat g = \hat f$.
        We find that 
        \begin{align}
            \gamma_g^2 &= 2 \int_\rmE \hat g(z) (- s(z) \cLtilde \hat g(z))  \mu(\dd z) \\
            & = 2 \mu(s)  \int_\rmE \hat f(z) (-  \cLtilde \hat f(z))  \frac{s(z)}{\mu(s)}\mu(\dd z)\\
            & = 2 \mu(s)  \int_\rmE \hat f(z) (-  \cLtilde \hat f(z)) \mutilde(\dd z)\\
            & = \mu(s) \,\gammatilde^2_{f}.
        \end{align}
        The final step is to notice that  for any $\alpha>0$ it holds that $\gammatilde^2_{\alpha g} = \alpha^2 \gammatilde^2_g$, and therefore 
        \[ \mu(s) \, \gammatilde^2_{f} = \gammatilde^2_{f \sqrt{\mu(s)} }\,.\]

\section{Proofs for the time-changed ZZP}\label{appendix:ZZproofs}

\subsection{Eyring-Kramers formula for the time changed ZZP}\label{sec:eyring-kramers}

We now obtain an Eyring-Kramers formula for the one dimensional, time changed ZZP in the context of Section \ref{sec:bimodal_example}. This type of result was first obtained in\cite[Theorem 1.1]{Monmarche2016} for the standard ZZP.
We consider a target of the form $\pot_\varepsilon = \pot/\varepsilon$, where $\pot$ is a double well potential, and we characterise the behaviour of the process as $\varepsilon\to 0$. Let $\tau:=\inf\{t>0: X_t=x_1 \},$ i.e. the time elapsed before the process finally moves from its initial condition $(X_0,V_0)=(x_0,-1)$ to $x_1$. In the next proposition we state our result for speed functions of the form $s_\varepsilon(x) = s(x)^{1/\varepsilon}$, where $s$ is a fixed speed function, satisfying the properties discussed in \Cref{sec:bimodal_example}.
\begin{proposition}\label{thm:eyring_kramers_suzz}
	Consider a smooth one-dimensional double well potential $\pot>0$ with $\pot''(x_0)>0$ and let $s:\mathbb{R} \to [1,\infty)$ be a smooth function. Define $\pottilde(x)=\pot(x)-\ln s(x)$ and assume $\pottilde$ is a double well potential with same maxima and minima of $\pot$.
	Then the time changed ZZP with target $\mu_\varepsilon\propto \exp(-\pot/\varepsilon)$ and speed function $s_\varepsilon(x) = s(x)^{1/\varepsilon}$ satisfies
	\begin{equation}\notag
		\mathbb{E}_{(x_0,-1)}[\tau] \leq \sqrt{\frac{8\mu\varepsilon}{\pottilde''(x_0)}} \exp\left( \frac{\pottilde(x_1)-\pottilde(x_0)}{\varepsilon} \right) (1+o\left(\varepsilon\right) ),
	\end{equation}
    where $o\left(\varepsilon\right)$ indicates a function that tends to zero as $\varepsilon$ tends to zero.
\end{proposition}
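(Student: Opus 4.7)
The plan is to reduce the bound for the time-changed process to the Eyring-Kramers formula for the standard ZZP established in \cite[Theorem 1.1]{Monmarche2016}. First I would identify the base process. By \Cref{thm:lln_timechange} and the construction in \Cref{sec:intro_zigzag}, the ZZP underlying the time-changed sampler targets the density proportional to $s_\varepsilon \mu_\varepsilon$; using $s_\varepsilon(x)\exp(-\pot(x)/\varepsilon)=\exp(-\pottilde(x)/\varepsilon)$ this is a standard ZZP with target $\mutilde_\varepsilon\propto\exp(-\pottilde/\varepsilon)$. Under the hypotheses $\pot\in\mathcal{C}^2$, $s\in\mathcal{C}^\infty$, and $\pottilde$ a double-well potential with the same minima and maxima as $\pot$, the rescaled potential $\pottilde/\varepsilon$ satisfies the assumptions needed for Monmarch\'e's result at any fixed $\varepsilon>0$.

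Next I would use the inverse time-change identity \eqref{eq:time_transformed_process_inverse} to pass between hitting times. Let $(Y,W)$ denote the base ZZP started at $(x_0,-1)$ and let $\tilde\tau:=\inf\{t>0\,:\,Y_t=x_1\}$. Since $X_\tau=x_1=Y_{\tilde\tau}$ and $X_t=Y_{r(t)}$, we have $r(\tau)=\tilde\tau$ almost surely, so
\[
\tau \,=\, r^{-1}(\tilde\tau)\,=\, \int_0^{\tilde\tau}\frac{1}{s_\varepsilon(Y_u)}\,\dd u.
\]
Because $s\geq 1$, the speed satisfies $s_\varepsilon=s^{1/\varepsilon}\geq 1$ pointwise, hence $\tau\leq\tilde\tau$ almost surely and
\[
\mathbb{E}_{(x_0,-1)}[\tau]\,\leq\,\mathbb{E}_{(x_0,-1)}[\tilde\tau].
\]

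Finally, I would apply the Eyring-Kramers formula of \cite{Monmarche2016} to the base ZZP with potential $\pottilde/\varepsilon$. Because $(\pottilde/\varepsilon)''(x_0)=\pottilde''(x_0)/\varepsilon$, the standard prefactor $\sqrt{8\pi/(\pottilde/\varepsilon)''(x_0)}$ contributes the factor $\sqrt{8\pi\varepsilon/\pottilde''(x_0)}$, and the barrier height in the exponential becomes $(\pottilde(x_1)-\pottilde(x_0))/\varepsilon$, yielding
\[
\mathbb{E}_{(x_0,-1)}[\tilde\tau]\,\leq\, \sqrt{\frac{8\pi\varepsilon}{\pottilde''(x_0)}}\,\exp\!\left(\frac{\pottilde(x_1)-\pottilde(x_0)}{\varepsilon}\right)\bigl(1+o(\varepsilon)\bigr).
\]
Combining with the pathwise comparison of the previous step gives the claim.

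The main obstacle is a careful verification that the hypotheses of \cite[Theorem 1.1]{Monmarche2016} transfer to the rescaled potential $\pottilde/\varepsilon$ uniformly enough to retain the asymptotic form, in particular that $\pottilde''(x_0)>0$ (which follows from the assumption that $\pottilde$ is a double-well potential with a minimum at $x_0$, provided $(\ln s)''(x_0)<\pot''(x_0)$) and that the Monmarch\'e bound is indeed an upper bound with the claimed prefactor after the $\varepsilon$-rescaling; beyond this, the proof is essentially a one-line application of the time-change identity combined with the trivial pointwise bound $s_\varepsilon\geq 1$.
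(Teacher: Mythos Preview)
Your proposal is correct and follows essentially the same route as the paper: identify the base process as a standard ZZP with potential $\pottilde/\varepsilon$, use $s_\varepsilon\ge 1$ to get the pathwise bound $\tau\le\tilde\tau$, and then invoke \cite[Theorem~1.1]{Monmarche2016}. If anything, your argument is more explicit than the paper's, which simply asserts $\tilde\tau=r(\tau)$ and $\mathbb{E}[\tau]\le\mathbb{E}[\tilde\tau]$ without writing out the integral representation or the prefactor computation.
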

\begin{proof}[Proof of Theorem \ref{thm:eyring_kramers_suzz}]
	Let $\tilde\tau = r(\tau)$ be the first time at which the ZZP with stationary distribution $\propto s_{\varepsilon}\mu_\varepsilon=\exp\left\{ -\left( U_{\varepsilon} -\ln s_{\varepsilon} \right) \right\}$ reaches $x_1$. Because $s\geq 1$, we find $\mathbb{E}_{(x_0,-1)}[\tau] \leq  \mathbb{E}_{(x_0,-1)}[\tilde\tau].$
	Under our assumptions $\pot-\ln s$ is itself a double well potential and hence we can apply \cite[Theorem 1.1]{Monmarche2016} to obtain the result.
\end{proof}
\begin{remark}
	Let us compare the result of \Cref{thm:eyring_kramers_suzz} between the case of the standard ZZP (i.e. choosing $s(x) = 1$)
    %. On the other hand, a choice of speed function that is naturally of the form $s_\varepsilon(x) = s(x)^{1/\varepsilon}$ is 
    and when $s(x)=\exp(a\pot(x))$ for $a\in(0,1)$. In the former case we obtain from \cite[Theorem 1.1]{Monmarche2016} 
    that
    \begin{equation}\notag
\mathbb{E}_{(x_0,-1)}[\tau] \leq \sqrt{\frac{8\mu\varepsilon}{\pot''(x_0)}} \exp\left( \frac{\pot(x_1)-\pot(x_0)}{\varepsilon} \right) (1+o\left(\varepsilon\right) )  ,
    \end{equation} 
    while in the latter case we obtain
	\begin{equation}\notag
		\mathbb{E}[\tau] \leq  \sqrt{\frac{8\mu\varepsilon}{(1-a)\pot''(x_0)}} \exp\left( (1-a)\frac{\pot(x_1)-\pot(x_0)}{\varepsilon} \right) (1+o\left(\varepsilon\right) ).
	\end{equation}
	As $\varepsilon$ tends to $0$, we see that the upper bound on the hitting time of $x_1$ of SUZZ increases with a lower rate as that of ZZS, indicating an improved performance of the time-transformed process when targeting a multi-modal distribution with very high density at the modes.
\end{remark}

\subsection{Proof of \Cref{thm:ergodicity_ZZP}}\label{sec:proof_ergo_ZZP}
We first prove the statement on geometric ergodicity. This can be obtained applying \Cref{thm:ergodicity_markovprocess} and hence we verify the required assumptions. We have assumed that \Cref{ass:s.integrability} holds in the statement of the Proposition.
Irreducibility of the ZZP was shown under our conditions by \cite{Bierkensergodicity}. Aperiodicity of the time-changed ZZP (i.e. \Cref{ass:X_is_aperiodic}) was shown in Proposition F.2 of \cite{Vasdekis_speedup}. \Cref{ass:lln_base} is proven to hold under our assumptions in \cite{Bierkensergodicity}.
We now prove \Cref{ass:generic_drift_condition_tt}(1). 
Lemma 4.1 of \cite{bertazzi2020adaptive} establishes that, when $\gamma_i(x) $ are lower and upper bounded, any set of the form $ C = C _x  \times  C _v$ where $ C _x$ is a compact set and $ C _v \subset \{\pm 1\}^d$ is $(t,b(t),\nu)$-small for the ZZP for any $t\geq t_0$, where $t_0$ is a large enough time and $b(t)$ is continuous in $t$. Consider a bounded set $  C  \supset C$. Inspecting the proof of \cite[Lemma 4.1]{bertazzi2020adaptive} one can see that \Cref{ass:generic_drift_condition_tt}(1) holds as long as $\gamma_i$ is bounded on the set $  C $, since this gives sufficient control over the switching rates. %This is enough to obtain that \Cref{ass:generic_drift_condition_tt}(1) holds for any set $ C = C _x  \times  C _v$ where $ C _x$ is compact.
Next, we focus on \Cref{ass:generic_drift_condition_tt}(2). Lemma 11 in \cite{Bierkensergodicity} gives that the following function satisfies the drift condition \eqref{eq:generic_drift} for $\rmW(z)=\eta$ for some $\eta>0$:
\begin{equation}\label{Lyapunov.for.ZZ}
	\rmV(x,v) = \exp\left( \alpha \pottilde(x) + \sum_{i=1}^d \phi(v_i (1-\beta)\partial_i\pottilde(x)) \right),
\end{equation}
where $\delta,\alpha >0 $ are such that $0 < \delta \overline{\gamma} < \alpha < 1$ and where $\overline{\gamma}:=\max_{i,x} \gamma_i(x) / s(x)$, and finally $\phi(u) = \tfrac{1}{2} \textnormal{sign}(u) \ln{(1+\delta |u|)}$. Therefore, \Cref{thm:ergodicity_markovprocess} guarantees geometric ergodicity since $s$ is lower bounded.

Let us now prove the statement on uniform ergodicity. Our goal is to apply \Cref{thm:uniform_ergodicity_pdmps}. 
For $s(x) = \exp(\beta\pot(x))$ for $\beta\in(0,1)$ we have that the base process targets  the density $\mutilde(\dd x , \dd v) \propto \exp\left\{ (1-\beta) \pot(x) \right\}$. Under \Cref{ass:ZZ_growth_proposal}, by Lemma 11 of \cite{Bierkensergodicity} we have that the standard ZZP with target $\mutilde$ and excess switching rates $\gammatilde_i$ satisfies the drift condition \eqref{eq:generic_drift} with $\rmW(z)= \eta$ for some $\eta>0$, and Lyapunov function 
\begin{equation}\notag
	\rmV_\beta(x,v) = \exp\left( \alpha(1-\beta)\pot(x) + \sum_{i=1}^d \phi(v_i (1-\beta)\partial_i\pot(x)) \right).
\end{equation}
In particular, $\rmV$ is a Lyapunov function for arbitrarily small values of the constant $\alpha$. Therefore, it is now sufficient to choose $\alpha\in(0,1)$ such that outside of $ C $
\begin{equation}\notag
	s(x)= \exp(\beta\pot(x)) \geq b \, \rmV_\beta(x,v),
\end{equation}
for some $b>0$.
This is possible since, by \Cref{ass:ZZ_growth_proposal}, $\pot$ is the leading term in the exponent of $\rmV$ and the set $ C $ can accordingly be chosen large enough.
Therefore, the time-changed ZZP is uniformly ergodic for our choice of speed function.

\section{Proof of \Cref{prop:ergodicity_jumpproc}}\label{sec:proof_ergo_jumpproc}
We apply \Cref{thm:ergodicity_markovprocess} and hence we need to verify the corresponding conditions.
    
First of all, \Cref{ass:s.integrability} is assumed to hold and \Cref{ass:lln_base} holds due to \Cref{ass:geoerg_discretetime}(1). This also implies that $Y$ is $\psi$-irreducible and by \Cref{prop:irreducibility}, the same holds for $X$. %First of all, we show irreducibility of $X.$ We shall rely on  and thus it is sufficient to show that the base process, $Y$, is irreducible.
  %  Recall that $Y_t=\Ybar_{n}$ for $\Ttilde_n\leq t< \Ttilde_{n+1},$ where $(\Ttilde_n)$ is the sequence of jump times of $Y$.
    %Then, for the same measure $\psi$ as in \Cref{ass:geoerg_discretetime}(1) we find for any $x\in\rmE$ and $A$ such that $\phi(A)>0$ 
    %\begin{align}
        %\PE_x\left[\int_0^\infty \1_{Y_t\in A}\dd t \right] &= \PE_x\left[\sum_{n=0}^\infty \1_{\Ybar_n \in A} (\Ttilde_{n+1}-\Ttilde_n) \right] \\
      %  & = \sum_{n=0}^\infty \Qtilde^n(x,A)>0 ,
   % \end{align}
   % where we used that $\Ttilde_{n+1}-\Ttilde_n \sim \Exp(1)$ and the last inequality is due to \Cref{ass:geoerg_discretetime}(1). Hence we obtain irreducibility of $Y$ and therefore $X$.

    Now we prove that the set $ C $ in \Cref{ass:geoerg_discretetime}(2) is petite for $X.$ By \Cref{lem:small_sets}, it suffices to show that \Cref{ass:generic_drift_condition_tt}(2) holds for the process $Y$.
    For any measurable set $A$ and $y \in C$, we write
	\begin{align}
		\mathbb{P}_y ( Y_{t} \in A, Y_u\in   D  \textnormal{ for all } u\in [0,t]  ) \geq \mathbb{P}_y ( \Ybar_{n_*} \in A, \Ybar_n \in   D  \textnormal{ for all } n=0,1,\dots,n_*,\, t\in[\Ttilde_{n_*},\Ttilde_{n_*+1}]  ).
	\end{align}
    Notice the event $t\in[\Ttilde_{n_*},\Ttilde_{n_*+1}]$ is independent of $\Ybar_n$ for all $n$ as it only depends on i.i.d. drawn exponential random variables with parameter $1$.
	Therefore, by \Cref{ass:geoerg_discretetime}(2), for any $t_0,\varepsilon>0$ we find
	\begin{align}
		\int_{t_0}^{t_0+\varepsilon} \mathbb{P}_y ( Y_{t} \in A, Y_u\in D \textnormal{ for all } u\in [0,t]  )\dd t\geq c\nu(A) \int_{t_0}^{t_0+\varepsilon}  \mathbb{P}(t\in[\Ttilde_{n_*},\Ttilde_{n_*+1}]) \dd t = c'(t_0,\varepsilon) \,\nu(A).
	\end{align}
    Hence \Cref{ass:generic_drift_condition_tt}(2) is satisfied and $ C $ is petite for $X.$
    
    Aperiodicity of $X$ (i.e. \Cref{ass:X_is_aperiodic}) follows observing that for any $x\in C $ it holds that
    \begin{align}
        \PP_x(X_t\in  C ) &\geq \PP_x(X_t\in  C , \,t<T_1)\\
        & = \PP_x(t<T_1)\\
        & \geq \exp\left( -t \sup_{z\in C } s(z) \right) 
    \end{align}
    Since under our assumptions $\sup_{z\in C } s(z) \leq \bar s$ we conclude that $\PP_x(X_t\in  C ) >0$ for all $x\in C $ and all $t\geq 0.$
    
    Finally, we show the drift condition for $Y$ required by \Cref{thm:ergodicity_markovprocess} holds.
    \Cref{ass:geoerg_discretetime}(3) gives that the process $Y$ satisfies
    \begin{align}
        \cLtilde \rmV(z) &= \Qtilde \rmV(z) - \rmV(z)\\
        & \leq - (1 - \rmW(z)) \rmV(z) + \eta\1_{ C }, %\\
        %& =  -\rho V(z) + \eta\1_{ C },
    \end{align}
    that is a drift condition of the form \eqref{eq:generic_drift}.
    
    Therefore, we can apply \Cref{thm:ergodicity_markovprocess} to obtain that $X$ is geometrically ergodic if there exists $\beta >0$ such that $s(x)\geq \nicefrac{\beta}{1-\rmW(x)}$ for all $x\notin C$.
	Moreover, since $Y$ and $X$ are PDMPs, Theorem \ref{thm:uniform_ergodicity_pdmps} gives uniform ergodicity of $X$ when there exists $\beta > 0$ such that $s(x) \geq \nicefrac{\beta\rmV(x)}{1-\rmW(x)}$ for all $x\notin C .$
	
	\Cref{ass:geoerg_discretetime}(1) guarantees that the only stationary distribution is $\mu$, as shown in \Cref{thm:lln_timechange}.

\section{Additional numerical simulations}
\Cref{fig:gaussianmixture_ctstime} shows further numerical simulations for the mixture of normal distribution experiment of \Cref{ex:gaussianmixture}. In particular, the standard ZZP with tempered target can be simulated by Poisson thinning (see \cite{Sutton_continuously_tempered}). A realisation is shown in \Cref{fig:zzp_mixture}. As mentioned in \Cref{sec:estimate_expectations}, we can then use a discretisation of the obtained path to construct a jump process with rate $s$. The result of this procedure is shown in \Cref{fig:tc-zzp_mixture}.

\begin{figure}
    \begin{subfigure}[t]{0.4\textwidth}
		%\centering
		\includegraphics[width=\textwidth]{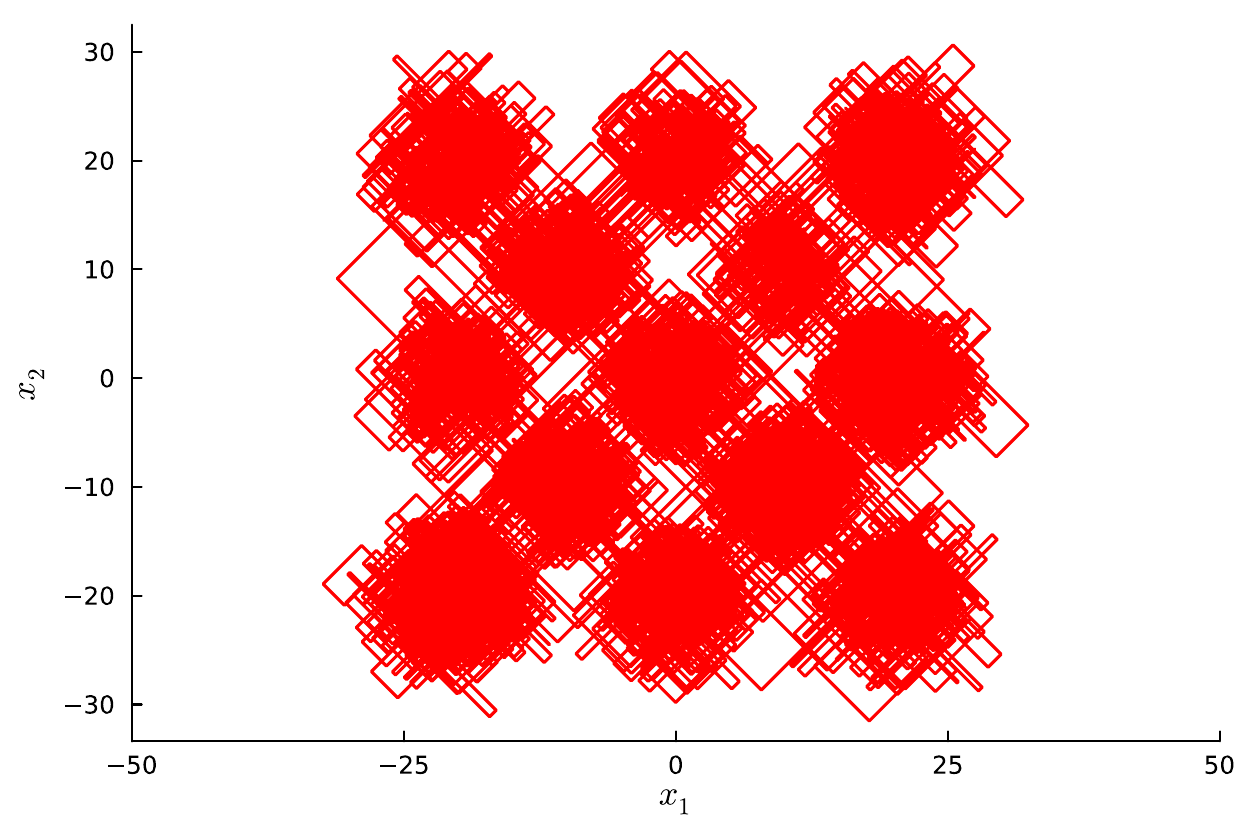}
		\caption{Trace plot of a realisation of a ZZP with target $\mutilde(x)=\mu(x)^{0.1}$.}
		\label{fig:zzp_mixture}
	\end{subfigure}
	\hspace{10pt}
	\begin{subfigure}[t]{0.4\textwidth}
		%\centering
		\includegraphics[width=\textwidth]{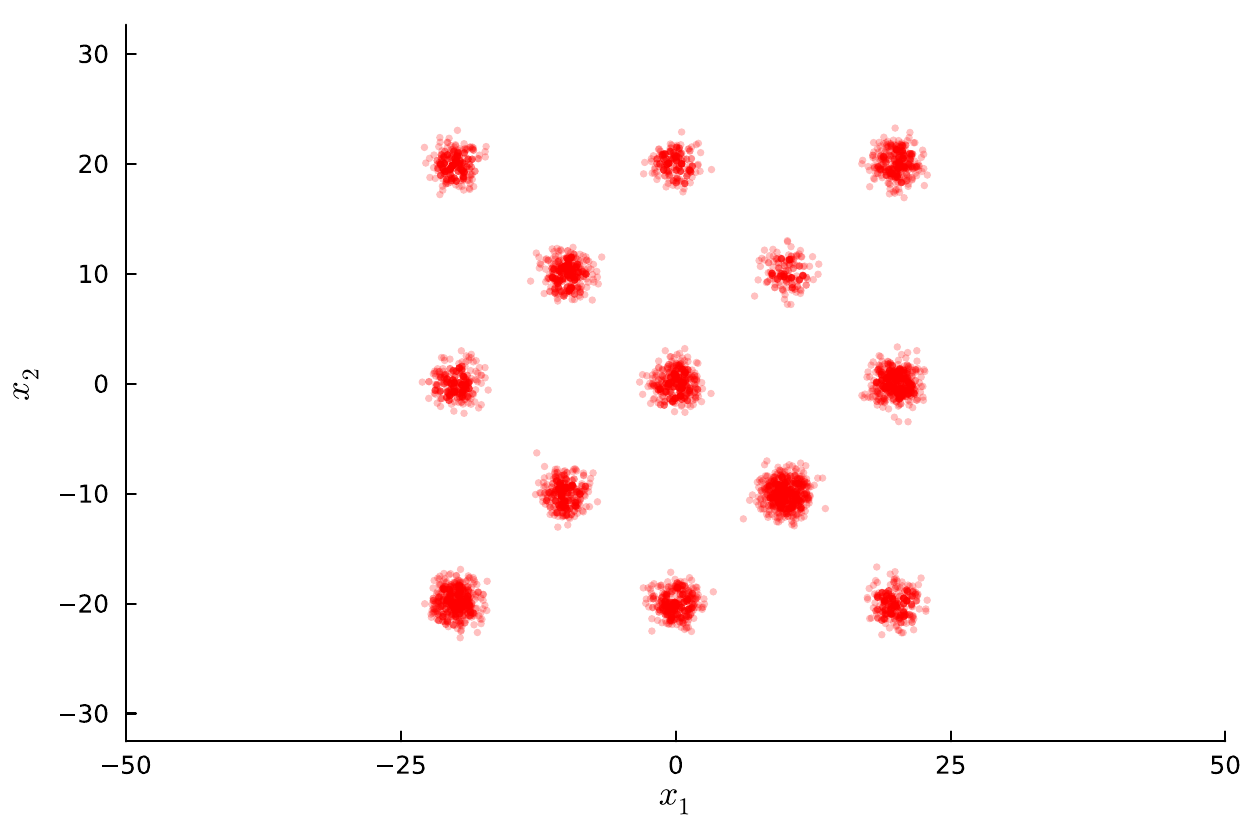}
		\caption{Discretisation of the time-changed, jump process.}
		\label{fig:tc-zzp_mixture}
	\end{subfigure}
	\caption{Further numerical simulations for a mixture of normal distributions as in \Cref{ex:gaussianmixture}. The process in the left plot has time horizon $3\times 10^4$ and initial condition at the origin. The jump process on the right plot is obtained first discretising the continuous path of the ZZP with step size $2,$ then discretising the path of the obtained jump process with step size $5e-2.$}
	\label{fig:gaussianmixture_ctstime}
\end{figure}

\end{document}